\newif\ifdraft
\theoremstyle{plain}
\newtheorem{theorem}{Theorem}[section]
\newtheorem{lemma}[theorem]{Lemma}
\newtheorem{corollary}[theorem]{Corollary}
\newtheorem{claim}[theorem]{Claim}
\newtheorem*{conjecture*}{Conjecture}
\newtheorem{observation}[theorem]{Observation}
\theoremstyle{definition}
\newtheorem{definition}[theorem]{Definition}
\newtheorem{example}[theorem]{Example}
\theoremstyle{remark}
\newtheorem{remark}[theorem]{Remark}
\newtheorem*{remark*}{Remark}
\newcommand{\namedref}[2]{\hyperref[#2]{#1~\ref*{#2}}}
\newcommand{\lparent}{\mathsf {P}}
\newcommand{\lleft}{\mathsf {L}}
\newcommand{\lright}{\mathsf {R}}
\newcommand{\lup}{\mathsf {U}}
\newcommand{\ldown}{\mathsf {D}}
\newcommand{\llch}{\ensuremath{\mathsf {Ch_L}}}
\newcommand{\lrch}{\ensuremath{\mathsf {Ch_R}}}
\newcommand{\lerror}{\mathsf {Error}}
\newcommand{\lin}{\mathsf {in}}
\newcommand{\lout}{\mathsf {out}}
\newcommand{\lgrid}{\mathsf {grid}}
\newcommand{\lgridbot}{\mathsf {bottomGrid}}
\newcommand{\lgridside}{\mathsf {sideGrid}}
\newcommand{\lcolumntrees}{\mathsf {colTree}}
\newcommand{\ltoptree}{\mathsf {topTree}}
\newcommand{\ltreelike}{\mathsf {tree}}
\newcommand{\lstruct}{\mathsf {struct}}
\newcommand{\lproof}{\mathsf {proof}}
\newcommand{\lpointer}{\mathsf {pointer}}
\newcommand{\wild}{\ensuremath{\star}\xspace}
\newcommand{\lerr}{\mathsf {bad}}
\newcommand{\logstar}{\log^{*}}
\newcommand{\eps}{\varepsilon}
\newcommand{\lovasz}{Lov\'{a}sz\xspace}
\newcommand{\LCL}{{\upshape\sffamily LCL}\xspace}
\newcommand{\LOCAL}{{\upshape\sffamily LOCAL}\xspace}
\newcommand{\CONGEST}{{\upshape\sffamily CONGEST}\xspace}
\newcommand{\rake}{\ensuremath{\mathrm{Rake}}}
\newcommand{\compress}{\ensuremath{\mathrm{Compress}}}
\newcommand{\NN}{\ensuremath{\mathbb{N}}}
\newcommand{\uin}{U^{\mathrm{in}}}
\newcommand{\uout}{U^{\mathrm{out}}}
\newcommand{\din}{D^{\mathrm{in}}}
\newcommand{\dout}{D^{\mathrm{out}}}
\newcommand{\etype}{label-set}
\newcommand{\kc}[1]{\ifdraft \textcolor[rgb]{0.9,0.0,0.0}{K: #1}\fi}
\DeclareMathOperator{\poly}{poly}
\DeclareMathOperator{\polylog}{polylog}
\newenvironment{myabstract}
{\list{}{\listparindent 1.5em%
        \itemindent    \listparindent
        \leftmargin    1cm
        \rightmargin   1cm
        \parsep        0pt}%
    \item\relax}
{\endlist}
\newenvironment{mycover}
{\list{}{\listparindent 0pt
        \itemindent    \listparindent
        \leftmargin    1cm
        \rightmargin   1cm
        \parsep        0pt}%
    \raggedright
    \item\relax}
{\endlist}
\newcommand{\myaff}[1]{\,$\cdot$\, {\small #1}\par\smallskip}
\begin{document}

\begin{mycover}
    {\huge\bfseries Locally Checkable Labelings\\
    with Small Messages \par}
    \bigskip
    \bigskip

    \textbf{Alkida Balliu}
    \myaff{University of Freiburg, Germany}
	\textbf{Keren Censor-Hillel}
	\myaff{Technion, Israel}
	\textbf{Yannic Maus}
	\myaff{Technion, Israel}
	\textbf{Dennis Olivetti}
	\myaff{University of Freiburg, Germany}
	\textbf{Jukka Suomela}
	\myaff{Aalto University, Finland}

\end{mycover}
\bigskip

\begin{myabstract}
\noindent\textbf{Abstract.}
A rich line of work has been addressing the computational complexity of locally checkable labelings (\LCL{}s), illustrating the landscape of possible complexities. In this paper, we study the landscape of \LCL complexities under bandwidth restrictions. Our main results are twofold. First, we show that on trees, the \CONGEST complexity of an \LCL problem is asymptotically equal to its complexity in the \LOCAL model. An analog statement for general (non-\LCL) problems is known to be false. Second, we show that for general graphs this equivalence does not hold, by providing an \LCL problem for which we show that it can be solved in $O(\log n)$ rounds in the \LOCAL model, but requires $\tilde{\Omega}(n^{1/2})$ rounds in the \CONGEST model. 
\end{myabstract}



\section{Introduction}\label{sec:introduction}

Two standard models of computing that have been already used for decades to study distributed graph algorithms are the \LOCAL model and the \CONGEST model \cite{peleg00}. In the \LOCAL model, each node in the network can send \emph{arbitrarily large messages} to each neighbor in each round, while in the \CONGEST model the nodes can only send \emph{small messages} (we will define the models in Section~\ref{ssec:models}). In general, being able to send arbitrarily large messages can help a lot: there are graph problems that are trivial to solve in the \LOCAL model and very challenging in the \CONGEST model, and this also holds in trees.

Nevertheless, we show that there is a broad family of graph problems---\emph{locally checkable labelings} or \LCL{}s in short---in which the two models of computing have exactly the same expressive power \emph{in trees} (up to constant factors): if a locally checkable labeling problem $\Pi$ can be solved in trees in $T(n)$ communication rounds in the \LOCAL model, it can be solved in $O(T(n))$ rounds also in the \CONGEST model. We also show that this is no longer the case if we switch from trees to general graphs:

\begin{center}
\begin{tabular}{@{}lcc@{}}
& \emph{\LCL{} problems} & \emph{General problems} \\
& \emph{(our work)} & \emph{(prior work)} \\
\midrule[\heavyrulewidth]
\emph{Trees:}          & \CONGEST $=$ \LOCAL   & \CONGEST $\ne$ \LOCAL \\[2pt]
\emph{General graphs:} & \CONGEST $\ne$ \LOCAL & \CONGEST $\ne$ \LOCAL \\
\bottomrule[\heavyrulewidth]
\end{tabular}
\end{center}

\paragraph{Locally Checkable Labelings.}

The study of the distributed computational complexity of locally checkable labelings (\LCL{}s) in the \LOCAL model was initiated by Naor and Stockmeyer \cite{naor95} in the 1990s, but this line of research really took off only in the recent years \cite{balliu19lcl-decidability,BBHORS19MMlowerBound,BBOS18almostGlobal,BBOS20paddedLCL,BFHKLRSU16,BHKLOS18lclComplexity,BHOS19,binary_lcls,Brandt19RE,Brandt2017,Chang20,ChangKP19,CP19,Olivetti2019REtor,balliu21rooted-trees,BBKOmis,BBO20rs}.

\LCL{}s are a family of graph problems: an \LCL problem $\Pi$ is defined by listing a \emph{finite set of valid labeled local neighborhoods}. This means that $\Pi$ is defined on graphs of some finite maximum degree $\Delta$, and the task is to label the vertices and/or edges with labels from some finite set so that the labeling satisfies some set of local constraints (see Section~\ref{ssec:lcl-definition} for the precise definition).

A simple example of an \LCL problem is the task of coloring a graph of maximum degree $\Delta$ with $\Delta+1$ colors (here valid local neighborhoods are all properly colored local neighborhoods). \LCL{}s are a broad family of problems, and they contain many key problems studied in the field of distributed graph algorithms, including graph coloring, maximal independent set and maximal matching.

\paragraph{Classification of \LCL problems.}

One of the key questions related to the \LOCAL model has been this: given an arbitrary \LCL problem $\Pi$, what can we say about its computational complexity in the \LOCAL model (i.e., how many rounds are needed to solve $\Pi$)? It turns out that we can say quite a lot. There are infinitely many distinct complexity classes, but there are also some wide \emph{gaps} between the classes---for example, if $\Pi$ can be solved with a deterministic algorithm in $o(\log n)$ rounds, it can also be solved in $O(\log^* n)$ rounds \cite{ChangKP19}.

\begin{figure}
    \centering
    \includegraphics[page=1]{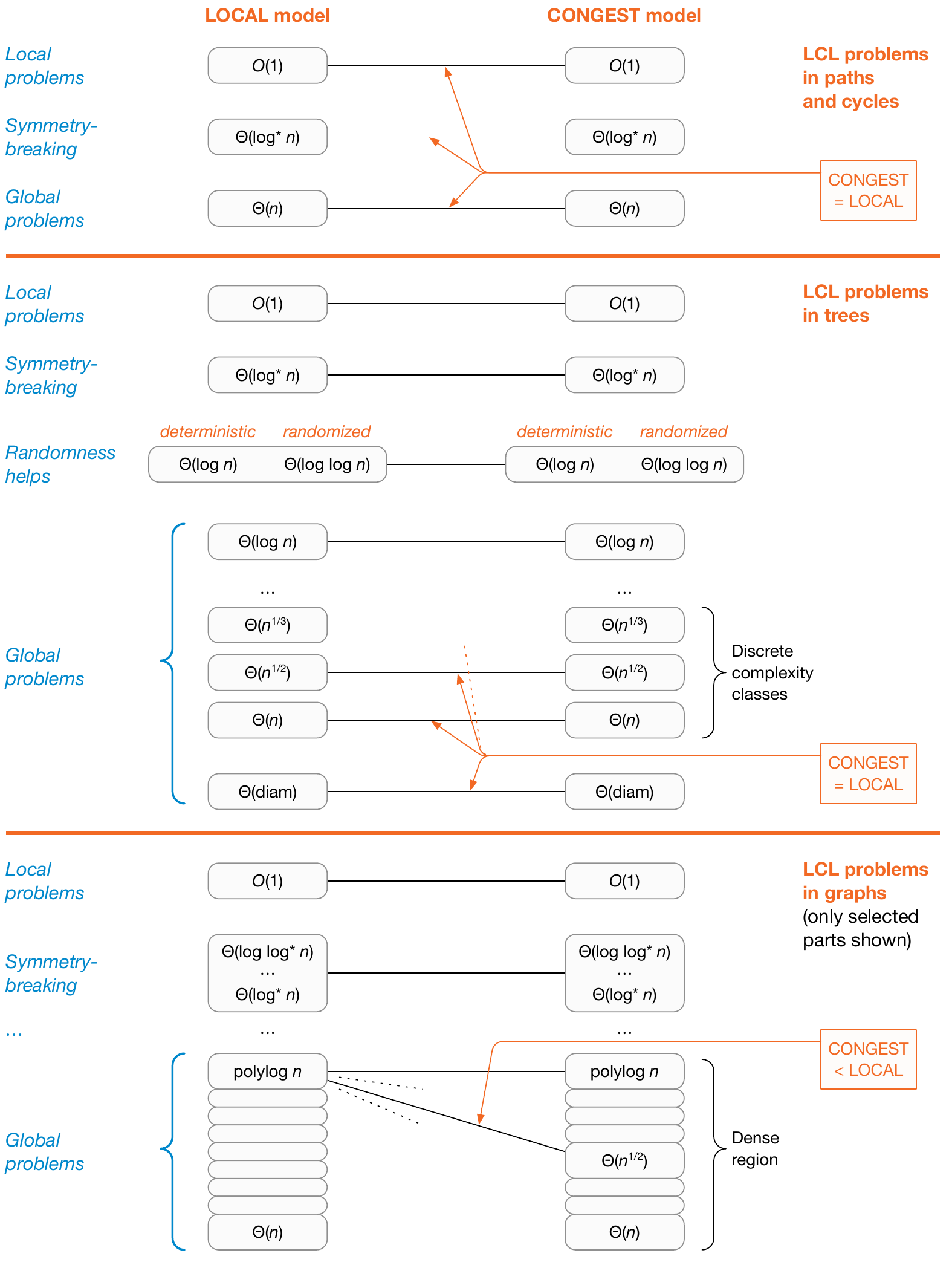}
    \caption{The landscape of \LCL problems in the \LOCAL and \CONGEST models. \kc{I still feel we will be short on space, and could have a smaller figure if we flip it horizontally.}}\label{fig:overview}
\end{figure}

Furthermore, some parts of the classification are \emph{decidable}: for example, in the case of rooted trees, we can feed the description of $\Pi$ to a computer program that can determine the complexity of $\Pi$ in the \LOCAL model \cite{balliu21rooted-trees}. The left part of Figure~\ref{fig:overview} gives a glimpse of what is known about the landscape of possible complexities of \LCL problems in the \LOCAL model.

However, this entire line of research has been largely confined to the \LOCAL model. Little is known about the general structure of the landscape of \LCL problems in the \CONGEST model. In some simple settings (in particular, paths, cycles, and rooted trees) it is known that the complexity classes are the same between the two models \cite{balliu21rooted-trees,lcls_on_paths_and_cycles}, but this has been a straightforward byproduct of work that has aimed at classifying the problems in the \LOCAL model. What happens in the more general case has been wide open---the most interesting case for us is \LCL problems in (unrooted) trees.

\paragraph{Prior work on \LCL{}s in trees.}

In the case of trees, \LCL problems are known to exhibit a broad variety of different complexities in the \LOCAL model. For example, for every $k = 1, 2, 3, \dotsc$ there exists an \LCL problem whose complexity in the \LOCAL model is exactly $\Theta(n^{1/k})$ \cite{CP19}. There are also problems in which randomness helps exponentially: for example, the sinkless orientation problem belongs to the class of problems that requires $\Theta(\log n)$ rounds for deterministic algorithms and only $\Theta(\log \log n)$ rounds for randomized algorithms in the \LOCAL model \cite{Brandt2017,ChangKP19}. Until very recently, one open question related to \LCL{}s in trees remained: whether there are any problems in the region between $\omega(1)$ and $o(\log^* n)$; there is now a (currently unpublished) result \cite{brandt21trees} that shows that no such problems exist, and this completed the classification of \LCL{}s in trees in the \LOCAL model.

\paragraph{Prior work separating \CONGEST and \LOCAL.}

In the \LOCAL model, all natural graph problems can be trivially solved in $O(n)$ rounds and also in $O(D)$ rounds, where $D$ is the diameter of the input graph: in $O(D)$ rounds all nodes can gather the full information on the entire input graph.

However, there are many natural problems that do not admit $O(D)$-round algorithms in the \CONGEST model. Some of the best-known examples include the task of finding an (approximate) minimum spanning tree, which requires $\tilde\Omega(\sqrt{n} + D)$ rounds \cite{Sarma2012,Peleg2000,Elkin2006}, and the task of computing the diameter, which requires $\tilde\Omega(n)$ rounds \cite{FrischknechtHW12,AbboudCK16}. There are also natural problems that do not even admit $O(n)$-round algorithms in the \CONGEST model. For example, finding an exact minimum vertex cover or dominating set requires $\tilde\Omega(n^2)$ rounds \cite{Censor-HillelKP17,BachrachCDELP19}. 

Moreover, separations also hold in some cases where the \LOCAL complexity is constant: One family of such problems is that of detecting subgraphs, for which an extreme example is that for any $k$ there exists a subgraph of diameter 3 and size $O(k)$, which requires $\Omega(n^{2-1/k})$ rounds to detect in \CONGEST, even when the network diameter is also 3 \cite{FischerGKO18}. Another example is spanner approximations, for which there is a constant-round $O(n^{\epsilon})$-approximation algorithm in the \LOCAL model \cite{BarenboimEG18}, but $\tilde{\Omega}(n^{1/2-\epsilon/2})$ rounds are needed in the \CONGEST model (and even $\tilde{\Omega}(n^{1-\epsilon/2})$ rounds for deterministic algorithms) \cite{Censor-HillelD18}. Separations hold also in trees: all-pairs shortest-paths on a star can be solved in 2  \LOCAL rounds, but requires $\tilde{\Omega}(n)$ \CONGEST rounds \cite{Nanongkai14,Censor-HillelKP17}.

While these lower bound results do not have direct implications in the context of \LCL problems, they show that there are many natural settings in which the \LOCAL model is much stronger than the \CONGEST model.

\subsection{Our Contributions}

We show that
\begin{framed}
    \centering
    \LOCAL = \CONGEST for \LCL problems in trees.
\end{framed}
\noindent Not only do we have the same round complexity classes, but every \LCL problem has the same asymptotic complexity in the two models.
In particular, our result implies that \emph{all} prior results related to \LCL{}s in trees hold also in the \CONGEST model. For example all decidability results for \LCL{}s in trees in the \LOCAL model hold also in the \CONGEST model; this includes the decidable gaps in \cite{CP19} and \cite{Chang20}. We also show that the equivalence holds not only if we study the complexity as a function of $n$, but also for problems with complexity~$\Theta(D)$.

Given the above equality, one could conjecture that the \LOCAL model does not have any advantage over the \CONGEST model for any \LCL problem. We show that this is not the case: as soon as we step outside the family of trees, we can construct an example of an \LCL problem that is solvable in $\polylog n$ rounds in the \LOCAL model but requires $\Omega(\sqrt{n})$ rounds in the \CONGEST model. In summary, we show that
\begin{framed}
    \centering
    \LOCAL $\ne$ \CONGEST for \LCL problems in general graphs.
\end{framed}
\noindent Here ``general graphs'' refers to general \emph{bounded-degree} graphs, as each \LCL problem comes with some finite maximum degree $\Delta$. We summarize our main results in Figure~\ref{fig:overview}.

\paragraph{Open questions.}

The main open question after the present work is how wide the gap between \CONGEST and \LOCAL can be made in general graphs. More concretely, it is an open question whether there exists an \LCL problem that is solvable in $O(\log n)$ rounds in the \LOCAL model but requires $\Omega(n)$ rounds in the \CONGEST model.

\subsection{Road Map, Key Techniques, and New Ideas}

We prove the equivalence of \LOCAL and \CONGEST in trees in Sections \ref{sec:tinyRuntimes}--\ref{sec:treeSublog}, and we show the separation between \LOCAL and \CONGEST in general graphs in Section~\ref{sec:separation}. We start in Section~\ref{sec:tinyRuntimes} with some basic facts in the $O(\log n)$ regime that directly follow from prior work. The key new ideas are in Sections \ref{sec:treeSuperlog}--\ref{sec:separation}.

\paragraph{Equivalence in trees: superlogarithmic region (Section~\ref{sec:treeSuperlog}).}

The first major challenge is to prove that any \LCL with some sufficiently high complexity $T(n)$ in the \LOCAL model in trees has exactly the same asymptotic complexity $\Theta(T(n))$ also in the \CONGEST model. A natural idea would be to show that a given \LOCAL model algorithm $A$ can be simulated efficiently in the \CONGEST model. However, this is \emph{not possible} in general---the proof has to rely somehow on the property that $A$ solves an \LCL problem.

Instead of a direct simulation approach, we use as a starting point prior \emph{gap results} related to \LCL{}s in the \LOCAL model. A typical gap result in trees can be phrased as follows, for some $T_2 \ll T_1$:
\begin{itemize}[noitemsep]
    \item Given: a $T_1(n)$-round algorithm $A_1$ that solves some \LCL $\Pi$ in trees in the \LOCAL model.
    \item We can construct: a $T_2(n)$-round algorithm $A_2$ that solves it in the \LOCAL model.
\end{itemize}
We \emph{amplify} the gap results so that we arrive at theorems of the following form---note that we not only speed up algorithms but also switch to a weaker model:
\begin{itemize}[noitemsep]
    \item Given: a $T_1(n)$-round algorithm $A_1$ that solves some \LCL $\Pi$ in trees in the \textbf{\LOCAL} model.
    \item We can construct: a $T_2(n)$-round algorithm $A'_2$ that solves it in the \textbf{\CONGEST} model.
\end{itemize}
At first, the entire approach may seem hopeless: clearly $A'_2$ has to somehow depend on $A_1$, but how could a fast \CONGEST-model algorithm $A'_2$ possibly make any use of a slow \LOCAL-model algorithm $A_1$ as a black box? Any attempts of simulating (even a small number of rounds) of $A_1$ seem to be doomed, as $A_1$ might use large messages.

We build on the strategy that Chang and Pettie \cite{CP19} and Chang \cite{Chang20} developed in the context of the \LOCAL model. In their proofs, $A_2$ does not make direct use of $A_1$ as a black box, but mere \emph{existence} of $A_1$ guarantees that $\Pi$ is sufficiently well-behaved in the sense that long path-like structures can be labeled without long-distance coordination.

This observation served as a starting point in \cite{CP19,Chang20} for the development of an efficient \LOCAL model algorithm $A_2$ that finds a solution that is possibly different from the solution produced by $A_1$, but it nevertheless satisfies the constraints of problem $\Pi$. However, $A_2$ obtained with this strategy abuses the full power of large messages in the \LOCAL model.

Hence while our aim is at understanding the landscape of computational complexity, we arrive at a \emph{concrete algorithm design challenge}: we need to design a \CONGEST-model algorithm $A'_2$ that solves essentially the same task as the \LOCAL-model algorithm $A_2$ from prior work, with the same asymptotic round complexity. We present the new \CONGEST-model algorithm $A'_2$ in Section~\ref{sec:treeSuperlog} (more precisely, we develop a family of such algorithms, one for each gap in the complexity landscape).

\paragraph{Equivalence in trees: sublogarithmic region (Section~\ref{sec:treeSublog}).}

The preliminary observations in Section~\ref{sec:tinyRuntimes} cover the lowest parts of the complexity spectrum, and Section~\ref{sec:treeSuperlog} covers the higher parts. To complete the proof of the equivalence of \CONGEST and \LOCAL in trees we still need to show the following result:
\begin{itemize}[noitemsep]
    \item Given: a randomized $o(\log n)$-round algorithm $A_1$ that solves some \LCL $\Pi$ in trees in the \LOCAL model.
    \item We can construct: a randomized $O(\log \log n)$-round algorithm $A'_2$ that solves it in the \CONGEST model.
\end{itemize}
If we only needed to construct a \LOCAL-model algorithm, we could directly apply the strategy from prior work \cite{CP19,CHLPU18}: replace $A_1$ with a faster algorithm $A_0$ that has got a higher failure probability, use the Lov{\'a}sz local lemma (LLL) to show that $A_0$ nevertheless succeeds at least for some assignment of random bits, and then plug in an efficient distributed LLL algorithm \cite{CHLPU18} to find such an assignment of random bits.

However, there is one key component missing if we try to do the same in the \CONGEST model: a sufficiently fast LLL algorithm. Hence we again arrive at a concrete algorithm design challenge: we need to develop an efficient \CONGEST-model algorithm that solves (at least) the specific LLL instances that capture the task of finding a good random bit assignments for $A_0$.

We present our new algorithm in Section~\ref{sec:treeSublog}. We make use of the shattering framework of \cite{FGLLL17}, but one of the key new ideas is that we can use the equivalence results for the superlogarithmic region that we already proved in Section~\ref{sec:treeSuperlog} as a tool to design fast \CONGEST model algorithms also in the sublogarithmic region.

\paragraph{Separation in general graphs (Section~\ref{sec:separation}).}

Our third major contribution is the separation result between \CONGEST and \LOCAL for general graphs---and as we are interested in proving such a separation for \LCL{}s, we need to prove the separation for bounded-degree graphs.

Our separation result is constructive---we show how to design an \LCL problem $\Pi$ with the following properties:
\begin{enumerate}[label=(\alph*), noitemsep]
    \item There is a deterministic algorithm that solves $\Pi$ in the \LOCAL model in $O(\log n)$ rounds.
    \item Any algorithm (deterministic or randomized) that solves $\Pi$ in the \CONGEST model requires $\Omega(\sqrt{n}/\log^2 n)$ rounds.
\end{enumerate}
To define $\Pi$, we first construct a graph family $\mathcal{G}$ and an \LCL problem $\Pi^{\mathrm{real}}$ such that $\Pi^{\mathrm{real}}$ would satisfy properties (a) and (b) \emph{if we promised that the input comes from family $\mathcal{G}$}. Here we use a bounded-degree version of the lower-bound construction by \cite{Sarma2012}: the graph has a small diameter, making all problems easy in \LOCAL, but all short paths from one end to the other pass through the top of the structure, making it hard to pass a large amount of information across the graph in the \CONGEST model.

However, the existence of \LCL problems that have a specific complexity given some arbitrary promise is not yet interesting---in particular, \LCL{}s with an arbitrary promise do not lead to any meaningful complexity classes or useful structural theorems. Hence the key challenge is \emph{eliminating the promise} related to the structure of the input graph. To do that, we introduce the following \LCL problems:
\begin{itemize}[noitemsep]
    \item $\Pi^{\mathrm{proof}}$ is a distributed proof for the fact $G \in \mathcal{G}$. That is, for every $G \in \mathcal{G}$, there exists a feasible solution $X$ to $\Pi^{\mathrm{proof}}$, and for every $G \notin \mathcal{G}$, there is no solution to $\Pi^{\mathrm{proof}}$. This problem can be hard to solve.
    \item $\Pi^{\mathrm{bad}}$ is a distributed proof for the fact that a given labeling $X$ is \emph{not} a valid solution to $\Pi^{\mathrm{proof}}$. This problem has to be sufficiently easy to solve in the \LOCAL model whenever $X$ is an invalid solution (and impossible to solve whenever $X$ is a valid solution).
\end{itemize}
Finally, the \LCL problem $\Pi$ captures the following task:
\begin{itemize}[noitemsep]
    \item Given a graph $G$ and a labeling $X$, solve either $\Pi^{\mathrm{real}}$ or $\Pi^{\mathrm{bad}}$.
\end{itemize}
Now given an arbitrary graph $G$ (that may or may not be from $\mathcal{G}$) and an arbitrary labeling $X$ (that may or may not be a solution to $\Pi^{\mathrm{proof}}$), we can solve $\Pi$ efficiently in the \LOCAL model as follows:
\begin{itemize}[noitemsep]
    \item If $X$ is not a valid solution to $\Pi^{\mathrm{proof}}$, we will detect it and we can solve $\Pi^{\mathrm{bad}}$.
    \item Otherwise $X$ proves that we must have $G \in \mathcal{G}$, and hence we can solve $\Pi^{\mathrm{real}}$.
\end{itemize}
Particular care is needed to ensure to that even for an adversarial $G$ and $X$, at least one of $\Pi^{\mathrm{real}}$ and $\Pi^{\mathrm{bad}}$ is always sufficiently easy to solve in the \LOCAL model. A similar high-level strategy has been used in prior work to e.g.\ construct \LCL problems with a particular complexity in the \LOCAL model, but to our knowledge the specific constructions of $\mathcal{G}$, $\Pi^{\mathrm{real}}$, $\Pi^{\mathrm{proof}}$, and $\Pi^{\mathrm{bad}}$ are all new---we give the details of the construction in Section~\ref{sec:separation}.

\section{Preliminaries \& Definitions}
\label{sec:definitions}

\paragraph{Formal \LCL{} definition.}\label{ssec:lcl-definition}
An \LCL{} problem $\Pi$ is a tuple $(\Sigma_{\mathrm{in}},\Sigma_{\mathrm{out}},C,r)$ satisfying the following.
\begin{itemize}
	\item Both $\Sigma_{\mathrm{in}}$ and $\Sigma_{\mathrm{out}}$ are constant-size sets of labels;
	\item The parameter $r$ is an arbitrary constant, called \emph{checkability radius} of $\Pi$;
	\item $C$ is a finite set of pairs $(H = (V^H,E^H),v)$, where:
	\begin{itemize}
		\item $H$ is a graph, $v$ is a node of $H$, and the radius of $v$ in $H$ is at most $r$;
		\item Every pair $(v,e) \in V^H \times E^H$ is labeled with a label in  $\Sigma_{\mathrm{in}}$ and a label in $\Sigma_{\mathrm{out}}$.
	\end{itemize}
\end{itemize}
Solving a problem $\Pi$ means that we are given a graph where every node-edge pair is labeled with a label in $\Sigma_{\mathrm{in}}$, and we need to assign a label in $\Sigma_{\mathrm{out}}$ to each node-edge pair, such that every $r$-radius ball around each node is isomorphic to a (labeled) graph contained in $C$. We may use the term \emph{half-edge} to refer to a node-edge pair.

\paragraph{\LOCAL model.}\label{ssec:models}
In the \LOCAL model \cite{linial87}, we have a connected input graph $G$ with $n$ nodes, which communicate unbounded-size messages in synchronous rounds according to the links that connect them, initially known only to their endpoints. A trivial and well-known observation is that a $T$-round algorithm can be seen as a mapping from radius-$T$ neighborhoods to local outputs.

\paragraph{\CONGEST model.}

The \CONGEST model is in all other aspects identical to the \LOCAL model, but we limit the message size: in a network with $n$ nodes, the size of each message is limited to at most $O(\log n)$ bits \cite{peleg00}.

\paragraph{Randomized algorithms.}
We start by formally defining what is a randomized algorithm in our context. We consider randomized Monte Carlo algorithms, that is, the bound on their running time holds deterministically, but they are only required to produce a valid solution with high probability of success.

\begin{definition}[Randomized Algorithm]
	A \emph{randomized algorithm} $\mathcal{A}$ run with parameter $n$, written as $\mathcal{A}(n)$, (known to all nodes) has runtime $t_{\mathcal{A}}(n)$ and is correct with probability $1/n$ on any graph with at most $n$ nodes. There are no unique IDs. Further, we assume that there is a finite upper bound $h_{\mathcal{A}}(n)\in\NN$ on the number of random bits that a node uses on a graph with at most $n$ vertices.

	The \emph{local failure probability} of a randomized algorithm $\mathcal{A}$ at a node $v$ when solving an \LCL{} is the probability that the \LCL{} constraint of $v$ is violated when executing $\mathcal{A}$.
\end{definition}

The assumption that the number of random bits used by a node is bounded by some (arbitrarily fast growing) function $h(n)$ is made in other gap results in the \LOCAL model as well (see e.g.\ ``the necessity of graph shattering'' in \cite{ChangKP19}). Our results do not care about the growth rate of $h(n)$, e.g., it could be doubly exponential in $n$ or even growing faster. Its growth rate only increases the leading constant in our runtime.

The assumption that randomized algorithms are not provided with unique IDs is made to keep our proofs simpler, but it is not a restriction. In fact, any randomized algorithm can, in $0$ rounds, generate an ID assignment, where IDs are unique with high probability. Hence, any algorithm that requires unique IDs can be converted into an algorithm that does not require them, by first generating them and then executing the original algorithm. The ID generation phase may fail, and the algorithm may not even be able to detect it and try to recover from it, but this failure probability can be made arbitrarily small, by making the ID space large enough. Hence, we observe the following.
\begin{observation}
	\label{obs:IdorNoId}
	Let $c\geq 1$ be a constant. For any randomized Monte Carlo algorithm with failure probability  at most $1/n^c$ on any graph with at most $n$ nodes that relies on IDs from an ID space of size $n^{c+2}$ there is a randomized Monte Carlo algorithm with failure probability $2/n^c$ that does not use unique IDs.
\end{observation}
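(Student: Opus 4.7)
The plan is to obtain the ID-free algorithm by prepending a $0$-round random ID generation phase to the original algorithm, and then bound the failure probability by a union bound over the event of an ID collision and the event that the underlying ID-based algorithm fails.

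More concretely, I would first describe the new algorithm $\mathcal{A}'$ as follows: each node independently samples an integer uniformly at random from $\{1,2,\dots,n^{c+2}\}$, using $\lceil (c+2)\log n\rceil$ private random bits, and uses this integer as its ``ID'' when executing the original ID-based algorithm $\mathcal{A}$. Since this sampling is purely local and requires no communication, $\mathcal{A}'$ has the same round complexity as $\mathcal{A}$. The definition fits the framework of randomized algorithms given just above, in particular the bound $h_{\mathcal{A}'}(n) = h_{\mathcal{A}}(n) + \lceil(c+2)\log n\rceil$ on the number of random bits per node is finite.

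Next I would analyze correctness. Let $E_{\mathrm{coll}}$ denote the event that two distinct nodes sample the same ID, and let $E_{\mathrm{fail}}$ denote the event that $\mathcal{A}$ fails to produce a valid output on the given graph equipped with the sampled IDs. Conditioned on $\neg E_{\mathrm{coll}}$, the sampled IDs form a valid unique-ID assignment from an ID space of size $n^{c+2}$; moreover, the distribution of the IDs and of the remaining random bits of $\mathcal{A}$ are independent, so the standard failure guarantee of $\mathcal{A}$ applies and yields $\Pr[E_{\mathrm{fail}} \mid \neg E_{\mathrm{coll}}] \le 1/n^c$. A pairwise union bound gives $\Pr[E_{\mathrm{coll}}] \le \binom{n}{2}\cdot n^{-(c+2)} \le 1/(2n^c)$. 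Combining,
\[
\Pr[\mathcal{A}' \text{ fails}] \;\le\; \Pr[E_{\mathrm{coll}}] + \Pr[E_{\mathrm{fail}} \mid \neg E_{\mathrm{coll}}] \;\le\; \tfrac{1}{2n^c} + \tfrac{1}{n^c} \;\le\; \tfrac{2}{n^c},
\]
which is the desired bound.

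The only mildly delicate point, and the one I would be careful about, is ensuring that when IDs do collide the algorithm still terminates within its stated runtime and returns \emph{some} output (so that the bad event is well-defined as ``the output is invalid'' rather than ``the algorithm diverges''). This is handled by the standing convention that a Monte Carlo algorithm has a deterministic runtime bound $t_{\mathcal{A}}(n)$: on a colliding ID input we simply let nodes output whatever $\mathcal{A}$ outputs after $t_{\mathcal{A}}(n)$ rounds (e.g.\ an arbitrary default label), and charge this entirely to $E_{\mathrm{coll}}$. No other obstacle arises; this is the main reason the observation is stated as an observation rather than a theorem.
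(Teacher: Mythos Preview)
Your proposal is correct and follows exactly the approach the paper sketches in the paragraph preceding the observation: generate random IDs locally, bound the collision probability by $\binom{n}{2}\cdot n^{-(c+2)}\le 1/(2n^c)$, and union-bound with the original algorithm's failure probability. The paper does not give a more detailed proof than this, so your write-up is in fact more explicit than the paper's own treatment.
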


\paragraph{Deterministic algorithms.} 
\begin{definition}[Deterministic Algorithm]
	A \emph{deterministic algorithm} $\mathcal{A}$ run with parameter $n$, written as $\mathcal{A}(n)$, (known to all nodes) has runtime $t_{\mathcal{A}}(n)$ and is always correct on any graph with at most $n$ nodes. We assume that vertices are equipped with unique IDs from a space of size $\mathcal{S}$. We require that a single ID can be sent in a \CONGEST message. If the parameter $S$ is omitted, then it is assumed to be $n^c$, for some constant $c \ge 1$.
\end{definition}

\paragraph{\boldmath Lying about $n$.}
Algorithms are defined such that a parameter $n$ is provided to them, where $n$ represents an upper bound on the number of nodes of the graph. We can nevertheless try to run an algorithm $\mathcal{A}$ with a parameter that violates this promise, that is, we can try to run $\mathcal{A}(n_0)$ on a graph of size $n > n_0$. In this case, we define $\mathcal{A}(n_0)$ as follows:
\begin{itemize}
	\item If the algorithm reaches a state that it would have never reached if the promise were satisfied, then the algorithm must still terminate in $t_{\mathcal{A}}(n_0)$, but it can produce any output.
	\item Otherwise, it behaves exactly as it would behave while running in a graph of size $n_0$.
\end{itemize}
Every time we will lie about $n$, we will make sure that we never satisfy the first case. Examples of unreachable states are the following:
\begin{itemize}
	\item For deterministic algorithms that assume that IDs are in $\{1,\ldots, n^2\}$, if we lie about $n$ and run $\mathcal{A}(N)$ where $N = \log n$, the algorithm can notice that IDs are exponentially larger than what they should be. Hence, a necessary condition in order to lie about $n$ for deterministic algorithms is to compute a new ID assignment, or to use an algorithm that tolerates a larger ID space.
	\item Both deterministic algorithms and randomized algorithms expect to see a node of degree $\le 2$ within their $O(\log n)$ neighborhood (there are no trees containing subtrees of radius $\omega(\log n)$ that only contain nodes of degree $\ge 3$).  If we lie about $n$ and run $\mathcal{A}(N)$ where $N$ is e.g.\ $\log n$, then an algorithm may find no nodes of degree $\le 2$ within its running time. Hence, without additional assumptions on the structure of the graph, we cannot lie about $n$ for algorithms running in $\Omega(\log n)$. 
\end{itemize}

\section{Warm-Up: The \texorpdfstring{\boldmath $O(\log n)$}{O(log n)} Region}
\label{sec:tinyRuntimes}
As a warm-up, we consider the regime of sublogarithmic complexities. In this region, we can use simple observations to show that gaps known for \LCL{} complexities in the \LOCAL model directly extend to the \CONGEST model as well. In the following, we assume that the size of the ID space $\mathcal{S}$ is polynomial in $n$.

We start by noticing that a constant time (possibly randomized) algorithm for the \LOCAL model implies a constant time deterministic algorithm for the CONGEST model as well.
\begin{theorem}\label{thm:constant}
	Let $\Pi$ be an \LCL{} problem. Assume that there is a randomized algorithm for the \LOCAL model that solves $\Pi$ in $O(1)$ rounds with failure probability at most $1/n$. Then, there is a deterministic algorithm for the \CONGEST model that solves $\Pi$ in $O(1)$ rounds.
\end{theorem}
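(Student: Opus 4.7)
The plan is to derandomize the given randomized \LOCAL{} algorithm into a deterministic \LOCAL{} algorithm of the same asymptotic round complexity, and then observe that any constant-round deterministic \LOCAL{} algorithm for an \LCL{} problem can be implemented in $O(1)$ \CONGEST{} rounds by a direct neighborhood-gathering simulation.

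For the first step I would appeal to the classical derandomization of Naor and Stockmeyer~\cite{naor95}: for \LCL{} problems on graphs of constant maximum degree with unique IDs from a polynomial range, any randomized $O(1)$-round algorithm whose failure probability tends to $0$ with $n$ can be converted into a deterministic $O(1)$-round algorithm. The hypothesis of failure probability at most $1/n$ easily meets this requirement. Their derandomization exploits the fact that a $T$-round \LOCAL{} algorithm is a function only of the radius-$T$ neighborhood, and that for constant $T$, $\Delta$, $|\Sigma_{\mathrm{in}}|$, and polynomially bounded IDs, the set of order-types of such neighborhoods is finite; one can therefore find a single ID-indexed decision rule that succeeds on every order-type simultaneously. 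Since the hypothesis gives a randomized algorithm without IDs, I would first apply Observation~\ref{obs:IdorNoId} in reverse (the deterministic setting provides IDs for free) before invoking Naor--Stockmeyer.

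For the second step, let $\mathcal{B}$ be the resulting deterministic \LOCAL{} algorithm with locality radius $r = O(1)$, and let $\Delta$ be the maximum degree fixed by $\Pi$. Then $\mathcal{B}$'s output at a node $v$ depends only on the radius-$r$ neighborhood of $v$, which contains $O(\Delta^r) = O(1)$ nodes and edges. Each node can gather this neighborhood---its topology, the $O(\log n)$-bit IDs of its constituent nodes, and their input labels from the constant-size set $\Sigma_{\mathrm{in}}$---in $r = O(1)$ rounds of straightforward flooding: in round $i$, each node forwards every newly received ID and input label along each incident edge, which fits in a single \CONGEST{} message, and the total traffic across any edge is only $O(1)$ items, so no pipelining is required. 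Having reconstructed its radius-$r$ view, each node evaluates $\mathcal{B}$ locally and writes out its output label.

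The main point to get right is the first step: one must invoke the tight $O(1)$-round statement of Naor--Stockmeyer rather than the weaker gap theorems of Chang--Kopelowitz--Pettie, since the latter would only yield a deterministic \LOCAL{} algorithm of complexity $O(\logstar n)$ and hence $O(\logstar n)$ \CONGEST{} rounds after simulation. This step relies on the randomized complexity being genuinely $O(1)$ and on the finiteness of order-types of constant-radius neighborhoods in bounded-degree graphs; once this is in hand, the \CONGEST{} simulation is entirely routine.
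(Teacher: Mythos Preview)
Your proposal is correct and follows essentially the same two-step approach as the paper: first invoke the known derandomization result \cite{naor95,CP19} to obtain a deterministic $O(1)$-round \LOCAL{} algorithm, then simulate it in \CONGEST{} by gathering the constant-radius neighborhood, which is possible since $\Delta=O(1)$. Your extra discussion (the role of order-types in Naor--Stockmeyer, and why the Chang--Kopelowitz--Pettie gap alone would only give $O(\logstar n)$) is accurate elaboration rather than a departure from the paper's argument.
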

\begin{proof}
	It is known that, the existence of a randomized $O(1)$-round algorithm solving $\Pi$ in the \LOCAL model with failure probability at most $1/n$ implies the existence of a deterministic algorithm solving $\Pi$ in the \LOCAL model in $O(1)$ rounds \cite{naor95,CP19}.
	
	Also, it is known that any algorithm $\mathcal{A}$ running in $T$ rounds in the \LOCAL model  can be normalized, obtaining a new algorithm $\mathcal{A}'$ that works as follows: first gather the $T$-radius ball neighborhood, and then, without additional communication, produce an output.	Since $\Delta=O(1)$, algorithm $\mathcal{A}'$ can be simulated in the \CONGEST model.
\end{proof}

We can use \Cref{thm:constant} to show that, if we have algorithms that lie inside know complexity gaps of the \LOCAL model, we can obtain fast algorithms that work in the \CONGEST model as well.
\begin{corollary}\label{cor:subloglogstar}
	Let $\Pi$ be an \LCL{} problem. Assume that there is a randomized algorithm for the \LOCAL model that solves $\Pi$ in $o(\log \log^* n)$ rounds with failure probability at most $1/n$. Then, there is a deterministic algorithm for the \CONGEST model that solves $\Pi$ in $O(1)$ rounds.
\end{corollary}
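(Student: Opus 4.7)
The plan is to chain together a known \LOCAL-model gap result with \theoremref{thm:constant}. The key observation is that the hypothesis places the \LCL{} $\Pi$ inside the classical sublogarithmic gap in the \LOCAL model: it is known (via the Naor--Stockmeyer style speedup formalized and sharpened by Chang--Kopelowitz--Pettie) that every \LCL{} solvable by a randomized $o(\log \log^* n)$-round \LOCAL algorithm with failure probability at most $1/n$ is in fact solvable in $O(1)$ rounds in the \LOCAL model, even deterministically. So my intended pipeline is:
\begin{enumerate}[noitemsep]
    \item Take the given randomized $o(\log \log^* n)$-round \LOCAL algorithm $\mathcal{A}$ for $\Pi$.
    \item Invoke the $\log \log^* n$ gap in the \LOCAL model to upgrade $\mathcal{A}$ into a randomized (or directly deterministic) $O(1)$-round \LOCAL algorithm $\mathcal{A}'$ for $\Pi$ with failure probability at most $1/n$.
    \item Apply \theoremref{thm:constant} to $\mathcal{A}'$ to obtain a deterministic $O(1)$-round \CONGEST algorithm for $\Pi$.
\end{enumerate}

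Step 1 is just restating the assumption, and step 3 is a black-box invocation of the theorem that was just proved. The only substantive point is step 2: making sure we are actually citing a gap theorem that (a) applies to \emph{randomized} algorithms on the class of graphs under consideration, and (b) outputs a constant-round algorithm (as opposed to, say, an $O(\log^* n)$-round algorithm, which is where the better-known $O(1)$ vs.\ $\Omega(\log^* n)$ gap lives). If we were instead relying on the classical $O(1)$ vs.\ $\Omega(\log^* n)$ gap, the hypothesis would need to be $o(\log^* n)$ rather than $o(\log \log^* n)$, so the sharper $\log \log^* n$ gap is what is doing the work here.

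The main (minor) obstacle is therefore bookkeeping: verifying that the hypothesis of the \LOCAL-model gap result matches what we have, in particular that failure probability $1/n$ is fine and that the underlying graph class (in our setting this corollary sits in the warm-up without a restriction to trees, but the ID space is polynomial and $\Delta = O(1)$ since $\Pi$ is an \LCL{}) is covered. Once that is checked, the corollary follows immediately, with essentially no new ideas beyond the previous theorem. I would write the proof as a two-line argument: apply the \LOCAL-model gap to replace the $o(\log \log^* n)$ algorithm by an $O(1)$-round \LOCAL algorithm, then apply \theoremref{thm:constant} to convert it into a deterministic $O(1)$-round \CONGEST algorithm.
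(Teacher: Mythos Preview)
Your proposal is correct and matches the paper's own proof essentially verbatim: the paper simply cites the \LOCAL-model gap (an $o(\log\log^* n)$-round randomized algorithm implies an $O(1)$-round deterministic algorithm, via \cite{naor95,CP19}) and then applies \theoremref{thm:constant}. Your discussion of the bookkeeping and the distinction from the $o(\log^* n)$ gap is accurate and more detailed than the paper's two-sentence argument.
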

\begin{proof}
	In the \LOCAL model, it is known that an $o(\log \log^* n)$-round randomized algorithm implies an $O(1)$-round deterministic algorithm \cite{naor95,CP19}. Then, by applying Theorem \ref{thm:constant} the claim follows.
\end{proof}
While \Cref{cor:subloglogstar} holds for any graph topology, in trees, paths and cycles we obtain a better result.

\begin{corollary}
		Let $\Pi$ be an \LCL{} problem on trees, paths, or cycles. Assume that there is a randomized algorithm for the \LOCAL model that solves $\Pi$ in $o(\log^* n)$ rounds with failure probability at most $1/n$. Then, there is a deterministic algorithm for the \CONGEST model that solves $\Pi$ in $O(1)$ rounds.
\end{corollary}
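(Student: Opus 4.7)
The plan is to mirror the two-step reduction used for \Cref{cor:subloglogstar}, but invoking a stronger \LOCAL-model gap that is available on the restricted topologies of paths, cycles, and (unrooted) trees. On these graph classes the well-known speedup result states that any \LCL solvable by a randomized $o(\log^* n)$-round algorithm in \LOCAL is in fact solvable by a deterministic $O(1)$-round algorithm in \LOCAL; this extends the $o(\log\log^* n)\to O(1)$ gap used in \Cref{cor:subloglogstar} all the way up to $o(\log^* n)$, which is the largest gap that is known to exist on these topologies.

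Concretely, I would first cite the classical Naor--Stockmeyer-style round-elimination gap together with its randomized-to-deterministic counterpart (the same pair of references as in the proof of \Cref{cor:subloglogstar}), specialized to paths, cycles, and trees. This yields a deterministic $O(1)$-round \LOCAL algorithm $\mathcal{A}$ for $\Pi$. Then I would invoke \Cref{thm:constant}, or rather the normalization argument used inside its proof: since $\Delta = O(1)$ and $\mathcal{A}$ only needs to inspect a constant-radius neighborhood, it can be normalized so that each node first gathers its radius-$O(1)$ ball and then computes its output locally; this gathering step is simulatable in $O(1)$ rounds of \CONGEST using $O(\log n)$-bit messages, which delivers the desired $O(1)$-round deterministic \CONGEST algorithm.

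The only things to verify are that the \LOCAL gap cited in the first step really does cover unrooted trees and not only paths and cycles, and that the ID space of the resulting deterministic algorithm is polynomial in $n$ so that a single ID fits into one \CONGEST message; both are granted by the polynomial ID-space convention adopted at the start of \Cref{sec:tinyRuntimes}. Beyond these bookkeeping points there is no genuine obstacle, since the heavy lifting is already done by the prior \LOCAL-model gap result and by \Cref{thm:constant}.
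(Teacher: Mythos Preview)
Your proposal is essentially the same two-step argument as the paper's proof: first invoke the known \LOCAL-model gap on trees, paths, and cycles to obtain a deterministic $O(1)$-round \LOCAL algorithm, then apply \Cref{thm:constant} to transfer it to \CONGEST. One small caveat: the references you propose to cite (``the same pair as in \Cref{cor:subloglogstar}'', i.e.\ \cite{naor95,CP19}) do not cover the $o(\log^* n)\to O(1)$ gap for unrooted trees; the paper instead cites \cite{naor95,brandt21trees}, since the tree case was only recently settled by the latter work.
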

\begin{proof}
	In the \LOCAL model, it is known that, on trees, paths, or cycles, an $o(\log^* n)$-round randomized algorithm implies an $O(1)$-round deterministic algorithm \cite{naor95,brandt21trees}. Then, by applying Theorem \ref{thm:constant} the claim follows.
\end{proof}

We now show that similar results hold even in the case where the obtained algorithm does not run in constant time.
\begin{theorem}\label{thm:sublogn}
	Let $\Pi$ be an \LCL{} problem. Assume that there is a deterministic algorithm for the \LOCAL model that solves $\Pi$ in $o(\log n)$ rounds, or a randomized algorithm for the \LOCAL model that solves $\Pi$ in $o(\log \log n)$ rounds with failure probability at most $1/n$. Then, there is a deterministic algorithm for the \CONGEST model that solves $\Pi$ in $O(\log^* n)$ rounds.
\end{theorem}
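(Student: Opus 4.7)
The plan is to reduce the theorem to a known \LOCAL gap and then to a generic simulation from \LOCAL to \CONGEST at the $O(\log^{*} n)$ regime.

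First, I would invoke the Chang--Kopelowitz--Pettie gap~\cite{ChangKP19}: both hypotheses of the theorem---a deterministic $o(\log n)$-round algorithm, or a randomized $o(\log\log n)$-round algorithm with failure probability at most $1/n$---are known to imply the existence of a deterministic $O(\log^{*} n)$-round \LOCAL algorithm $\mathcal{A}$ for $\Pi$. The theorem then reduces to the generic claim that every $O(\log^{*} n)$-round deterministic \LOCAL algorithm for an \LCL on constant-degree graphs can be executed in $O(\log^{*} n)$ \CONGEST rounds.

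Second, I would carry out the simulation in two phases. Let $T = O(\log^{*} n)$ be the runtime of $\mathcal{A}$. In phase~(a), compute a proper distance-$(2T+1)$ coloring $\chi$ of $G$ using $\Delta^{O(T)} = 2^{O(\log^{*} n)}$ colors in $O(\log^{*} n)$ \CONGEST rounds; on bounded-degree graphs this is standard, e.g.\ by Linial-style iterated color reduction applied to the $(2T+1)$-st power of $G$, with each reduction step carried out through the underlying graph using only $O(\log n)$-bit messages. In phase~(b), every node gathers the graph structure and the $\chi$-labels of its radius-$T$ ball. Since each color fits in $O(\log^{*} n)$ bits and the ball has $2^{O(\log^{*} n)}$ nodes, the payload on any gathering edge is comfortably carried by $T$ rounds of $O(\log n)$-bit \CONGEST messages. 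Each node then locally computes the output that $\mathcal{A}$ would produce on this colored ball.

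The delicate step, and the one I expect to be the main obstacle, is justifying that $\mathcal{A}$---originally designed to consume polynomially-large unique IDs---still produces a valid \LCL solution when we hand it $\chi$ as a surrogate identifier. The standard remedy, going back to Naor--Stockmeyer, is the reduction to \emph{order-invariance}: any $T$-round \LCL algorithm in \LOCAL can be normalized, with only a constant-factor blowup in running time, so that its output at each node depends only on the isomorphism type of the radius-$T$ ball together with the relative order of the identifiers appearing in it. Because $\chi$ is distinct within every radius-$(2T+1)$ ball, using the ordered colors in place of IDs presents this order-invariant version of $\mathcal{A}$ with a valid input and hence yields a correct global labeling for $\Pi$. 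Combining the two simulation phases with this order-invariance argument gives the deterministic $O(\log^{*} n)$-round \CONGEST algorithm claimed by the theorem.
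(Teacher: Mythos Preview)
Your first step---invoking \cite{ChangKP19} to obtain an $O(\log^* n)$-round deterministic \LOCAL algorithm---is exactly what the paper does. The gap is in the black-box simulation that follows.

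Phase~(a) does not run in $O(\log^* n)$ rounds. With $T = \Theta(\log^* n)$, Linial's reduction on $G^{2T+1}$ needs $\Theta(\log^* n)$ color-reduction iterations, and each iteration requires gathering the current colors from distance $2T+1$ in $G$, which is $\Theta(T)$ rounds. This gives $\Theta((\log^* n)^2)$ rounds, and no pipelining can beat that: after $j$ iterations the output at $v$ depends on IDs at distance $j(2T+1)$, so the final output depends on information $\Theta((\log^* n)^2)$ hops away. A distance-$\Theta(\log^* n)$ coloring in $O(\log^* n)$ rounds is not ``standard''.

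The order-invariance step is also not justified. The Naor--Stockmeyer argument uses Ramsey's theorem on $k$-tuples where $k$ is the size of the radius-$T$ ball; for constant $T$ this $k$ is constant and a polynomial ID space suffices. For $T = \Theta(\log^* n)$ the ball has $\Delta^{\Theta(\log^* n)}$ nodes, and the hypergraph Ramsey number for tuples of that arity is a tower of that height---far beyond any polynomial ID space. So you cannot conclude that an arbitrary $O(\log^* n)$-round \LCL algorithm is (or can be made) order-invariant by this route.

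The paper avoids both obstacles by using a sharper consequence of \cite{ChangKP19}: the speedup produces not just \emph{some} $O(\log^* n)$-round algorithm, but one in a specific normal form---first compute a distance-$k$ $O(\Delta^{2k})$-coloring for a \emph{constant} $k$ depending only on $\Pi$, then run a $k$-round deterministic algorithm that reads only that coloring and the input labels. With $k=O(1)$, Linial on $G^k$ genuinely takes $O(\log^* n)$ \CONGEST rounds, and the second step is an $O(1)$-radius gather of $O(1)$-bit data. Using this normal form is the missing ingredient; once you use it, the distance-$O(\log^* n)$ coloring and the order-invariance detour both become unnecessary.
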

\begin{proof}
	It is known that for solving \LCL{}s in the \LOCAL model, any deterministic $o(\log n)$-round algorithm or randomized $o(\log \log n)$-round algorithm can be converted into a deterministic $O(\log^* n)$-round algorithm \cite{ChangKP19}.
	We exploit the fact that the speedup result of \cite{ChangKP19} produces algorithms that are structured in a normal form. In particular, all problems solvable in $O(\log^* n)$ can also be solved as follows:
	\begin{enumerate}
		\item Find a distance-$k$ $O(\Delta^{2k})$-coloring, for some constant $k$.
		\item Run a deterministic $k$ rounds algorithm.
	\end{enumerate}
	The first step can be implemented in the \CONGEST model by using e.g.\ Linial's coloring algorithm \cite{Linial92}.
	Then, similarly as discussed in the proof of \Cref{thm:constant}, any $T$ rounds algorithm can be normalized into an algorithm that first gathers a $T$-radius neighborhood and then produces an output without additional communication. Hence, also the second step can be implemented in the \CONGEST model in $O(k) = O(1)$ rounds.
\end{proof}
We use this result to show stronger results for paths and cycles.
\begin{corollary}
	Let $\Pi$ be an \LCL{} problem on paths or cycles.  Assume that there is a randomized algorithm for the \LOCAL model that solves $\Pi$ in $o(n)$ rounds with failure probability at most $1/n$. Then, there is a deterministic algorithm for the \CONGEST model that solves $\Pi$ in $O(\log^* n)$ rounds.
\end{corollary}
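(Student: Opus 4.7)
The plan is to chain together two ingredients that already appear in the excerpt: a \LOCAL-model gap result for paths and cycles, and the \LOCAL-to-\CONGEST conversion machinery encapsulated in \theoremref{thm:sublogn}. The starting point is the fact that on paths and cycles, the \LCL complexity landscape in the \LOCAL model is extremely restricted: the only possible complexities are $O(1)$, $\Theta(\log^* n)$, and $\Theta(n)$. In particular, there is a gap result saying that any randomized $o(n)$-round \LOCAL algorithm for an \LCL on a path or cycle can be converted into a deterministic $O(\log^* n)$-round \LOCAL algorithm (this follows from the classifications cited in the paper, e.g.\ \cite{naor95,brandt21trees,lcls_on_paths_and_cycles}, and is already used in the previous corollary).

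With that gap result in hand, the argument is essentially a one-line reduction. First I would invoke the gap theorem to convert the hypothesized randomized $o(n)$-round \LOCAL algorithm for $\Pi$ into a deterministic $O(\log^* n)$-round \LOCAL algorithm $\mathcal{A}$. Since $O(\log^* n) = o(\log n)$, the algorithm $\mathcal{A}$ satisfies the hypothesis of \theoremref{thm:sublogn}. Applying that theorem yields a deterministic algorithm $\mathcal{A}'$ that solves $\Pi$ in the \CONGEST model in $O(\log^* n)$ rounds, which is exactly the desired conclusion.

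There is no genuine obstacle here: the work has been done upstream. The only subtlety worth flagging explicitly is that \theoremref{thm:sublogn} is stated for general graphs, so a reader might wonder whether restricting to paths and cycles breaks anything; it does not, since being on a path or cycle only makes the normal-form algorithm produced by the speedup of \cite{ChangKP19} easier to simulate (the maximum degree is $\Delta \le 2$, so the distance-$k$ $O(\Delta^{2k})$-coloring step and the subsequent $k$-round gather step both remain implementable in \CONGEST). Hence the proof is essentially a two-line chain: apply the paths/cycles gap theorem in \LOCAL, then apply \theoremref{thm:sublogn} to cross into \CONGEST.
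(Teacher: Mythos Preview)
Your proof is correct and follows essentially the same two-step approach as the paper: invoke the known \LOCAL-model gap on paths and cycles to obtain a deterministic $O(\log^* n)$-round \LOCAL algorithm, then apply \theoremref{thm:sublogn} to transfer it to \CONGEST. The only minor difference is bibliographic: the paper cites \cite{ChangKP19} for the paths/cycles gap, whereas you cite the broader set \cite{naor95,brandt21trees,lcls_on_paths_and_cycles}; the logical structure is identical.
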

\begin{proof}
	In the \LOCAL model, it is known that, on paths and cycles, an $o(n)$-round randomized algorithm implies an $O(\log^* n)$-round deterministic algorithm \cite{ChangKP19}.
	The claim follows by applying Theorem \ref{thm:sublogn}.
\end{proof}

\section{Trees: The \texorpdfstring{\boldmath $\Omega(\log n)$}{\textOmega(log n)} Region}
\label{sec:treeSuperlog}

In this section we prove that in the regime of complexities that are at least logarithmic, the asymptotic  complexity to solve any \LCL{} on trees is the same in the \LOCAL and in the \CONGEST model, when expressed as a function of $n$. Combined with the results proved in \Cref{sec:tinyRuntimes} and \Cref{sec:treeSublog}, which hold in the sublogarithmic region, we obtain that the asymptotic complexity of any \LCL{} on trees is identical in \LOCAL and \CONGEST.

\paragraph{Polynomial and subpolynomial gaps on trees.}
Informally, the following theorem states that there are no \LCL{}s on trees with complexity between $\omega(\log n)$ and $n^{o(1)}$, and for any constant integer $k \ge 1$, between $\omega(n^{1/(k+1)})$ and $o(n^{1/k})$, in both the \CONGEST and \LOCAL models, and that the complexity of any problem is the same in both models.

\begin{theorem}[superlogarithmic gaps]\label{thm:gapCONGESTrakeCompress-treePolyGap}
	Let $T^{\mathrm{slow}} = \{n^{o(1)}\} \cup \{o(n^{1/k}) ~|~ k \in \NN^+\}$.
	For $k \ge 1$, let $f(o(n^{1/k})) := O(n^{1/(k+1)})$. Also, let $f(n^{o(1)}) := O(\log n)$.
	
	Let $T \in T^{\mathrm{slow}}$. 
	Let $\Pi$ be any \LCL{} problem on trees that can be solved with a $T$-round randomized \LOCAL algorithm  that succeeds with probability at least $1-1/n$ on graphs of at most $n$ nodes.  The problem $\Pi$ can be solved with a deterministic  $f(T)$-round \CONGEST algorithm.
	
	Given the description of $\Pi$ it is decidable whether there is an $f(T)$-round  deterministic \CONGEST algorithm, and if it is the case then it can be obtained  from the description of $\Pi$.
\end{theorem}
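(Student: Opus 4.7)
The plan is to lift the \LOCAL speedup arguments of Chang--Pettie~\cite{CP19} (for the polynomial regimes) and Chang~\cite{Chang20} (for the subpolynomial regime) into the \CONGEST model. The backbone of both arguments is an iterated rake-and-compress decomposition of the tree: one peels off bounded-depth rake subtrees and long compress paths across a constant number of layers, where the layer depth is $O(\log n)$ or $O(n^{1/(k+1)})$ depending on the regime. This decomposition is implementable in \CONGEST within the target running time, since each peeling step depends only on local degree information and on component identifiers of $O(\log n)$ bits. The hypothesized \LOCAL algorithm $A_1$ is used only as a structural certificate: its existence guarantees that, for the \LCL{} $\Pi$, each component produced during the decomposition admits a \emph{canonical} output labeling determined by constant-size boundary information.

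Second, I would process layers in reverse order of peeling, as in \cite{CP19,Chang20}. At each step the boundary labels of the current components are already fixed by previously labeled higher layers, and the interior must be completed. For rake components, whose depth is bounded by the per-layer budget, the subtree can be gathered into a single endpoint in \CONGEST---this is affordable because $\Pi$ uses constant-size labels and has a constant checkability radius, so only $O(1)$ bits per node need traverse each edge per round---and then the canonical rule derived from $A_1$ is applied and broadcast back. For compress paths, a consistent labeling is captured by a constant-size finite automaton derivable from $\Pi$: the labels along the path are determined by propagating an $O(\log n)$-bit automaton state from each endpoint, which comfortably fits into a single \CONGEST message per round.

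The main obstacle will be the compress step: in the \LOCAL version one could collect an entire path of length up to $n^{1/(k+1)}$ at an endpoint and perform arbitrary computation there, which is not affordable in \CONGEST. The fix is the observation that the canonical labeling guaranteed by $A_1$ admits a finite-state description whose transitions depend only on constant-size boundary data and on local input labels; such a description can be pre-computed once from the description of $\Pi$ and then executed by a single sweep per compress path, respecting \CONGEST bandwidth. Iterating across all layers and summing the per-layer costs yields the claimed total round complexity $f(T)$.

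Finally, decidability is inherited from prior work: by \cite{balliu21rooted-trees,Chang20,CP19} one can decide, given the description of $\Pi$, whether $\Pi$ belongs to the target complexity class in \LOCAL. Whenever it does, the canonical labeling rules and finite automata driving the \CONGEST algorithm above can be extracted effectively from $\Pi$ by enumerating candidate rules of bounded size and checking their consistency against the finite constraint set~$C$, producing an explicit \CONGEST algorithm.
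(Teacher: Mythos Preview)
Your high-level plan (rake-and-compress decomposition, then process layers) matches the paper, but two of your technical claims misidentify where the real work is.

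First, for rake components: you cannot ``gather the subtree into a single endpoint'' in \CONGEST within the budget. A rake layer has diameter $O(\gamma)$ but may contain far more than $\gamma$ nodes, so the edge nearest the root would have to carry information about many nodes; ``$O(1)$ bits per node'' does not help when many nodes' bits must cross the same edge. What the paper does instead is show that only a \emph{constant-size} summary---a subset of $\Sigma$, the set of labels still completable on the outgoing edge---needs to move up, one hop per round, through the rake sublayers. This is close in spirit to your ``canonical rule'', but the point is precisely that no gathering is required.

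Second, and more seriously, you mischaracterize the compress obstacle. In the paper's decomposition the compress paths are post-processed to have \emph{constant} length (between $l$ and $2l$), so gathering one path is trivial; there is no long sweep and no $n^{1/(k+1)}$-length path to worry about. The genuine difficulty is that each compress path has \emph{two} outgoing edges, one at each endpoint, and these will later be labeled by two \emph{different} higher-layer components with no coordination. A single finite automaton swept from one end does not address this. The paper's solution is the notion of an \emph{independent class}: from the maximal class of the path one extracts, via a function $f_{\Pi,\gamma}$ depending only on $\Pi$ and $\gamma$, a subclass such that \emph{any} combination of labels for the two outgoing edges can be completed inside the path. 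The existence of such an $f_{\Pi,\gamma}$ is exactly what the hypothesized \LOCAL algorithm certifies (this is \Cref{lem:ChangFunction}, extracted from \cite{CP19,Chang20}); without it the up--down propagation can produce an empty class. Your proposal omits this ingredient entirely, and it is the crux of the argument.
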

\begin{remark}
		The deterministic complexity in \Cref{thm:gapCONGESTrakeCompress-treePolyGap} suppresses an $O(\logstar |\mathcal{S}|)$ dependency on the size $|\mathcal{S}|$ of the ID space. Also, there is an absolute constant $l_{\Pi}=O(1)$ such that the \CONGEST algorithm works even if, instead of unique IDs, a distance-$l_{\Pi}$ input coloring from a color space of size $|\mathcal{S}|$ is provided.
\end{remark}

Note that gap theorems do not hold if one has promises on the input of the \LCL{}. For example, consider a path, where some nodes are marked and others are unmarked, and the problem requires to $2$-color unmarked nodes. If we have the promise that there is at least one marked node every $\sqrt{n}$ steps, then we obtain a problem with complexity $\Theta(\sqrt{n})$, that does not exist on paths for \LCL{}s without promises on inputs.

Since \Cref{thm:gapCONGESTrakeCompress-treePolyGap} shows how to construct \CONGEST algorithms starting from \LOCAL ones,  the existence of \CONGEST problems with complexity $\Theta(n^{1/k})$ follows implicitly from the existence of these complexities in the \LOCAL model. In particular, Chang and Pettie devised a series of problems that they name $2\frac{1}{2}$-coloring \cite{CP19}. These problems are parameterized by an integer constant $k>1$ and have complexity $\Theta(n^{1/k})$ on trees.

\paragraph{Diameter time algorithms.}
Additionally, we prove that a randomized diameter time \LOCAL algorithm is asymptotically not more powerful than a deterministic diameter time \CONGEST algorithm, when solving \LCL{}s on trees. This result can be seen as an orthogonal result to the remaining results that we prove for \LCL{}s on trees, because the runtime is not expressed as a function of $n$, but as a function of a different parameter, that is, the diameter of the graph. While the result might be of independent interest, it mainly deals as a warm-up to explain the proof of the technically more involved \Cref{thm:gapCONGESTrakeCompress-treePolyGap}. 
\begin{theorem}[diameter algorithms]\label{thm:treediam}
	Let $\Pi$ be an \LCL{} problem on trees that can be solved with a randomized \LOCAL algorithm running in $O(D)$ rounds that succeeds with high probability, where $D$ is the diameter of the tree. The problem $\Pi$ can be solved with a deterministic \CONGEST algorithm running in $O(D)$ rounds.
	The \CONGEST algorithm does not require unique IDs but a means to break symmetry between adjacent nodes, that can be given by unique IDs, an arbitrary input coloring, or an arbitrary orientation of the edges. 	
\end{theorem}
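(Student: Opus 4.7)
The plan is to implement a tree-automaton fold on the input tree: constant-size summaries of subtrees propagate inward to a designated root, and output labels then propagate back outward, with each phase taking $O(D)$ \CONGEST rounds.

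In a preprocessing step of $O(D)$ rounds, I use the provided symmetry-breaking mechanism to root the tree at its center. This is done by standard techniques: the mechanism distinguishes neighbors, from which I extract a proper vertex coloring of small constant range in $O(\log^{*}n)$ rounds via Linial's algorithm, then run two successive BFS sweeps to find the endpoints of a longest path, breaking ties deterministically by the coloring. Rooting at the midpoint of this path gives a rooted tree of depth at most $\lceil D/2\rceil = O(D)$, where each non-root node knows its parent. Next, for every subtree $T_v$ I define the \emph{type} of $T_v$ to encode (i) the input-labeled shape of the top $r$ layers of $T_v$, where $r$ is the checkability radius of $\Pi$, and (ii) the set of $\Sigma_{\mathrm{out}}$-labelings of those top $r$ layers that extend to some valid labeling of the rest of $T_v$ with every $r$-ball strictly inside $T_v$ satisfied. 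Because $|\Sigma_{\mathrm{in}}|$, $|\Sigma_{\mathrm{out}}|$, $\Delta$, and $r$ are all constants, the number of distinct types is a constant depending only on $\Pi$ and each type fits into a single \CONGEST message. Types are computed bottom-up in $O(D)$ rounds: a leaf's type is immediate, and an internal node computes its type from its input and the constant number of child types. The root's type is guaranteed non-empty, because the existence of a randomized \LOCAL algorithm that succeeds with probability at least $1-1/n>0$ implies that at least one valid labeling of the input tree exists (fix any random string for which the algorithm succeeds).

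In a second top-down pass of $O(D)$ rounds, the root deterministically selects a labeling of its top-$r$ region from its type using a canonical tie-breaking rule baked into the algorithm; every other node, upon learning its parent's commitment, restricts its own labeling to one that is simultaneously consistent with what its parent chose and with each child's type, which is always feasible by the way types are defined. After $O(D)$ further \CONGEST rounds every half-edge has its final output label. The main technical obstacle is the type definition itself: one must retain enough structural and output information about the top $r$ layers of each subtree to certify the \LCL constraints that straddle the boundary between a subtree and its parent, while keeping the total number of types constant and independent of $n$. This is precisely why the type encodes the entire top $r$ layers rather than only the root's half-edges, and this is also the step that fails for general graphs (where small separators do not exist), which ultimately motivates the separation in \Cref{sec:separation}.
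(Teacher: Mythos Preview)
Your high-level plan—root the tree, pass constant-size subtree summaries up, then commit output labels down—is exactly the paper's strategy. Two issues, however, prevent the argument as written from going through.

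The minor one concerns rooting. With only a constant-range coloring and no unique identifiers, you cannot deterministically designate a single start vertex for a BFS sweep nor globally break ties among several farthest vertices, so the two-sweep center-finding procedure is underspecified. The paper instead peels leaves iteratively; after $O(D)$ steps at most two adjacent vertices remain, and the given edge orientation (or coloring) selects one of them. This requires only pairwise symmetry breaking between adjacent nodes, matching the hypothesis of the theorem.

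The substantive gap is in your type definition. You record the top $r$ layers of $T_v$ together with the set of output labelings of those layers that extend to a labeling of $T_v$ satisfying every $r$-ball strictly inside $T_v$. But when $v$ attempts to compute its type from the types of its children $c_i$, the new constraints it must certify are those centered at depth $r$ in $T_v$, i.e., depth $r-1$ in $T_{c_i}$; the $r$-ball of such a center reaches down to depth $2r-1$ inside $T_{c_i}$. Since $c_i$'s type fixes output labels only at depths $0,\dots,r-1$, two extensions of one and the same labeling in $c_i$'s type can differ on depths $r,\dots,2r-1$ and hence disagree on whether that boundary constraint holds—information the type does not carry. Consequently $v$ cannot decide membership in its own type from its children's types. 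A direct repair is to record the top $2r$ layers instead of $r$; the number of types remains a constant depending only on $\Pi$. The paper sidesteps the issue altogether by first converting $\Pi$ to the equivalent node-edge (black-white) formalism via \Cref{claim:LCLequivalence}; this effectively reduces the checkability radius to one, so the summary sent to the parent is simply a subset of the output alphabet for the single parent edge, and the bottom-up recurrence becomes immediate.
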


Any solvable \LCL{} problem on trees can trivially be solved in the \LOCAL model in $O(D)$ rounds by gathering the whole tree topology at a leader node, who then locally computes a solution and distributes it to all nodes. Using pipelining, the same algorithm can be simulated in the \CONGEST model in $O(D+n\cdot\Delta)=O(n)$ rounds, but this running time can still be much larger than the $O(D)$ running time obtainable in the \LOCAL model. On a high level, we  show that for \LCL{}s on trees, it is not required to gather the whole topology on a single node and brute force a solution---gathering the whole information at a single node has an $\Omega(n)$ lower bound even if the diameter is small.

\paragraph{Black-white formalism.}
In order to keep our proofs simple, we consider a simplified variant of \LCL{}s, called \emph{LCLs in the black-white formalism}. The main purpose of this formalism is to reduce the radius required to verify if a solution is correct.  We will later show that, on trees, the black-white formalism is in some sense equivalent to the standard \LCL{} definition.

A problem $\Pi$ is a tuple $(\Sigma_{\mathrm{in}},\Sigma_{\mathrm{out}},C_W,C_B)$ satisfying the following.
\begin{itemize}
	\item Both $\Sigma_{\mathrm{in}}$ and $\Sigma_{\mathrm{out}}$ are constant-size sets of labels;
	\item $C_B$ and $C_W$ are sets of multisets of pairs of labels, where each pair $(i,o)$ satisfies $i \in \Sigma_{\mathrm{in}}$ and $o \in \Sigma_{\mathrm{out}}$.
\end{itemize}
Solving a problem $\Pi$ means that we are given a bipartite two-colored graph where every edge is labeled with a label in $\Sigma_{\mathrm{in}}$, and we need to assign a label in $\Sigma_{\mathrm{out}}$ to each edge, such that for every black (resp. white) node, the multiset of pairs of input and output labels assigned to the incident edges is in $C_B$ (resp. $C_W$).

\paragraph{Node-edge formalism.}
The black-white formalism allows us to define problems also on graphs that are not bipartite and two-colored, as follows. Given a graph $G$, we define a bipartite graph $H$, where white nodes correspond to nodes of $G$, and black nodes correspond to edges of $G$. A labeling of edges of $H$ corresponds to a labeling of node-edge pairs of $G$. The constraints $C_W$ of white nodes of $H$ correspond to node constraints of $G$, and the constraints $C_B$ of black nodes of $H$ corresponds to edge constraints of $G$.

\paragraph{Equivalence on trees.}
Clearly, any problem that can be defined with the node-edge formalism can be also expressed as a standard \LCL{}. We now show that the node-edge formalism and the standard \LCL{} formalism are in some sense equivalent, if we restrict to trees.
\begin{claim}
	\label{claim:LCLequivalence}
	For any \LCL{} problem $\Pi$ with checkability radius $r$ we can define an \emph{equivalent} node-edge checkable problem $\Pi'$. In other words, given a solution for $\Pi'$, we can find, in $O(r)$ rounds, a solution for $\Pi$, and vice versa.
\end{claim}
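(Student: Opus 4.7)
My plan is to apply the standard \emph{compression trick}: the outputs of $\Pi'$ will encode, at each half-edge, the entire labeled $r$-ball around the nearby node, so that the radius-$r$ checks of $\Pi$ can be rephrased as node- and edge-local checks in $\Pi'$. Because $\Pi$ lives on graphs of bounded degree $\Delta$ and $r$ is a constant, each $r$-ball contains only $O(\Delta^r)=O(1)$ half-edges, so it has a finite encoding and $\Sigma_{\mathrm{out}}$ for $\Pi'$ stays of constant size.

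First I would define $\Pi'$ in the node-edge formalism as follows. An output on a half-edge $(v,e)$ will be a pair $(B,\hat e)$, where $B$ is an isomorphism type of a labeled rooted graph describing the putative $r$-ball around $v$ (including all inputs and proposed $\Pi$-outputs inside that ball), and $\hat e$ is a designated root-incident edge of $B$ that is supposed to correspond to $e$. The node constraint at $v$ would require that (i) all $\deg(v)$ incident half-edges carry labels with identical $B$; (ii) the designated edges $\hat e$ are pairwise distinct and together exhaust the root-incident edges of $B$; and (iii) $B$ is a valid $r$-neighborhood in the constraint set $C$ of $\Pi$. The edge constraint on $e=\{u,v\}$ would require that the two encoded balls agree on their overlap, i.e.\ the rooted $(r-1)$-subball reached through $\hat e_v$ inside $B_v$ is isomorphic, with matching inputs and outputs, to the corresponding rooted $(r-1)$-subball reached through $\hat e_u$ inside $B_u$. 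All of these are finite conditions on multisets, hence form a legitimate node-edge checkable problem.

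For the solution-level equivalence, given a $\Pi$-solution each node would gather its $r$-ball in $r$ rounds and output the corresponding pair $(B,\hat e)$ on each of its half-edges, producing a $\Pi'$-solution. In the other direction, given a $\Pi'$-solution each half-edge would output in $0$ rounds the $\Pi$-label stored at the position inside its encoded $B$ that corresponds to itself; the node constraints of $\Pi'$ guarantee that $B$ is a valid $C$-neighborhood, while the edge constraints guarantee that adjacent nodes agree on the shared portion of their balls, so the resulting $\Pi$-labeling satisfies every radius-$r$ constraint. This gives $O(r)$ rounds in each direction.

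The main delicate point, in my view, is showing that the node-level check---which inspects only an unordered multiset of incident half-edge labels---is strong enough to force all incident half-edges to encode the same $B$ and to cover every root-incident edge of $B$ exactly once. This is exactly where the designated edge $\hat e$ component is essential: a multiset does not record which half-edge carries which label, but it does record the collection of designated edges, so the constraint can insist that this collection is precisely the set of $\deg(v)$ root-incident edges of $B$. Once this is in place, the chain of implications ($\Pi'$-node constraint $\Rightarrow$ valid local ball, $\Pi'$-edge constraint $\Rightarrow$ global consistency of overlapping balls $\Rightarrow$ $\Pi$-correctness) goes through without further subtleties.
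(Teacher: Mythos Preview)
Your approach is essentially the paper's: encode the labeled $r$-ball together with a designated root-incident edge as the $\Pi'$ output, use the node constraint to force all incident half-edges to carry the same ball with pairwise distinct designated edges and to check that this ball lies in $C$, and use the edge constraint to check that the two encoded balls agree on their overlapping $(r-1)$-subballs. The two conversions you describe are also exactly the paper's.

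There is, however, one genuine gap in your $\Pi'\to\Pi$ direction. You write that ``the edge constraints guarantee that adjacent nodes agree on the shared portion of their balls, so the resulting $\Pi$-labeling satisfies every radius-$r$ constraint,'' but this implication needs the tree hypothesis, which you never invoke. What must be shown is that the encoded $B_v$ is isomorphic, as an input/output-labeled graph, to the \emph{actual} $r$-ball of $v$ in $G$ under the outputs you have just extracted; only then does $B_v\in C$ give what you want. On a tree this follows by chaining the edge-overlap constraints along the unique $v$--$u$ path: inductively one gets that $B_v$ and $B_u$ agree on the $(r-\mathrm{dist}(v,u))$-ball around $u$, which at distance $\le r$ pins down all the extracted labels and, because distinct branches of the tree are disjoint, also forces the topology of $B_v$ to match that of $G$. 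On a general graph the pairwise $(r-1)$-overlap checks do \emph{not} force $B_v$ to reflect the true topology around $v$; the paper even names triangle detection as a concrete obstruction. Since the claim is stated (and later used) in the setting of trees, you should make the tree assumption explicit and carry out the chaining argument, as the paper does.
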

\begin{proof}
	Given $\Pi = (\Sigma_{\mathrm{in}},\Sigma_{\mathrm{out}},C,r)$, we define $\Pi' = (\Sigma'_{\mathrm{in}},\Sigma'_{\mathrm{out}},C'_W,C'_B)$ as follows.
	\begin{itemize}
		\item $\Sigma'_{\mathrm{in}} = \Sigma_{\mathrm{in}}$;
		\item $\Sigma'_{\mathrm{out}}$ contains all the triples $(H,v,j)$ such that there exists a pair $(H,v) \in C$ and $1 \le j \le \delta(v)$, where $\delta(v)$ is the degree of $v$;
		\item $C'_W$ contains all the sets $\{(i_j,(H,v,j)) ~|~ 1 \le j\le \delta(v)\}$ satisfying that $(H,v,j) \in \Sigma'_{\mathrm{out}}$, $i_j \in \Sigma'_{\mathrm{in}}$, and the $j$-th port of $v$ in $H$ has input label $i_j$;
		\item $C'_E$ contains all the multisets $\{(i,(H,v,j)),(i',(H',v',j'))\}$ satisfying that the neighbor $u$ of $v$ reachable on $H$ through port $j$ has the same $(r-1)$-radius neighborhood of $v'$ in $H'$, the neighbor $u'$ of $v'$ reachable on $H'$ through port $j'$ has the same  $(r-1)$-radius neighborhood of $v$ in $H$, the $j$-th port of $v$ in $H$ has input label $i$, and the $j'$-th port of $v'$ in $H'$ has input label $i'$.
	\end{itemize}
	We now prove that, on trees, given a solution for $\Pi$ we can find, in constant time, a solution for $\Pi'$, and vice versa. In order to solve $\Pi'$ given a solution for $\Pi$, each node can spend $r$ rounds to gather its $r$-radius neighbor. Note that such neighborhood must be contained in $C$.
	Hence, there exists some pair $(H,v) \in C$ that corresponds to the neighborhood of the node. Each node outputs, on each port $j$, the triple $(H,v,j)$. The constraints $C_N$ are clearly satisfied, since the same $(H,v)$ pair is given for each port. The constraints $C_E$ are also satisfied, since the $(H,v)$ pairs given in output by the nodes come from the same global assignment.
	
	In order to solve $\Pi$ given a solution for $\Pi'$, each node $u$ of a graph $G$ can, in $0$ rounds, map the labeling $(H,v,j)$ of each of its ports $j'$ into the labeling of the $j$-th port of $v$ in $H$ (note that, in general, $j$ can differ from $j'$, that is, the $j$-th port of $v$ in $H$ may be different from the port of $u$ in $G$, but this is not an issue, as the constraints of an \LCL cannot refer to a specific port numbering). The solution is correct since, in order for the solution of $\Pi'$ to be locally correct everywhere, it must hold that the $(H,v)$ pairs given in output by the nodes must exactly encode the output of the nodes in their radius-$r$ neighborhood. This is not necessarily true in general graphs, but only on trees, and for example it is not possible to encode triangle-detection in this formalism.
\end{proof}
By \Cref{claim:LCLequivalence}, on trees, all \LCL{}s can be converted into an equivalent node-edge checkable \LCL{}, and note that the node-edge formalism is a special case of the black-white formalism where black nodes have degree $2$. To make our proofs easier to read, in the rest of the section we prove our results in the black-white formalism (where the degree of black nodes is $2$), but via \Cref{claim:LCLequivalence} all results also hold for the standard definition of \LCL{}s. We start by proving \Cref{thm:treediam}.

\begin{proof}[Proof of \Cref{thm:treediam}]
Recall that in the black and white formalism nodes are properly $2$-colored, input and output labels are on edges (that is, there is only one label for each edge, and not one for each node-edge pair), and the correctness of a solution can be checked independently by black and white nodes by just inspecting the labeling of their incident edges. Assume we are given a tree. We apply the following algorithm, that is split into $3$ phases.
\begin{enumerate}
	\item{\textbf{Rooting the tree.}} By iteratively 'removing' nodes with degree one from the tree, nodes can produce an orientation that roots the tree in  $O(D)$ \CONGEST rounds. This operation can be performed even if, instead of IDs, nodes are provided with an arbitrary edge orientation. In the same number of rounds, nodes can know their distance from the (computed) root in the tree. We say that nodes with the same distance to the root are in the same \emph{layer}, where leaves are in layer $1$, and the root is in layer $L=O(D)$.
	\item{\textbf{Propagate \etype s up.}} We process nodes layer by layer, starting from layer $1$, that is, from the leaves. Each leaf $u$ of the tree tells its parent $v$ which labels for the edge $\{u,v\}$ would make them happy. The set of these labels is what we call a \emph{\etype{}}.  A leaf $u$ is \emph{happy} with a label if the label satisfies $u$'s \LCL{} constraints in $\Pi$. Then, in round $i$, each node $v$ in layer $i$ receives from all children the sets of labels that make them happy, and tells to its parent $w$ which labels for the edge $\{v,w\}$ would make it happy, that is, it also sends a \emph{\etype}. Here $v$ is \emph{happy} with a label for the edge $\{v,w\}$ if it can label all the edges $\{\{v,u\} \mid u\text{ is a child of $v$}\}$ to its children such that all of its children are happy. In other words, it must hold that for any element in the \etype{} sent by $v$ to $w$, there must exist a choice in the sets previously sent by the children of $v$ to $v$, such that the constraints of $\Pi$ are satisfied at $v$. In $O(D)$ rounds, this propagation of sets of \etype s reaches the root of the tree.
	\item{\textbf{Propagate final labels down.}} We will later prove that, if $\Pi$ is solvable, then the root can pick labels that satisfy its own \LCL{} constraints and makes all of its children happy. Then, layer by layer, in $O(D)$ iterations, the vertices pick labels for all of their edges to their children such that their own \LCL{} constraints are satisfied and all their children are happy.  More formally, from phase $2$ we know that, for any choice made by the parent, there always exists a choice in  the \etype s  previously sent by the children, such that the \LCL{} constraints are satisfied on the node.
\end{enumerate} 
This algorithm can be implemented in $O(D)$ rounds in \CONGEST as the \etype s that are propagated in the second phase are subsets of the finite alphabet $\Sigma$. In the third phase, each vertex only has to send one final label per outgoing edge to its children.

Let $L_{v,u}$ be the \etype{} received by node $v$ from its child $u$. We now prove that, if $\Pi$ is solvable, then there is a choice of labels, in the \etype s received by the root, that makes the root happy. Since $\Pi$ is solvable, then there exists an assignment $\phi$ of labels to the edges of the tree such that the constraints of $\Pi$ are satisfied on all nodes.  We prove, by induction on the layer number $j$, that every edge $\{u,v\}$ such that $u$ is in layer $j$ and $v$ is the parent of $u$, satisfies that $\phi(\{u,v\}) \in L_{v,u}$. For $j=1$ the claim trivially holds, since leaves send to their parent the set of all labels that make them happy. For $j>1$, consider some node $v$ in layer $j$. Let $u_1,\ldots,u_d$ be its children. By the induction hypothesis it holds that $\phi(\{v,u_i\}) \in L_{v,u_i}$. Let $p$ be the parent of $v$ (if it exists). Since $\phi$ is a valid labeling, and since it is true that if the parent labels the edge $\{v,p\}$ with label $\phi(\{v,p\})$ then $v$ can pick something from its sets $L_{v,u_i}$ and be happy, then the algorithm puts $\phi(\{v,p\})$ into the set $L_{p,v}$.
Hence, every set $L_{r,u_i}$ received by the root $r$ contains the label $\phi(\{r,u_i\})$, implying that there is a valid choice for the root.
\end{proof}


While this process is extremely simple, its runtime of $O(D)$ rounds is rather slow. In order to obtain the $O(n^{1/k})$ and $O(\log n)$ \CONGEST algorithms required for proving \Cref{thm:gapCONGESTrakeCompress-treePolyGap}, we need a more sophisticated approach.
 As done by prior work in \cite{CP19,Chang20}, we  decompose the tree into \emph{fewer} layers, and show that, the mere existence of a fast algorithm for the problem implies that, similar to the algorithm of \Cref{thm:treediam}, it is sufficient to propagate constant sized \etype s between the layers; thus we obtain a complexity that only depends on the number of layers and their diameter. The rest of the section is split into two parts. In \Cref{ssec:treeDecomp} we restate results from \cite{CP19,Chang20} on decompositions of trees for the \LOCAL model to show that they can immediately be implemented in the \CONGEST model with the same guarantees. In \Cref{ssec:suplogarithmicGaps} we prove \Cref{thm:gapCONGESTrakeCompress-treePolyGap}.

\subsection{Generalized Tree Decomposition}
\label{ssec:treeDecomp}

In this section we present a modified version of the generalized \rake{} \& \compress{} algorithm from \cite{CP19,Chang20}, which is a generalization of the Miller and Reif tree contraction procedure \cite{Miller1985}.

\paragraph{Notations.} For a graph $G=(V,E)$ and two vertices $v,u$, let $\mathrm{dist}(v,u)$ be the hop distance of a shortest path between $v$ and $u$. For a set $M\subseteq V$, we have $\mathrm{dist}(v,M) = \min_{u\in M} \mathrm{dist}(v,u)$.
For $\alpha, \beta\geq 1$, an $(\alpha, \beta)$-ruling set of a graph $G$ is a subset $M\subseteq V$  of the nodes such that $\mathrm{dist}(v, M\setminus \{v\})\geq \alpha$ for all $v\in M$ and $\mathrm{dist}(v,M)\leq \beta$ for all $v\in V$. If $\alpha$ and $\beta$ are constants and $\beta \ge \alpha-1$, then in constant degree graphs $(\alpha,\beta)$-ruling sets can be computed in $O(\logstar |\mathcal{S}|)$ \CONGEST rounds, where $S$ is the size of the ID space \cite{Linial92}. Actually, it is sufficient to have an input  distance coloring with $S$ colors, that is, a coloring in which each color only appears at most once in each $O(1)$-hop neighborhood, where the constant depends on $\alpha$ and $\beta$.

\paragraph{Modified generalized tree decomposition.} The following tree decomposition algorithm depends on two parameters, $\gamma$ (potentially a function of $n$) and $l$ (a constant), and computes a partition of the vertex set of a tree graph into layers satisfying several properties that are stated in \Cref{lem:rakeCompress}.

The algorithm proceeds in iterations, in each of which one layer is produced and removed from the graph. At the core of each iteration are two operations that we define next. The \rake{} operation removes nodes of degree $1$ (if two adjacent nodes have degree $1$ the \rake{} operation only removes one of them by exploiting an arbitrary symmetry breaking between the nodes, e.g., an arbitrary orientation of the edge). The \compress{} operation removes all nodes that have degree $2$ and are contained in a path of degree-$2$ nodes of length at least $l$, in the graph induced by the remaining nodes. In both operations, all degrees are with respect to the graph induced by the nodes that have not yet been removed.

The decomposition algorithm begins with the full tree $G=(V,E)$ (no nodes have been removed initially). In iteration $i=1,2,\ldots$ do the following.
\begin{enumerate}
	\item Do $\gamma$ \rake{} operations.
	\item Do one \compress{} operation.
\end{enumerate}
We denote by $V^R_i$ the set of nodes removed during a \rake{} operation in iteration $i$, and by $V^C_i$ the set of nodes removed during a \compress{} operation during iteration $i$. We call $V^R_i$ and $V^C_i$ \emph{rake layers} and \emph{compress layers}, respectively, and their nodes are called \emph{rake nodes} and \emph{compress nodes}. The modification we introduce over the decomposition in \cite{Chang20} is that, in addition to the layers, we split each rake layer $V^R_i$ into $\gamma$ \emph{sublayers} $V^R_{i,j}$, for $1\le j \le \gamma$, where nodes belong to  $V^R_{i,j}$ if they have been removed during the $j$th \rake{} operation of iteration $i$. For simplicity, we sometimes refer to a compress layer $V^C_i$  both as a layer and as a sublayer of its own layer.
We assign a total order on all sublayers based on when they have been created, and we define a sublayer $V$ to be \emph{higher} than another sublayer $V'$ if $V$ is created after $V'$.

\textbf{The post-processing step.} In this step, some  compress nodes in long paths  are promoted to a higher layer. More precisely, some nodes of layer $V^C_i$, are moved into the rake sublayer $V^R_{i+1,1}$, such that all paths remaining in compress layers have length between $l$ and $2l$, and no two nodes that are adjacent to each other are both promoted, and nodes that are endpoints of a path (before the promotion) are never promoted. The essential ingredient to implement this step is a modified $(l+1,l)$-ruling set algorithm on subgraphs with maximum degree $2$. The property that no adjacent nodes and no endpoints are promoted is used in the proof of \Cref{lem:rakeCompress}.

The following is an analog of what is needed in \cite{Chang20}, which we prove to hold also for our modified decomposition and in \CONGEST.
\begin{lemma}[Modified generalized tree decomposition]
	\label{lem:rakeCompress}
Let $S$ be the size of the ID space or the number of colors in a $2l$-distance coloring. Let $\gamma=O(n^{1/k})$ where $k\in \NN$ and $l=O(1)$ be parameters.
	The modified generalized tree decomposition can be implemented in $O(\logstar |\mathcal{S}|+ k\gamma)$ rounds in the \CONGEST model and yields a decomposition into $2k-1$ layers $V_1^R = (V^R_{1,1},\ldots,V^R_{1,\gamma}), \ldots, V_{k}^R  = (V^R_{k,1},\ldots,V^R_{k,\gamma})$, $V_1^C, \ldots, V_{k-1}^C$ such that the following hold.
	\begin{enumerate}
		\item Compress layers: The connected components of each $G[V_i^C]$ are paths of length in $[l,2l]$, the endpoints have a neighbor in a higher layer, and all other nodes do not have neighbors in a higher layer.
		\item Rake layers: The diameter of the connected components in $G[V_i^R]$ is $O(\gamma)$, and at most one node has a neighbor in a layer above.
		\item The connected components of each sublayer $G[V^R_{i,j}]$ consist of isolated nodes.
	\end{enumerate}

	We obtain the same guarantees in $O(\logstar |\mathcal{S}| + \log n)$ \CONGEST rounds if $\gamma=1$ and $l=O(1)$. In this case, the number of layers is $O(\log n)$. 
	If $\gamma=D$, the decomposition only consists of one rake layer with $D$ sublayers, can be computed in $O(\gamma)$ \CONGEST rounds and unique IDs can be replaced with the weaker assumption of an arbitrary orientation of the edges. 	
\end{lemma}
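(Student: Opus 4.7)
The plan is to show that each primitive operation used in the decomposition can be implemented in \CONGEST with $O(\log n)$-bit messages and without blowup in round complexity, and then invoke the existing analysis from \cite{CP19,Chang20} for the iteration count and the structural properties, making only small additions for the new ``sublayer'' structure and the post-processing step.

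I would begin with the per-iteration cost. Since the underlying graph is of bounded maximum degree (an \LCL{} comes with constant $\Delta$), each node can maintain its residual degree in the currently remaining subgraph by exchanging $O(1)$-bit ``still alive'' flags with its neighbors, so a single \rake{} operation costs $O(1)$ \CONGEST rounds; if two adjacent leaves exist we keep one based on the arbitrary edge orientation (available from the IDs or from the distance coloring). A single \compress{} operation costs $O(l)=O(1)$ rounds: a degree-$2$ survivor checks whether it lies on a path of $\ge l$ consecutive degree-$2$ survivors by $l$-hop exploration along its two remaining edges. The post-processing step reduces to computing a $(l+1,l)$-ruling set on a graph of maximum degree $2$, which is possible in $O(\logstar|\mathcal{S}|)$ \CONGEST rounds via Linial's color reduction on the input distance coloring (equivalently on unique IDs); the extra constraint of not promoting path endpoints is local since each endpoint detects a neighbor outside the current compress layer in one round. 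Hence iteration $i$ costs $O(\gamma+\logstar|\mathcal{S}|)$ \CONGEST rounds (the $\logstar$ term only arises once, since subsequent ruling-set calls can reuse the initial low-distance coloring).

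Next, I would argue the iteration count for each parameter regime exactly as in \cite{Chang20}: with $\gamma=\Theta(n^{1/k})$ rakes per iteration, a standard Miller--Reif-style amortization on the contracted tree gives $k$ iterations until the graph is empty, yielding the total of $O(\logstar|\mathcal{S}|+k\gamma)$ rounds. For $\gamma=1$, the same amortization degenerates to the classical $O(\log n)$-iteration bound, and at that point no compress step is actually needed beyond the standard analysis, giving the stated $O(\logstar|\mathcal{S}|+\log n)$. For $\gamma=D$, a single iteration of $D$ \rake{} operations exhausts the tree (every node is at distance at most $D$ from a leaf after orienting the tree towards a pseudo-root induced by the edge orientation), there is no compress layer, and no ruling-set subroutine is invoked, so IDs can be replaced with just an orientation of the edges.

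Finally, the three structural properties. Property~3 is immediate: the $j$-th \rake{} operation of iteration $i$ simultaneously removes only leaves of the residual graph, and the tie-breaking rule between adjacent leaves guarantees that $V^R_{i,j}$ is an independent set in the residual graph, hence each connected component of $G[V^R_{i,j}]$ is a single node. Property~2 follows because each node promoted to $V^R_{i,j}$ has, in the residual graph just before its removal, exactly one neighbor, which lies in some higher sublayer $V^R_{i,j'}$ with $j'>j$ or in a higher layer; walking along these pointers shows that any connected component of $G[V^R_i]$ is a ``caterpillar'' of depth at most $\gamma$ rooted at the unique node that has a neighbor in a layer above $V_i^R$, so its diameter is $O(\gamma)$. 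Property~1 follows from the post-processing: before the ruling-set step a compress component is a maximal degree-$2$ path of length $\ge l$, and the ruling set of promoted nodes breaks it into sub-paths of length in $[l,2l]$; because endpoints are never promoted and no two promoted nodes are adjacent, the endpoints of each remaining compress sub-path are precisely the original endpoints or promoted-neighbors (hence have a neighbor in a higher layer), while the interior nodes keep both neighbors inside the compress layer.

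The main obstacle is not the \CONGEST simulation itself, which is routine given the constant degree, but making the caterpillar analysis for the sublayer refinement precise enough to conclude the $O(\gamma)$ diameter bound for rake-layer components; once that structural invariant is established as an inductive consequence of ``each sublayer is independent'' together with ``each removed leaf has a unique neighbor above,'' the rest of the lemma follows mechanically from the corresponding \LOCAL{} statement in \cite{CP19,Chang20}.
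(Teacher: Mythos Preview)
Your approach matches the paper's: argue that each primitive (\rake{}, \compress{}, ruling-set post-processing) costs $O(1)$ or $O(\logstar|\mathcal{S}|)$ \CONGEST rounds, cite \cite{CP19,Chang20} for the iteration count and for Properties~1--2, and verify the new Property~3 on sublayer independence. The paper in fact defers Properties~1 and~2 entirely to \cite{Chang20}, so your sketches of those go beyond what is needed; the paper also runs the ruling-set post-processing once at the end, simultaneously on all compress layers, rather than per iteration, but your accounting reaches the same bound.

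There is, however, a real gap in your argument for Property~3. You only argue that the nodes removed by the $j$-th \rake{} operation form an independent set, but the sublayer $V^R_{i+1,1}$ also contains the nodes \emph{promoted} from $V^C_i$ during post-processing. You must still check (a) that promoted nodes are pairwise non-adjacent, which is precisely the ``no two adjacent nodes are promoted'' clause, and (b) that no promoted node is adjacent in $G$ to a genuinely raked node of $V^R_{i+1,1}$. Part (b) is where the ``endpoints are never promoted'' clause is used: a non-promoted vertex of $V^R_{i+1,1}$ is still present in the residual graph just before the $i$-th compress, so any neighbor it has in $V^C_i$ is one of the two residual-neighbors of a degree-$2$ compress node and hence must be a path endpoint, which is never promoted. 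This is exactly the piece the paper adds for Property~3; without it, independence of $V^R_{i+1,1}$ is not established.
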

\begin{proof}
	The first two items are proven in \cite{Chang20}, and do not change by our additional sublayer notation. For the last item, consider some sublayer $V_{i,j}^R$ for $1\leq i\leq k$ and $1\leq j\leq \gamma$. The nodes that are added to the sublayer by a \rake{} operation are an independent set. For $i>1$ and $j=1$, consider the set of nodes  $P_{i-1}^C$ that are promoted to rake layer $V^R_{i,1}$ from the compress layer $V^C_{i-1}$. nodes in $P_{i-1}^C$ form an independent set and no vertex in $P_{i-1}^C$ can be adjacent to a vertex in $V^R_{i,1}\setminus P_{i-1}^C$ as the endpoints of paths are never promoted.

	Next, we argue about the \CONGEST implementation. 
The \rake{} operation requires nodes to know their degree in the graph induced by nodes that are not already removed, which can be done with no overhead in \CONGEST.  To decide whether to be removed in a \compress{} iteration, a node needs to know whether it belongs to a path of degree-$2$ nodes of length at least $l$ in the graph induced by remaining nodes. This can be done in $O(l)=O(1)$ rounds in \CONGEST. The post-processing step requires to compute an $(l+1,l)$-ruling set in paths, and nodes can locally detect if they are part of the subgraph on which this algorithm is executed, which can be done in $O(l+\logstar |\mathcal{S}|)$ rounds (given unique IDs from an ID space of size $|\mathcal{S}|$ or a large enough distance coloring with $S$ colors). This is executed simultaneously on all layers.
\end{proof}

\subsection{Superlogarithmic Gaps}
\label{ssec:suplogarithmicGaps}

In this section we  prove \Cref{thm:gapCONGESTrakeCompress-treePolyGap}. We start by defining the notion of \emph{class}, an object essential to show how to obtain a faster variant of the $O(D)$-round algorithm presented before, while keeping it bandwidth efficient. Recall that we are considering the black and white formalism, where a solution is given by labeling each edge. Consider a tree $G$, and a subtree $H$ of $G$ connected to $G \setminus H$ through a set of incoming edges and a set of outgoing edges. Informally, a class captures how, in a correct label assignment, the labeling for the outgoing edges can depend on the labeling of the incoming edges.

\begin{definition}[\etype{}, class, maximal class, independent class]
	Let $\Pi$ be an \LCL{} problem given in the black and white formalism. Consider a connected subgraph $H\subseteq G$ and a partition of $E(H, G\setminus H)$ into incoming edges $F_{\mathrm{in}}$ and outgoing edges $F_{\mathrm{out}}$.
	Further, consider sets of labels $\mathcal{L}_{\mathrm{in}}=(L_{e})_{e\in F_{\mathrm{in}}}$ (that is, we are given a set of labels $L_e$ for each incoming edge $e$).
	The \emph{\etype} of an edge $e$ is the set of labels $L_e$ associated to the edge $e$.
	Then, a \emph{feasible labeling} of $E(H)$ and $F_{\mathrm{out}}$ with regard to  $\mathcal{L}_{\mathrm{in}}$ is a tuple 	$\big(L_{\mathrm{out}}, L_{\mathrm{in}}, L_H\big)$ where:
	\begin{itemize}
		\item $L_{\mathrm{in}}$ is a labeling $(l_e)_{e\in F_{\mathrm{in}}}$ of $F_{\mathrm{in}}$ satisfying $l_e\in (\mathcal{L}_{\mathrm{in}})_e$ for all $e \in F_{\mathrm{in}}$,
		\item $L_{\mathrm{out}}$ is a labeling of ${F_{\mathrm{out}}}$,
		\item $L_H$ is a labeling of  $E(H)$,
		\item the output labeling of the edges incident to nodes of $H$ given by the $L_{\mathrm{out}}, L_{\mathrm{in}},$ and $L_H$, and the provided input labeling for the edges incident to nodes of $H$, is such that all node constraints of each node $v\in V(H)$ are satisfied.
	\end{itemize}
	Also, given $\Pi, H, F_{\mathrm{in}}, F_{\mathrm{out}}$, and  $\mathcal{L}_{\mathrm{in}}$, we define the following:
	\begin{itemize}
		\item a \emph{class}  is a set of feasible labelings,
		\item a \emph{maximal class}  is the unique inclusion maximal class, that is, it is the set of all feasible labelings,
		\item an \emph{independent class} is a class $A$ such that
		for any $\big(L_{\mathrm{out}}, L_{\mathrm{in}}, L_H\big)\in A$ and $\big(L'_{\mathrm{out}}, L'_{\mathrm{in}}, L'_H\big)\in A$ the following holds. Let $L''_{\mathrm{out}}$ be an arbitrary combination of $L_{\mathrm{out}}$ and $L'_{\mathrm{out}}$, that is, $L''_{\mathrm{out}} = (l_e)_{e\in F_{\mathrm{out}}}$ where $l_e \in \{ (L_{\mathrm{out}})_e, (L'_{\mathrm{out}})_e \}$. There must exist some $L''_H$ and $L''_{\mathrm{in}}$ satisfying $\big(L''_{\mathrm{out}}, L''_{\mathrm{in}}, L''_H\big)\in A$.
	\end{itemize}
	Note that the maximal class with regard to some given $\Pi, H, F_{\mathrm{in}}, F_{\mathrm{out}},$ and $\mathcal{L}_{\mathrm{in}}$, is unique.
\end{definition}

In other words, given a subgraph $H$ with multiple incoming and outgoing edges, where to each incoming edge is associated a set of labels (the \etype{} of the edge), a class represents valid assignments of labels for the outgoing edges, that can be completed into valid assignments for the whole subgraph, such that the labels assigned to the incoming edges belong to their \etype{}. An independent class captures a stronger notion, where, even if labels for the outgoing edges are chosen without coordination, it is still possible to complete the rest of the subgraph.

The above definition is mainly used in only two kinds of subgraphs. In particular, we apply the tree decomposition defined in \Cref{ssec:treeDecomp} and decompose the given tree in many subtrees, that could be of two kinds: either isolated nodes, that are the ones removed with a \rake{} operation (or nodes that have been removed with a \compress{} operation and then promoted into a higher rake layer), or short paths, that are the ones obtained by using a ruling set algorithm to break the (possibly long) paths removed with a \compress{} operation into paths of constant length. We propagate \emph{\etype s} through the layers of the tree decomposition. More in detail, we consider the two following cases for propagating \etype s:
\begin{itemize}
	\item \textbf{Case 1 (isolated nodes, outgoing edge):} the graph $H$ consists of a single node $v$ with a single outgoing edge $e$ ($F_{\mathrm{out}} = \{e\}$) and a constant number of incoming edges $F_{\mathrm{in}}$ (potentially zero).
	Let $B$ be the maximal class of $H$ with regard to $\Pi, H, F_{\mathrm{in}}, F_{\mathrm{out}},$ and $\mathcal{L}_{\mathrm{in}}$. Then, omitting these dependencies, we denote $g(v)=\bigcup_{(L_{\mathrm{out}}, L_{\mathrm{in}}, L_H)\in B}\{(L_{\mathrm{out}})_e\}$. We have $g(v)\subseteq \Sigma$. Later, we will use the value of $g(v)$ as the \etype{} that we associate to the edge $e$.
	\begin{observation}
		\label{obs:computingRakeType}
		Each node $v$ can compute $g(v)$, given the value of $g(u)$ for each incoming edge $\{v,u\}$.
	\end{observation}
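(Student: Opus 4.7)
The plan is to show that $g(v)$ is entirely determined by $\Pi$, the input labels on edges incident to $v$, and the incoming \etype s $g(u)$, so $v$ can compute it with no further communication once it has received those $g(u)$'s. The computation is essentially a brute-force enumeration whose size is bounded by a constant depending only on $\Pi$ and $\Delta$.

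First I would unfold the definition in the case $H=\{v\}$. Since $E(H)=\emptyset$, a feasible labeling reduces to a tuple $(L_{\mathrm{out}}, L_{\mathrm{in}}, L_H)$ with $L_H$ trivial, $L_{\mathrm{out}}=\ell$ a single label on the outgoing edge $e$, and $L_{\mathrm{in}}=(l_u)_{\{v,u\}\in F_{\mathrm{in}}}$ a choice of $l_u\in g(u)$, such that the multiset of input-output label pairs on the edges incident to $v$ lies in the appropriate constraint set of $\Pi$ (either $C_W$ or $C_B$, according to the color of $v$ in the black-white formalism). By definition, $\ell\in g(v)$ iff at least one such tuple exists.

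Accordingly, the algorithm at $v$ is: for each candidate $\ell\in\Sigma_{\mathrm{out}}$ and each tuple $(l_u)_{u\in F_{\mathrm{in}}}$ with $l_u\in g(u)$, check whether the resulting multiset of input-output pairs on the edges incident to $v$ lies in $v$'s constraint set; include $\ell$ in $g(v)$ iff some tuple passes the check. Correctness is immediate from the unfolded definition above. The enumeration is finite because $|F_{\mathrm{in}}|\le\Delta$ and $|\Sigma_{\mathrm{in}}|,|\Sigma_{\mathrm{out}}|$ are constants, so the whole procedure runs in constant local time at $v$ and uses only the data already available to $v$.

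The only subtle point worth spelling out is that $v$ must also know the input labels of all incident edges, including those in $F_{\mathrm{in}}$ and the outgoing edge $e$; this is not an obstacle since the input labeling is part of the initial instance and every half-edge incident to $v$ is local to $v$. Hence no obstacle arises, and the observation follows.
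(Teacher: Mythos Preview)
Your proof is correct. In the paper this statement is recorded as a bare observation with no proof; your argument simply unfolds the definition of the maximal class in the special case $H=\{v\}$ and notes that the resulting brute-force enumeration over $\Sigma_{\mathrm{out}}\times\prod_{u} g(u)$ is of constant size, which is precisely the intended justification.
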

	\item \textbf{Case 2 (short paths):} the graph $H$ is a path of constant length. The endpoints of the path are $v_1$ and $v_2$, $F_{\mathrm{out}} = \{e_1,e_2\}$, and edge $e_i$ ($i \in \{1,2\}$) is incident to $v_i$. Let $B$ be the maximal class of $H$.
	We assume to be given a function $f_{\Pi,\gamma}$, that depends only on $\Pi$ and some parameter $\gamma$, that maps a class $B$ into an independent class $B'$. For $i \in \{1,2\}$, let $g(v_i) = \bigcup_{(L_{\mathrm{out}}, L_{\mathrm{in}}, L_H)\in f_{\Pi}(B)}\{(L_{\mathrm{out}})_{e_i}\}$. We have $g(v_i)\subseteq \Sigma$.  	Later, we will use the value of $g(v_i)$ as the \etype{} that we associate to the edge $e_i$.
	\begin{observation}
		\label{obs:computingCompressType}
		The values of $g(v_i)$, for $i\in\{1, 2\}$, can be computed given all the information associated to the path and its incident edges, that is, the input provided for all edges incident to its nodes, and the \etype s of the incoming edges.
	\end{observation}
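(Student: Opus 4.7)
The plan is to show that, after $O(l) = O(1)$ rounds of local gathering, each endpoint $v_i$ can compute $g(v_i)$ entirely from local information, in a way that is consistent across $v_1$ and $v_2$. The key observation is that everything involved has constant size, so all reasoning is purely combinatorial and local.

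First, I would have each endpoint gather, in $O(l)$ \CONGEST rounds, a complete description of the subgraph: the list and ordering of the nodes of $H$, the input labels on all edges in $E(H) \cup F_{\mathrm{in}} \cup F_{\mathrm{out}}$, and the \etype{} $L_e \subseteq \Sigma$ associated with each incoming edge $e \in F_{\mathrm{in}}$. Since $|V(H)|=O(1)$, black nodes have degree $2$, white nodes have degree at most $\Delta = O(1)$, and each \etype{} is a subset of the constant-size alphabet $\Sigma$, the total amount of information is $O(1)$ and can be pipelined along the path with $O(1)$ messages per round, each of $O(\log n)$ bits.

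Second, each endpoint performs a purely local brute-force computation on the gathered data: enumerate all tuples $(L_{\mathrm{out}}, L_{\mathrm{in}}, L_H)$ that label $F_{\mathrm{out}}$, $F_{\mathrm{in}}$ and $E(H)$, and retain exactly those whose restriction to $F_{\mathrm{in}}$ is in $\mathcal{L}_{\mathrm{in}}$ and that satisfy the per-node constraints of $\Pi$ at every vertex of $H$ (using the input labels gathered in the first step). By definition this is the unique maximal class $B$. Next, apply the globally known function $f_{\Pi,\gamma}$ to obtain the independent class $B' = f_{\Pi,\gamma}(B)$, and finally extract $g(v_i) = \bigcup_{(L_{\mathrm{out}}, L_{\mathrm{in}}, L_H)\in B'} \{(L_{\mathrm{out}})_{e_i}\}$ from $B'$.

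The only non-routine point is making sure that $v_1$ and $v_2$ read off their values $g(v_1), g(v_2)$ from the \emph{same} class $B'$, rather than from two equally valid but distinct classes, so that the pair is jointly realizable. This is automatic: both endpoints collect identical data (once a canonical orientation/ordering of the path is fixed, which is already provided by the symmetry-breaking ruling set used in the decomposition of \Cref{lem:rakeCompress}); the maximal class $B$ is uniquely determined by that data by definition; and $f_{\Pi,\gamma}$ is a fixed deterministic function of the globally known $\Pi$ and $\gamma$. I do not foresee a real obstacle: the observation essentially asserts that a well-defined local computation on a constant-sized object yields matching outputs at the two endpoints, which follows directly from the definitions once the bounded-information argument is in place.
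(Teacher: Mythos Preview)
Your proposal is correct and follows the same reasoning as the paper, which treats this observation as immediate from the definitions and gives no separate proof. Your first two steps (gather the constant-size description; brute-force the maximal class $B$, apply $f_{\Pi,\gamma}$, project to $e_i$) are exactly the intended unpacking of the definition.

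One small remark: the consistency concern you raise is real but the paper sidesteps it rather than resolving it the way you do. In the algorithm (see the ``Compress layers'' step of phase~2), a single endpoint $v_1$ gathers everything, computes \emph{both} $g(v_1)$ and $g(v_x)$, and broadcasts them; so consistency is automatic because only one node ever evaluates $f_{\Pi,\gamma}$. Your alternative---both endpoints compute independently and agree because the input data and $f_{\Pi,\gamma}$ are deterministic---is equally valid and arguably cleaner, but it does rely on the canonical ordering you mention, whereas the paper's version does not.
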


\end{itemize}

While the definitions of $g(v)\subseteq \Sigma$ and $g(v_i)\subseteq \Sigma$ have a technical appearance and are crucial to our \CONGEST implementation, their intuition is easier to grasp. In the first case we have a node with possibly many incoming edges on which a \etype{} is already assigned, and $g(v)$ denotes the \etype{} obtained for its outgoing edge. This is essentially the same operation performed on each node of the tree in the diameter algorithm of \Cref{thm:treediam}. In fact, the layers of a rooted tree can be seen as \emph{rake} layers of a tree decomposition.
The second case is what allows us to handle \emph{compress} layers. In this case, we have a path of constant length with many incoming edges attached to them on which a \etype{} is already assigned, and we use some given function $f_{\Pi,\gamma}$ to compute the \etype{} of the two outgoing edges as a function of the incoming \etype s.

In short, following the same scheme used in \cite{Chang20}, we use the generalized \rake{} \& \compress{} algorithm (\Cref{ssec:treeDecomp}) to decompose the graph into few layers consisting of isolated nodes (case 1) and short paths (case 2). Then, we begin to propagate \etype s (that are sets of labels) from the bottom layer to the top layer. Inside the paths, we use a given function $f_{\Pi,\gamma}$ to map a maximal class computed from the incoming requirements into an independent class. From this independent class we can compute the \emph{\etype} of each edge going to higher layers, that is, we can associate a set of labels to each outgoing edge, such that for any choice over these sets, the solution inside the path can be completed. Once we are at the top, that is, once we reach a layer with no outgoing edges, we select a labeling from the class, propagate labels down and complete the labeling in each layer until we reach the bottom of the tree. The definition of independent class guarantees that we do not have any issue when we complete a solution inside the paths, no matter what are the two choices obtained for the outgoing edges from the layers above. Also, note that the function $f_{\Pi,\gamma}$ cannot be an arbitrary function mapping classes to independent classes (since it restricts the possible choices, such a function may give no choice for the layers above), but, as shown in \cite{Chang20}, a \emph{good} function $f_{\Pi,\gamma}$ always exists, and can be computed directly from the description of $\Pi$ (see \Cref{lem:ChangFunction}).

We next present the formal process for \Cref{thm:gapCONGESTrakeCompress-treePolyGap}. The proof and necessary modifications for \Cref{thm:treediam} follow at the end of the section.

\paragraph{\boldmath Algorithm (for all of the complexities above $\Omega(\log n)$).}
The following process depends on a given problem $\Pi$, and some parameters $\gamma$, that is possibly non-constant, and $l=l_{\Pi,\gamma}=O(1)$. Also, this process depends on a given function $f_{\Pi,\gamma}$, that takes in input a class and produces an independent class. In the proofs of \Cref{thm:gapCONGESTrakeCompress-treePolyGap} we choose both parameters such that this process produces a proper solution.

In the algorithm of \Cref{thm:treediam}, we start by rooting the tree, that is equivalent to decomposing the tree into at most $D$ layers, in which each vertex has at most one neighbor in a layer above.
Then, \etype s are propagated from lower layers to higher layers, until the root, a vertex (or layer) with no outgoing edges, picks a label and final labels are propagated downwards in decreasing layer order. Similar, in the following algorithm, we first decompose the tree into layers, then we propagate \etype s from lower layers to higher layers, we fix arbitrary labels when we reach vertices that are not adjacent to a higher layer, and then we propagate final labels down going through the layers in the opposite direction. The actual implementation does not have to be synchronized among all vertices in a layer. Instead, vertices can start their propagation once all of their respective neighbors have provided the necessary information.

\begin{enumerate}
	\item \textbf{Generalized tree decomposition: }Apply the generalized tree decomposition algorithm with parameters $(\gamma,l)$ to decompose the graph into $L$ layers $V_1^R = (V^R_{1,1},\ldots,V^R_{1,\gamma}), \ldots, V_{k}^R  = (V^R_{k,1},\ldots,V^R_{k,\gamma})$, $V_1^C, \ldots, V_{k-1}^C$. Note that for $\gamma = n^{1/k}$ and $l = O(1)$, $L =  2k-1$, while for $\gamma = 1$ and $l = O(1)$, $L = O(\log n)$.
	\begin{itemize}
		\item Each rake sublayer $V^R_{i,j}$ has at most one outgoing edge to a higher sublayer, and has an arbitrary amount of incoming edges from lower sublayers.
		\item Each compress layer is a path of length $l\leq x \leq 2\cdot l$, only the endpoints have one outgoing edge to higher sublayers, and each node can have an arbitrary (constant) number of incoming edges from lower sublayers.
	\end{itemize}
	\item \textbf{Propagate \etype s up} (intuitively, iterate through the layers and sublayers in increasing order)
	\begin{itemize}
		\item rake sublayers (nodes with an outgoing edge). Each connected component of a rake sublayer is composed of isolated nodes. Let $v$ be such a node,  let $\{v,w\}$ be its single outgoing edge, and let $Z$ be the union of the neighbors of $v$ in lower sublayers. Node $v$ waits until each node $u\in Z$ has determined $g(u)\subseteq \Sigma$, then it gathers $g(u)$ for all $u\in Z$. Then, $v$ computes $g(v)$ and sends it to $v$. This is possible due to \Cref{obs:computingRakeType}.
		This step takes $O(1)$ rounds per sublayer.
		\item Compress layers. Each such a layer forms a path. Let $v_1,\ldots,v_x$, $x\in [l,2l]$ be one arbitrary such path. Let $Z_i$ be the set of neighbors of $v_i$ in lower sublayers. Nodes in the path wait until, for all $i \in [x]$, all nodes $u \in Z_i$ have determined $g(u)$. Then, let $v_1$ gather all the information associated to the path and its incident edges, that is, the input provided for all edges incident to its nodes, and the \etype{} of the incoming edges, that is, $g(u)$ for all $u\in Z_i$, for all $i\in [x]$.
		From this information, due to \Cref{obs:computingCompressType}, $v_1$ is able to compute $g(v_1)$ and $g(v_x)$. Finally, $g(v_1)$ and $g(v_x)$ is broadcast to all the nodes of the path. The \etype{} of the edge outgoing from $v_1$ (resp. $v_x$) becomes $g(v_1)$ (resp. $g(v_x)$), and these \etype s are sent to neighbors of higher layers.
		 This takes $O(1)$ rounds per layer.
	\end{itemize}
	\item \textbf{Rake sublayers (nodes with no outgoing edges).} A special case of the previous one is given by rake sublayers with no neighbors in higher sublayers. In this case, after computing its class $B_v$, node $v$ selects an arbitrary element $((),L_{\mathrm{in}},L_H)$ from its class $B_v$, assigns to its incident edges the labeling given by $L_{\mathrm{in}}$ and $L_H$, and sends this choice to its incoming neighbors. This takes $O(1)$ rounds.
	\item \textbf{Propagate final labels down:} (intuitively, iterate through the layers and sublayers in decreasing order).
	\begin{itemize}
		\item Rake sublayers. Each node $v$, as soon as its single parent $u$ has committed to a final label for the edge $\{u,v\}$, picks a consistent labeling from its class $B_v$ for its remaining edges $F^v_{\mathrm{in}}$, and propagates its choice to its incoming neighbors.
		This takes $O(1)$ rounds per sublayer.
		\item Compress layers. Let $v_1,\ldots,v_x$, $x\in [l,2l]$ be a compress layer.
		As soon as the outgoing edges of $v_1$ and $v_x$ have received a label, nodes can complete the labeling for the edges of the path and all incoming edges by picking an arbitrary labeling, among those consistent with the assignment of the outgoing edges, from the independent class $B'$ of the path previously computed during the second phase. This takes $O(1)$ rounds per layer.
	\end{itemize}
\end{enumerate}

Chang in \cite{CP19,Chang20} characterized the set of parameters for which the above process works. In particular, it is shown that for all $\Pi$ and a suitable choice of parameters $\gamma$, $l$ and $f_{\Pi,\gamma}$, it cannot happen that $B_v$ becomes empty during step 3. The algorithms used in \cite{CP19,Chang20} are not exactly the same as the one that we described: instead of propagating \etype s to the layers above, they propagate virtual trees. While it makes no difference for the algorithms of \cite{CP19,Chang20}, since they are designed for the \LOCAL model, we claim that these algorithms can actually use less information, and provide a proof for completeness.

\begin{lemma}[\mbox{\cite{CP19,Chang20,Chang21}}]
	\label{lem:ChangFunction}
	Given a parameter $\gamma\in \{1\}\cup \{n^{1/k} \mid k\in \NN \}$ and an \LCL{} problem $\Pi$ solvable in the \LOCAL model in $O(\max\{\gamma,\log n\})$ rounds, there is a function $f_{\Pi,\gamma}$ and a constant $l_{\Pi,\gamma}$ such that the above process (that depends on $\Pi$, $\gamma$, $f_{\Pi,\gamma}$ and $l_{\Pi,\gamma}$) works,  that is, it never computes an empty class, and in particular in step 3 each vertex $v$ can select an element from its class $B_v$. If such a function $f_{\Pi,\gamma}$ and a constant $l_{\Pi,\gamma}$ exists, then it can also be computed given $\gamma$ and the description of $\Pi$. Moreover, given an arbitrary problem $\Pi$ and $\gamma$, it is decidable whether working $f_{\Pi,\gamma}$ and $l_{\Pi,\gamma}$ exist.
\end{lemma}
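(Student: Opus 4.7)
The plan is to adapt the existing constructions of Chang and Pettie \cite{CP19,Chang20,Chang21} for the \LOCAL model almost verbatim, observing that the only essential modification needed---replacing \emph{virtual trees} by \etype{}s as the information propagated between layers---does not weaken the argument. First, I would recall the \LOCAL-model version: if $\Pi$ is solvable in $O(\max\{\gamma,\log n\})$ rounds, one can construct a function that maps each class of (annotated) virtual trees arising on a compress path to an independent subclass, so that the \rake{} \& \compress{} procedure never produces an empty class. The existence proof proceeds by combining the hypothetical fast \LOCAL algorithm with indistinguishability arguments to extract, from any class of virtual trees, a canonical consistent response of the algorithm, which yields the required independence.

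Second, I would argue that \etype{}s are a sufficient statistic for the propagation, so Chang's function descends to a function on \etype{}-annotated classes. The key observation is that from the point of view of any node at a higher layer, the only information about a subtree $H$ hanging on an outgoing edge $e$ that matters for determining valid extensions is the set of labels on $e$ that are extensible to a valid labeling of $H$---which is precisely the \etype{} of $e$. Formally, if two virtual subtrees induce identical \etype{}s on $e$, they are interchangeable with respect to every class computation above $e$. Hence Chang's virtual-tree function factors through the \etype{} map, which defines $f_{\Pi,\gamma}$ and $l_{\Pi,\gamma}$ on classes of \etype{}-annotated compress paths with constant-degree incoming information.

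Third, for the computability and decidability claims, every object involved lives in a finite universe whose size depends only on $\Pi$ and $l$: an \etype{} is a subset of $\Sigma_{\mathrm{out}}$, a class on a compress path of length at most $2l$ with a constant number of \etype{}-annotated incoming edges is a subset of a finite set of feasible labelings, and there are therefore only finitely many candidate pairs $(l,f)$ to consider. For each candidate, testing whether the class-propagation procedure can ever reach an empty class reduces to a finite reachability problem in the (finite) space of classes, since the propagation rules are local and compositional. Hence existence of working $(f_{\Pi,\gamma}, l_{\Pi,\gamma})$ can be decided by brute-force enumeration, and when they exist they can also be exhibited.

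The main obstacle is the first step: showing that a working $f_{\Pi,\gamma}$ actually exists whenever $\Pi$ has the stated \LOCAL complexity. This is the technical heart of \cite{CP19,Chang20,Chang21}, and I would invoke it as a black box rather than re-derive it; the remaining work, namely the reduction from virtual trees to \etype{}s and the enumeration argument for decidability, is straightforward once the sufficient-statistic observation is in place.
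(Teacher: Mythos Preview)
Your proposal is correct and takes essentially the same approach as the paper: both invoke the existence result of \cite{CP19,Chang20} as a black box and then argue that \etype{}s are a sufficient statistic, so that Chang's virtual-tree function descends to a function on \etype{}-annotated classes. The paper makes the reduction slightly more concrete by explaining that, in the black-white formalism, Chang's step of fixing a label on a middle edge of each compress path already yields an independent class in the required sense, but the content matches your factorization argument.
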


\begin{proof}
 	We show that the function $f$ described in \cite{CP19,Chang20} can be used to implement our function $f_{\Pi,\gamma}$. First of all, \cite{CP19,Chang20} consider \LCL{}s in their general form, while we consider \LCL{}s in the black and white formalism. Since the black and white formalism with black degree two is a special case of the general case we can also apply their results in this special setting. The advantage of the black and white formalism is that, in many cases, it reduces the distance of dependencies between labels: if two subtrees $T_1$ and $T_2$ are connected through an edge, and the labeling on that edge is fixed, then the labeling in $T_1$ and $T_2$ can be completed independently. This is the intuitive reason why we can propagate \etype s. In contrast , \cite{CP19,Chang20} propagates much larger structures. Still, via \Cref{claim:LCLequivalence} all our results in the black and white formalism apply in the standard \LCL{} setting.  Hence, our setting can be seen as a special and simplified case of the setting studied in \cite{CP19,Chang20}. 
 	
	We first describe how the definitions of \emph{type} and \emph{class} in \cite{CP19,Chang20} relate to our definitions of a maximal class and \etype s.
	\begin{itemize}
		\item \textbf{Type:} 
		given a subgraph $H$ with two outgoing edges, a \emph{type} describes how a labeling can be fixed in the $r$-radius neighborhood within $H$ (for some constant $r$ large enough to stop any kind of dependency between the labeling outside $H$ and inside $H$) of the two outgoing edges, such that the labeling of $H$ and lower layers connected to it can be completed. Types are actually defined as an equivalence class, where two subgraphs $H_1$ and $H_2$ are in the same equivalence class if the possible valid labelings in the neighborhood of the outgoing edges of $H_1$ are exactly the same as the ones for $H_2$, where a labeling is valid if it can be completed on the rest of the subgraph and in lower layers. The notion of type corresponds to our notion of \emph{maximal class}, in the case of short paths. In the black-white formalism, in order to stop any kind of dependency between the labeling outside $H$ and inside $H$, it is not necessary to fix the labeling in a whole $r$-radius neighborhood, but it is enough to fix a labeling on a single edge. Hence, two subgraphs have the same type if and only if their maximal classes allow the same label assignments for the outgoing edges.
		 
		\item \textbf{Class:} given a subgraph $H$ with one outgoing edge, a \emph{class} describes how a labeling can be fixed in the $r$-radius neighborhood within $H$ of the outgoing edge, such that $H$ and lower layers connected to it can be completed. Similar to the case of types, classes are actually defined as equivalence classes. The notion of a class corresponds to our notions of  \emph{maximal class} and \emph{\etype}, in the case of isolated nodes. In particular, the \etype{} of the outgoing edge of a subgraph is the same as the class of that subgraph, while a maximal class can be seen as a \etype{} augmented with additional information for the nodes on how to complete a labeling.
	\end{itemize}
	More formally, we claim that types and classes of \cite{CP19,Chang20} are equivalent to the set of valid labelings allowed by a maximal class (note that, in the case of a single outgoing edge, we defined this object as \etype{}), assuming that the \etype s assigned to the incoming edges are maximal, that is, if it is not possible to assign a label not contained in the \etype{} and complete the labeling in the layers below. 
	In fact, types and classes of \cite{CP19,Chang20} are defined as follows. Suppose we are given two partially labeled subgraphs $H$ and $H'$, connected to the rest of the graph $G$ through a single edge in the case of classes, and two edges in the case of types. Suppose that $G \setminus H$ is completely labeled, including the edge between $G$ and $H$. Let $G'$ be the graph obtained by replacing $H$ with $H'$. $H$ and $H'$ are of the same type (or class) when the partial labeling of $G$ can be completed into a proper complete labeling if and only if the same is possible in $G'$. Clearly, this definition matches our definition of maximal class, assuming that the \etype s assigned to the incoming edges are maximal.

	We now summarize how the algorithms of \cite{CP19,Chang20} work. They work similarly as our algorithms, but with a notable exception. Instead of propagating types and classes to the layers above, they propagate virtual trees. In particular, every compress layer is handled as follows:
	\begin{enumerate}
		\item Replace the path with a canonical path that preserves its type (with a representative from the equivalence class).
		\item Use the function $f$ to fix the labeling in the middle of the path. The length of the labeled region is chosen such that, after fixing the labeling, there is no dependency between the labeling of the two sides. \label{chang:fixmiddle}
		\item Replace the path with a much longer path that still preserves its type.
		\item Replace the path with two copies of it, such that each endpoint of the original path is connected to only one (different) copy.
	\end{enumerate}
	After this process, each endpoint of a compress layer becomes connected to a single virtual tree (that is not connected to the other endpoint due to step 2 and 4), and this whole tree is propagated to higher layers. 
	After having handled all layers, including propagation through rake layers, we reach vertices with no neighbors in layers above. Let $W$ denote these vertices. Since every compress path has been replaced by two independent trees every node $v\in W$ only has virtual trees connected to it that are not connected to other nodes in $W$. There is one virtual tree at each neighbor of $v$ with an outgoing edge to $v$. Then, nodes in $W$ choose a labeling solely as a function of the class of the virtual trees rooted at them, and this choice is propagated down.
	
	We claim that it is not needed to propagate whole virtual trees, but only their classes, which in our notation corresponds to \etype s. Intuitively, the application of the function $f$ in the middle of the path does nothing else than stopping any possible dependency between the two endpoints of the path, and in the black-white formalism, in order to achieve this goal, it is enough to fix the labeling of a single edge. More in detail, let $v_1,\ldots, u,w,\ldots,v_x$ be the path on which $f$ has been applied to fix the label $\ell$ of its middle edge $e=\{u,w\}$.
	We fix the \etype{} of $e$ to be $\{\ell\}$. Then, we orient the edges of $v_1,\ldots, u$ towards $v_1$, and the edges of $w,\ldots,v_x$ towards $v_x$. Now, nodes in each path have only one outgoing edge, and can hence behave as \emph{rake} nodes. In this way, we obtain a set of possible labelings for each outgoing edge of the compress layer, such that for any choice of labels over the two sets, the labeling inside the path can be completed. Note also that the obtained set is the \emph{maximal} set satisfying this condition, for the same reason explained in the proof of \Cref{thm:treediam}. Since also normal rake layers propagate maximal sets, we obtain that our notion of \etype s and maximal classes is indeed equivalent to the notion of classes and types of \cite{CP19,Chang20}. Hence, since all operations done on the compress layers by \cite{CP19,Chang20} preserve the equivalence class of their types, we obtain that the \etype s computed for the two outgoing edges are exactly the same sets of labels that are allowed for the edges outgoing from the two virtual trees computed by \cite{CP19,Chang20}.
\end{proof}

\begin{example}
	Consider the $2$-coloring problem, that is a problem that requires $\Omega(n)$ rounds. We explain why there cannot be a function $f_{\Pi,\gamma}$ for $\gamma = \sqrt{n}$. In the above process we would have three layers of nodes, $V^R_1, V^C_1, V^R_2$. We would start propagating classes in the first rake layer $V^R_1$, and the classes of all such nodes would contain both allowed colors, since by fixing the color of a node we can always complete a $2$-coloring correctly towards the leaves. Then, we would compute classes in the compress layer $V^C_1$, and then convert these classes into independent classes. This would force us to assign a specific color to each endpoint of each path, since otherwise it would not hold that for any choice of output for the endpoints the solution inside the path is completable. After this, we would propagate classes in the rake layer $V^R_2$. Consider now a rake node in $V^R_2$ with two different neighbors in $V^C_1$ of opposite classes, and note that this situation may arise for any possible way of converting classes into independent classes. This node would obtain an empty class. 
\end{example}
\begin{proof}[Proof of \Cref{thm:gapCONGESTrakeCompress-treePolyGap}]
	In the \LOCAL model, it is already known that the gaps and the decidability results mentioned in the theorem statements hold \cite{Chang20}. Hence, it is enough to prove that the \LOCAL model does not add any power compared to the \CONGEST model. Assume we are given an \LCL{} problem $\Pi$ and a constant $\gamma \in \{1\}\cup\{n^{1/k}\mid k\in\NN\}$ such that there is an $O(\max\{\gamma,\log n\})$ rounds \LOCAL algorithm for $\Pi$.
	Due to \Cref{lem:ChangFunction}, the function $f_{\Pi,\gamma}$ and the constant $l = l_{\mathrm{\Pi,\gamma}}$ needed for the above process exist. In particular, all computed classes are non-empty and each vertex in step 3 can select a label. We now prove the necessary time and bandwidth bounds for the algorithm.
	
	All steps of the procedure can be implemented in the \CONGEST model in the claimed number of rounds. In fact, the \rake{} \& \compress{} procedure can be executed in the \CONGEST model due to \Cref{lem:rakeCompress}. Step 3 and Step 4 only require each node to send one label from $\Sigma$ on each incident edge. Step 2, for rake sublayers, only requires each node $u$ to send one label from $g(u) \subseteq \Sigma$. Step 2, for compress layers, only requires $O(1)$ rounds, since the information required to apply the function $f_{\Pi,\gamma}$ in a path $v_1,\ldots,v_{x}$ only depends on the values $g(u)\subseteq \Sigma$, $u\in Z_i$, and the inputs given to all edges incident to nodes of the path. Since the length of the path is constant, and since there are only a constant number of possible values for $g(u)$, then all this information can be gathered $O(1)$ rounds. All of these steps work with the claimed conditions on the ID space of size $\mathcal{S}$ or a distance-$2l$ input $S$-coloring.
	
	Hence, since each step requires $O(1)$ rounds, we only need to bound the total number of propagation steps that the algorithm needs to perform. Note that a sublayer $V$ is ready to compute a \etype{} for its outgoing edges when all incoming edges from lower layers already computed a \etype{}. Thus, the total number of propagation steps is upper bounded by the total number of sublayers. If $\gamma=1$, the tree decomposition algorithm guarantees that the total number of layers is $L=O(\log n)$. Otherwise, for $\gamma = n^{1/k}$, the total number of layers is $L=2k-1$. The total number of sublayers is $O(L \gamma)$, implying that for $\gamma=1$ we obtain an $O(\log n)$ \CONGEST algorithm, while for $\gamma = n^{1/k}$ and constant $k$ we obtain an $O(n^{1/k})$ \CONGEST algorithm.
\end{proof}

\section{Trees: \texorpdfstring{\boldmath $o(\log n)$}{o(log n)} Randomized Implies \texorpdfstring{\boldmath $O(\log\log n)$}{O(log log n)}}
\label{sec:treeSublog}
In this section we show that, on trees, any randomized algorithm solving an \LCL{} problem $\Pi$ in $o(\log n)$ rounds can be transformed into a randomized algorithm that solves $\Pi$ in $O(\log\log n)$ rounds. This implies that in the \CONGEST model there is no \LCL{} problem in trees with a randomized complexity that lies between $\omega(\log \log n)$ and $o(\log n)$. Moreover, we show that it is not necessary to start from an algorithm for the \CONGEST model, but that a LOCAL model one is sufficient. More formally, we will prove the following theorem.
In this section we show that, on trees, any randomized algorithm solving an \LCL{} problem $\Pi$ in $o(\log n)$ rounds can be transformed into a randomized algorithm that solves $\Pi$ in $O(\log\log n)$ rounds. This implies that in the \CONGEST model there is no \LCL{} problem in trees with a randomized complexity that lies between $\omega(\log \log n)$ and $o(\log n)$. Moreover, we show that it is not necessary to start from an algorithm for the \CONGEST model, but that a LOCAL model one is sufficient. More formally, we will prove the following theorem.

\begin{theorem}[sublogarithmic gap]
	\label{thm:treesrandSublogGap}
	Let $c\geq 1$ be a constant. Given any \LCL{} problem $\Pi$, if there exists a randomized algorithm for the \LOCAL model that solves $\Pi$ on trees in $o(\log n)$ rounds with failure probability at most $1/n$, then there exists a randomized algorithm for the \CONGEST model  that solves $\Pi$ on trees in $O(\log \log n)$ rounds with failure probability at most $1/n^c$.
\end{theorem}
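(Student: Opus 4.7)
The approach will be to emulate the standard $o(\log n) \Rightarrow O(\log \log n)$ speedup scheme used for the \LOCAL model \cite{CP19,CHLPU18}, but to implement each ingredient with only $O(\log n)$-bit messages, using the \CONGEST machinery from Section~\ref{sec:treeSuperlog} as a new gadget. Given the randomized \LOCAL algorithm $A_1$ with runtime $T_1(n) = o(\log n)$, I would first build a faster algorithm $A_0$ by running $A_1$ with a ``lied'' parameter $N = \mathrm{poly}\log n$, so that $A_0$ runs in $T_0 := T_1(N) = o(\log \log n)$ rounds and each node's output violates its local $\Pi$-constraint with probability at most $1/N$. The ``bad'' node events yield an LLL instance with probability $p \le 1/\mathrm{poly}\log n$ and dependency degree $d \le \Delta^{O(T_0)} = (\log n)^{o(1)}$, which comfortably satisfies the LLL criterion with polynomial slack (the slack will also provide the $1/n^c$ error probability for arbitrary constant $c$).

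The proof then has two halves: (i) find a random bit assignment under which the set of nodes where the $\Pi$-constraint is violated induces connected components of size $\mathrm{poly}\log n$ (the \emph{shattering} phase), and (ii) deterministically complete a valid solution on each such component. For (ii) I would invoke Theorem~\ref{thm:gapCONGESTrakeCompress-treePolyGap}: the hypothesis $T_1(n) = o(\log n)$ randomized gives, via the classical gap of \cite{ChangKP19}, an $O(\log n)$-round deterministic \LOCAL algorithm for $\Pi$; Theorem~\ref{thm:gapCONGESTrakeCompress-treePolyGap} in turn lifts this to an $O(\log n)$-round deterministic \CONGEST algorithm, which applied inside a component of size $s = \mathrm{poly}\log n$ takes $O(\log s) = O(\log \log n)$ \CONGEST rounds. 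Thus, once shattering is done, the rest is immediate.

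The hard part is step (i) in the \CONGEST model. A naive simulation of $A_0$ in \CONGEST would require every node to gather its $T_0$-radius ball, costing $\Theta(\Delta^{T_0}) = (\log n)^{o(1)}$ rounds, potentially more than our $O(\log \log n)$ budget. My plan is to avoid simulating $A_0$ as a black box and instead adapt the rake-and-compress framework of Section~\ref{ssec:treeDecomp}: run an $O(\log \log n)$-depth truncated decomposition of the tree, process compress paths and rake nodes by propagating \etype{}s upward as in the proof of Theorem~\ref{thm:gapCONGESTrakeCompress-treePolyGap}, and draw random bits only at the points where different sublayers have to be combined. The LLL analysis applied to this structured assignment should then show that every surviving bad component lives inside an $O(\log \log n)$-depth decomposition piece of polylog size, so it can be patched up by step (ii). Proving that this hybrid scheme really yields only $\mathrm{poly}\log n$-sized bad components, that every message can be made $O(\log n)$ bits, and that the total running time is $O(\log \log n)$ will constitute the technical heart of the section.
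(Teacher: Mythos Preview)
Your plan has two genuine gaps, one in each half of the argument.

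\textbf{Shattering phase.} You take $N=\operatorname{poly}\log n$, which makes $T_0=o(\log\log n)$ nonconstant and then forces you to invent a new ``rake-and-compress shattering'' to avoid gathering $T_0$-balls. None of this is necessary. The paper picks $n_0$ to be a \emph{fixed constant} (depending only on $\Pi$ and $\mathcal{A}$), so that $t_0=t_{\mathcal{A}}(n_0)$ and $h_0=h_{\mathcal{A}}(n_0)$ are both $O(1)$. Then $\mathcal{A}(n_0)$ is a genuinely constant-round algorithm, the LLL dependency graph $H$ has constant degree, and the preshattering of \cite{FGLLL17} runs verbatim in \CONGEST in $O(\log^*n)$ rounds. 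Your proposed hybrid rake-and-compress scheme is vague (``draw random bits only at the points where different sublayers have to be combined'') and you give no mechanism by which bad components are guaranteed to be $\operatorname{poly}\log n$ in size; the standard shattering analysis needs independence radii and a coloring-based schedule, neither of which your sketch supplies.

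\textbf{Post-shattering phase.} This is the more serious gap. You want to invoke the $O(\log n)$ deterministic \CONGEST algorithm for $\Pi$ itself on each small component. That does not solve the right problem: after shattering, the nodes on the boundary of a bad component have neighbours whose random bits (and hence $\Pi$-outputs) are already fixed, and any solution inside the component must be \emph{consistent with those fixed boundary values}. Running the $\Pi$-algorithm on the component as if it were an isolated tree ignores this and will in general produce outputs that clash at the interface. The paper handles this by introducing an auxiliary \LCL{} $\Pi''(\mathcal{A},n_0,x)$ whose input includes a partial bit assignment and whose output is either a completion of the bits or a wildcard (allowed only when the partial assignment is locally infeasible). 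One then shows (i) the preshattering always produces ``nice'' inputs so wildcards are never used on the components, and (ii) $\Pi''$ admits a randomized $O(\log N)$ \LOCAL algorithm via the LLL of \cite{CPS17}, hence by Theorem~\ref{thm:gapCONGESTrakeCompress-treePolyGap} a deterministic $O(\log N)$ \CONGEST algorithm. Only after this detour does Theorem~\ref{thm:gapCONGESTrakeCompress-treePolyGap} become applicable to the small instances. (Incidentally, the gap you cite from \cite{ChangKP19} says $o(\log n)$ \emph{deterministic} implies $O(\log^*n)$; it does not turn a randomized $o(\log n)$ algorithm into a deterministic $O(\log n)$ one. The relevant speedup here is Theorem~\ref{thm:gapCONGESTrakeCompress-treePolyGap} applied to $\Pi''$, not to $\Pi$.)
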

In \Cref{ssec:highLevelsubloggap} we present the high level idea of the proof of \Cref{thm:treesrandSublogGap}. The proof itself is split into three subsections, for which a road map appears at the end of \Cref{ssec:highLevelsubloggap}.
\subsection{Proof Idea for Theorem~\ref{thm:treesrandSublogGap}}
\label{ssec:highLevelsubloggap}

In the \LOCAL model it is known that, on trees, there are no \LCL{} problems with randomized complexity between $\omega(\log \log n)$ and $o(\log n)$ \cite{CP19,CHLPU18}. At a high level, we follow a similar approach in our proof. However, while some parts of the proof directly work in the \CONGEST model, there are some challenges that need to be tackled in order to obtain an algorithm that runs in $O(\log \log n)$ that is actually bandwidth efficient. We now provide the high level idea of our approach.

\paragraph{The standard approach: expressing the problem as an LLL instance.} As in the \LOCAL model case, the high level idea is to prove that if a randomized algorithm for an \LCL{} problem $\Pi$ runs in $o(\log n)$ rounds, then we can make it run faster at the cost of increasing its failure probability. In this way, we can obtain a constant time algorithm $\mathcal{A}_0$ at the cost of a very large failure probability. This partially gives what we want: we need a fast algorithm with small failure probability, and now we have a very fast algorithm with large failure probability. One way to fix the failure probability issue is to derandomize the algorithm $\mathcal{A}_0$, that is, to find a random bit assignment satisfying that if we run the algorithm with this specific assignment of random bits then the algorithm does not fail. Ironically, we use a randomized algorithm to find such a random bit assignment. 

\begin{lemma}[informal version of \Cref{lem:fakeN}]
	For any problem $\Pi$ solvable in $o(\log n)$ rounds with a randomized \LOCAL algorithm having failure probability at most $1/n$, there exists a constant time \LOCAL algorithm $\mathcal{A}_0$ that solves $\Pi$ with constant local failure probability $p$.
\end{lemma}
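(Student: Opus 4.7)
The plan is to take the given randomized \LOCAL algorithm $\mathcal{A}$ for $\Pi$, with runtime $T(n) = o(\log n)$ and global failure probability at most $1/n$, and ``lie about $n$'': define $\mathcal{A}_0$ to be $\mathcal{A}(N)$ for some sufficiently large constant $N$. By construction the runtime is $T(N) = O(1)$. The only thing to verify is that, when we execute $\mathcal{A}(N)$ on the true (possibly huge) graph, the local failure probability at every node is bounded by some constant $p$ that depends only on $N$.

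First I would move from the global bound to a local bound: if $\mathcal{A}(N)$ is executed on any tree $G'$ of at most $N$ vertices, then by a union bound the probability that a fixed node $v'$ violates its \LCL{} constraint is at most the global failure probability $1/N$. Next I would transfer this local guarantee to the case where $\mathcal{A}(N)$ is executed on a tree $G$ of $n \gg N$ vertices. The crucial observation is that $\mathcal{A}$ is a $T(N)$-round randomized algorithm without unique IDs, so its behavior at any node $v$ depends only on the combinatorial structure and input labels of $v$'s radius-$T(N)$ neighborhood, plus independent random bits. In a tree of maximum degree $\Delta$, such a neighborhood has size at most $O(\Delta^{T(N)})$. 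Because $T(n) = o(\log n)$, for every constant $\Delta$ there is some constant threshold $N_0$ such that for all $N \ge N_0$ we have $\Delta^{T(N)} \le N$. Fix such an $N$; this is the $N$ used in the construction of $\mathcal{A}_0$.

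Now, for any node $v$ in the true tree $G$, take $H$ to be the radius-$T(N)$ ball around $v$; this $H$ is itself a tree of at most $N$ nodes. An execution of $\mathcal{A}(N)$ on $H$ with $v$ in the role of the ``copy'' of $v$ produces, by coupling on the random bits of the nodes in $H$, exactly the same distribution over outputs at $v$ as running $\mathcal{A}(N)$ on $G$. Moreover this execution of $\mathcal{A}(N)$ on $H$ does not reach an ``unreachable state'' in the sense of the lying paragraph: the algorithm sees only a constant-radius neighborhood (so it cannot observe an inconsistency in the maximal diameter), and randomized algorithms are defined without unique IDs, so the ID-space issue does not arise either. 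Thus the local failure probability at $v$ in $G$ equals the local failure probability at $v$ in $H$, which is at most $1/N =: p$, a constant.

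The only delicate point — and the main (mild) obstacle — is making precise that lying is harmless here. I would argue this via the neighborhood-coupling described above, relying on two properties already highlighted in the preliminaries: randomized algorithms use no IDs (so the ID-space pitfall of the lying discussion does not apply), and a constant-radius observation of a bounded-degree tree is always consistent with some tree on at most $N$ vertices, so no ``unreachable state'' can be triggered. Choosing $N$ as large as we like simultaneously drives the local failure probability $p = 1/N$ as small as desired, at the cost of a larger (but still constant) runtime $T(N)$; this flexibility is what later allows \Cref{lem:fakeN} to be plugged into an LLL-based derandomization.
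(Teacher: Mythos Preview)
Your approach is exactly the paper's: lie about $n$, run $\mathcal{A}(N)$ for a large constant $N$, and transfer the local failure bound from small graphs to large graphs via a neighborhood-coupling argument. There is, however, one genuine slip in your coupling step.

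You take $H$ to be the radius-$T(N)$ ball around $v$ and conclude that the \emph{output at $v$} has the same distribution in $H$ and in $G$. That part is fine. But the next sentence, ``the local failure probability at $v$ in $G$ equals the local failure probability at $v$ in $H$'', does not follow: the local failure event at $v$ is that the \LCL{} constraint of $v$ is violated, and this depends on the outputs of \emph{all} nodes in the radius-$r$ ball around $v$ (where $r$ is the checking radius of $\Pi$), not just on $v$'s own output. For those outputs to be coupled, each node within distance $r$ of $v$ must see the same radius-$T(N)$ view in $H$ as in $G$, which forces $H$ to be the radius-$(T(N)+r)$ ball. Correspondingly, the threshold condition on $N$ must be $\Delta^{1+T(N)+r}\le N$ rather than $\Delta^{T(N)}\le N$; this is precisely what the paper uses (Inequality~(\ref{eqn:IamNZero})). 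Since $r$ is a constant depending only on $\Pi$, the fix is cosmetic and your overall argument goes through once the radius is adjusted.
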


It turns out that the problem itself of finding a good assignment of random bits such that the constant time algorithm $\mathcal{A}_0$ does not fail can be formulated as a \lovasz Local Lemma  (LLL) instance.
In an LLL instance there are random variables and a set of \emph{bad events} that depend on these variables. The famous \lovasz Local Lemma \cite{LLL73} states that if the probability of each bad event is upper bounded by $p$, each bad event only shares variables with $d$ other events and the \emph{LLL criterion} $epd<1$ holds, then there exists an assignment to the variables that avoids all bad events (a more formal treatment of the \lovasz Local Lemma follows at the beginning of \Cref{ssec:solvingPiPrime}). In our setting, the random variables are given by the random bits used by the vertices and each vertex $v$ has a bad event  $\mathcal{E}_v$ that holds if the random bits are such that $v$'s constraints in $\Pi$ are violated if $\mathcal{A}_0$ is executed with these random bits. We show that a large polynomial LLL criterion---think of $p(ed)^{30}<1$--- holds. Thus, the \lovasz Local Lemma implies that there exist \emph{good random bits} such the \LCL{} constraints of $\Pi$ are satisfied for all nodes when using these bits in $\mathcal{A}_0$. In the \LOCAL model it is known how to solve an LLL problem with such a strong LLL criterion efficiently. We show that the same holds in the \CONGEST model, i.e.,  $O(\log \log n)$ \CONGEST rounds are sufficient to find a good assignment of random bits. We point out that we do not give a general LLL algorithm in the \CONGEST model but an algorithm that is tailored for the specific instances that we obtain. The constant time algorithm $\mathcal{A}_0$ executed with these random bits does not fail at any node. 

 We summarize the high level approach as follows: Given an \LCL{} problem $\Pi$ defined on trees and an $o(\log n)$-rounds randomized algorithm $\mathcal{A}$ for $\Pi$, we obtain a constant time algorithm $\mathcal{A}_0$ for $\Pi$ and a new problem $\Pi'$ of finding good random bits for $\mathcal{A}_0$. Problem $\Pi'$ is defined on the same graph as problem $\Pi$. 
 The algorithm $\mathcal{A}_0$ and the problem $\Pi'$ only depend  on $\Pi$ and $\mathcal{A}$. We show that $\Pi'$ is also an \LCL{} problem. In problem $\Pi'$ each node of the tree needs to output a bit string such that  if the constant time algorithm $\mathcal{A}_0$ is run with the computed random bits, the problem $\Pi$ is solved. We will show that $\Pi'$ can be solved in $O(\log \log n)$ rounds, and note that once $\Pi'$ is solved, one can run $\mathcal{A}(n_0)$ for $t_0=t_{\mathcal{A}_0}=O(1)$ rounds to solve $\Pi$.

\begin{lemma}[informal version of \Cref{lem:PiPrime}]
	The problem $\Pi'$, that is, the problem of finding a good assignment of random bits that allows us to solve $\Pi$ in constant time, can be solved in  $O(\log \log n)$ rounds in the \CONGEST model.
\end{lemma}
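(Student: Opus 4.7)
The plan is to emulate the two-phase shattering framework of \cite{FGLLL17}: a cheap randomized sampling phase that fixes the bits of most nodes, followed by a deterministic cleanup on the remaining ``bad'' nodes which, thanks to the polynomial LLL slack $p(ed)^{30}<1$ announced in the lemma just above, splits into polylogarithmic-sized connected components. The essential new ingredient in \CONGEST{} is that the cleanup is implemented by invoking the \CONGEST{} algorithm of \Cref{thm:gapCONGESTrakeCompress-treePolyGap} inside each small component, rather than by gathering each component at one node as the analogous \LOCAL{} proof does.

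Phase 1 ($O(1)$ rounds). Every node independently draws the constant-length random bit string that $\mathcal{A}_0$ expects (constant because $\Pi'$ is an \LCL{}). Since $\mathcal{A}_0$ has constant depth $t_0$, each node can in $O(1)$ \CONGEST{} rounds with constant-size messages collect the sampled bits from its $(t_0+r)$-neighborhood and decide locally whether running $\mathcal{A}_0$ would violate a $\Pi$-constraint in that neighborhood. Mark as bad any node within distance $t_0+r$ of such a violation. The Beck/Moser--Tardos/Fischer--Ghaffari shattering analysis \cite{FGLLL17} applied to our LLL instance then yields that, with probability at least $1-1/n^c$, every connected component $B_i$ of bad nodes has at most $N=O(\log^c n)$ vertices.

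Phase 2 ($O(\log\log n)$ rounds). Freeze the bits of the good nodes and let each bad component solve the residual problem $\Pi''$ obtained from $\Pi'$ by treating the already-fixed boundary bits as additional constant-size inputs. Since $\Pi''$ inherits an $O(\log n)$-round \LOCAL{} algorithm from the standard polynomial-criterion LLL solver, and $O(\log n)=n^{o(1)}$, \Cref{thm:gapCONGESTrakeCompress-treePolyGap} applied inside $B_i$ with parameter $N$ produces a deterministic \CONGEST{} algorithm of round complexity $O(\log N)=O(\log\log n)$. The theorem's precondition of a distance-$l_{\Pi''}$ input coloring of the component is produced in $O(\log^\ast n)$ extra rounds by running Linial's algorithm on the polynomial-size IDs.

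The main obstacle is justifying the ``lie about $n$'' when we invoke \Cref{thm:gapCONGESTrakeCompress-treePolyGap} with parameter $N \ll n$ on a component of the true $n$-vertex graph. The two pitfalls flagged in \Cref{sec:definitions} are (i) seeing IDs that are too large for the assumed ID space and (ii) failing to encounter a degree-$\leq 2$ node inside the algorithm's horizon. Both are handled here: we re-color each bad component with fresh IDs from $\{1,\dots,\poly(N)\}$ before the invocation, and the $O(\log N)$ horizon of \rake{}\&\compress{} is certainly enough to reach a leaf, since the entire component has only $N$ vertices. The remaining bookkeeping---that $\Pi''$ is a bona fide \LCL{}, that the two phases compose into an algorithm with failure probability at most $1/n^c$ via $O(1)$ independent retries if needed, and that every message in Phase 2 stays within $O(\log n)$ bits---is routine.
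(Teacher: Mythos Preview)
Your high-level plan matches the paper's exactly: preshatter \`a la \cite{FGLLL17}, then deterministically finish each small component by invoking \Cref{thm:gapCONGESTrakeCompress-treePolyGap} on the residual \LCL{} $\Pi''$. The Phase~2 reasoning (lying about $n$, recoloring with small IDs, $O(\log N)=O(\log\log n)$) is essentially what the paper does.

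The genuine gap is in Phase~1. Your preshattering is ``sample all bits once, mark as bad every node within $t_0+r$ of a realized violation, freeze the rest.'' This does yield small bad components (the shattering lemma only needs $\Pr[v\in B]\le\Delta_H^{-c_1}$), but it does \emph{not} give you a solvable residual instance. Consider a good node $u$ that has a bad node $z$ within distance $t_0+r$: $u$'s bits are frozen, $z$'s will be resampled, yet $\mathcal{E}_u$ depends on $z$'s bits, so after resampling $\mathcal{E}_u$ may now occur. Worse, for a bad node $u$ whose event $\mathcal{E}_u$ currently holds, the conditional probability $\Pr[\mathcal{E}_u\mid\text{frozen good bits}]$ is not bounded by anything---it can be $1$ if the frozen neighbors already force the violation regardless of the remaining free bits. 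Hence your assertion that ``$\Pi''$ inherits an $O(\log n)$-round \LOCAL{} algorithm from the standard polynomial-criterion LLL solver'' is unjustified: there is no LLL criterion on the residual instance.

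The paper (following \cite{FGLLL17} verbatim) avoids this by a \emph{sequential} preshattering: first compute a distance-$2$ coloring of the dependency graph $H$ in $O(\logstar n)$ rounds, then iterate through the $\Delta_H^2+1=O(1)$ color classes; a node commits its sampled bits only if doing so keeps $\Pr[\mathcal{E}_w]<x=\sqrt{1/n_0}$ for every $w\in\{v\}\cup N_H(v)$, and otherwise its bits are unset and all variables of the offending $\mathcal{E}_w$ are frozen. This explicitly maintains the invariant that every event has conditional probability $<x$, which is exactly what feeds the post-shattering criterion $e\cdot x\cdot\Delta_H^2<1$ used in \Cref{lem:Pi3Congest}. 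A related point you dismiss as routine bookkeeping is that $\Pi''$ must be an \LCL{} that is solvable for \emph{every} input, including adversarial partial bit-strings that never arise from the preshattering (otherwise \Cref{thm:gapCONGESTrakeCompress-treePolyGap} does not apply); the paper handles this with a wildcard output \wild{} that a node may use exactly when its conditional failure probability exceeds $x$, and then separately argues that on the actual preshattering output no node in $\uin_i\cup\uout_i\cup\din_i$ is permitted to output \wild.
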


Problem $\Pi'$ is defined on the same tree as $\Pi$ but problem $\Pi'$ has checking radius $r+t_0$. Thus, the \emph{dependency graph} of the LLL instance is a power graph of the tree, or in other words the LLL instance is \emph{tree structured}. Hence, in the \LOCAL model, $\Pi'$ can be solved in $O(\log \log n)$ rounds by using a  $O(\log\log n)$ randomized \LOCAL algorithm for tree structured LLL instances \cite{CHLPU18}. We cannot do the same here, as it is not immediate whether this algorithm works in the \CONGEST model.

\paragraph{The shattering framework.}
Our main contribution is showing how to solve $\Pi'$ in $O(\log\log n)$ rounds in a bandwidth-efficient manner.
To design an $O(\log\log n)$-round algorithm for $\Pi'$, we apply the shattering framework for LLL of~\cite{FGLLL17}, that works as follows. After a precomputation phase of $O(\logstar n)$ rounds, the shattering process uses $\poly\Delta=O(1)$ rounds to determine the random bits of some of the nodes. The crucial property is that all nodes with unset random bits form \emph{small connected components} of size $N=\poly(\Delta)\cdot \log n=O(\log n)$; in fact, even all nodes that are close to nodes with unset random bits form small connected components $C_1,\ldots,C_k$. Furthermore, each connected component can be solved (independently) with an LLL procedure as well, with a slightly worse polynomial criterion (e.g., $p(ed)^{15}<1$).
Note that, in order to solve these smaller instances, we need to use a deterministic algorithm.This is because, if we try to recursively apply a randomized LLL algorithm on the smaller instances (e.g.\ by using the randomized \LOCAL algorithm of~\cite{CPS17}) we get that each component can be solved independently in $O(\log N)$ rounds, but with failure probability $1/\poly N\gg 1/n$.

\paragraph{Our main contribution: solving the small instances in a bandwidth efficient manner.}
Since it seems that we cannot directly use an LLL algorithm to solve the small remaining instances  $C_1,\ldots,C_k$, we follow a different route. We devise a deterministic \CONGEST algorithm that we can apply on each of the components in parallel: In \Cref{thm:gapCONGESTrakeCompress-treePolyGap} we prove that, on trees, any randomized algorithm running in $n^{o(1)}$ rounds (subpolynomial in the number of nodes) in the \LOCAL model can be converted into a deterministic algorithm running in $O(\log n)$ in the \CONGEST model.  We  use this result here, to show that, the mere existence of the randomized \LOCAL algorithm of~\cite{CPS17}, that fails with probability at most $1 / \poly N$ and runs in $O(\log N)$ rounds in the \LOCAL model, which fits the runtime requirement of \Cref{thm:gapCONGESTrakeCompress-treePolyGap}, implies that $\Pi'$ can be solved in $O(\log N) = O(\log \log n)$ \CONGEST rounds deterministically on the components induced by unset bits.  To apply \Cref{thm:gapCONGESTrakeCompress-treePolyGap} that only holds for \LCL{} problems, we  express the problem of completing the partial random bit assignment as  a problem $\Pi''$, that intuitively is almost the same problem as $\Pi'$, but allows some nodes to already receive bit strings as their input. We show that $\Pi''$ is a proper \LCL{}.

Formally, there are  several technicalities that we need to take care of.  In particular, we do not want to provide any promises on the inputs of $\Pi''$, as  \Cref{thm:gapCONGESTrakeCompress-treePolyGap} does not hold for \LCL{}s with promises on the input. For example, we cannot guarantee in the \LCL{} definition that the provided input, that is, the partial assignment of random bits, can actually be completed into a full assignment that is good for solving $\Pi$. On the other hand, if we just defined $\Pi''$ as the problem of completing a partial given bit string assignment, it might be unsolvable for some given inputs, and this would imply that an $n^{o(1)}$ time algorithm for this problem cannot exist to begin with, thus there would be no way to use \Cref{thm:gapCONGESTrakeCompress-treePolyGap}.

In order to solve this issue, we define $\Pi''$ such that it can be solved fast even if the input is not \emph{nice} (for some technical definition of nice). In particular, we make sure, in the definition of $\Pi''$, that if the input is nice then the only way to solve $\Pi''$ is to actually complete the partial assignment, while if the input is not nice, and only in this case, nodes are allowed to output \emph{wildcards} \wild; the constraints of nodes that see wildcards in their checkability radius are automatically satisfied. This way the problem is always solvable.
For an efficient algorithm, we make sure that nodes can verify in constant time if a given input assignment is nice or not.  We also show that inputs produced for $\Pi''$ in the shattering framework are always nice. The definition of $\Pi''$ and the provided partial assignment allows us to split the instance of $\Pi'$ into many independent instances of $\Pi''$ of size $N = O(\log n)$. By applying \Cref{thm:gapCONGESTrakeCompress-treePolyGap} we get that \cite{CPS17} implies the existence of a deterministic \CONGEST algorithm $\mathcal{B}$ for $\Pi''$ with complexity $O(\log N) = O(\log \log n)$.

\begin{lemma}[informal version of \Cref{lem:Pi3Congest}]
	There is a deterministic \CONGEST algorithm to solve $\Pi''$ on any tree with at most $N$ nodes in $O(\log N)$ rounds, regardless of the predetermined  input.
\end{lemma}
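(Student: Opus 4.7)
The plan is to reduce the statement to a direct application of \Cref{thm:gapCONGESTrakeCompress-treePolyGap} on the problem $\Pi''$. Since the runtime target $O(\log N) = N^{o(1)}$ lies in $T^{\mathrm{slow}}$ and the theorem gives $f(N^{o(1)}) = O(\log N)$, it suffices to exhibit, for the \LCL{} $\Pi''$ on trees, a randomized \LOCAL algorithm with runtime $O(\log N)$ and failure probability at most $1/N$ on graphs of at most $N$ nodes. The superlogarithmic gap theorem then upgrades this to the desired deterministic \CONGEST algorithm.

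The first step is to verify that $\Pi''$ is genuinely an \LCL{} without any input promise; this is precisely what the wildcard mechanism in its definition ensures. A node can locally verify in constant radius whether the provided partial bit assignment is \emph{nice}, and if it detects a non-nice configuration, outputting \wild{} trivially satisfies the constraint at every neighbor that sees it. Thus for every input, including adversarial ones, $\Pi''$ admits a valid solution, and its constraints can be expressed with a constant alphabet and constant checking radius.

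The next step is to construct the \LOCAL algorithm for $\Pi''$. Each node first inspects its constant-radius neighborhood. If a non-nice configuration is detected there, the node commits in a further constant number of rounds to outputting \wild{}. Otherwise, the input encodes a well-formed partial assignment and the remaining task is to extend it to a complete random bit assignment under which $\mathcal{A}_0$ succeeds at every vertex. This is a tree-structured LLL instance whose polynomial criterion is inherited from the one established for $\Pi'$, since fixing variables only strengthens the residual criterion. The randomized \LOCAL algorithm of \cite{CPS17} solves such an instance in $O(\log N)$ rounds with failure probability $1/\poly(N) \le 1/N$. Running the two cases side by side produces the required randomized \LOCAL algorithm for $\Pi''$.

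Invoking \Cref{thm:gapCONGESTrakeCompress-treePolyGap} with $n = N$ and $T = O(\log N) \in N^{o(1)}$ then yields the deterministic $O(\log N)$-round \CONGEST algorithm for $\Pi''$, regardless of the predetermined input. The principal obstacle is conceptual rather than computational: $\Pi''$ must be engineered so that it is simultaneously (a) a proper promise-free \LCL{}, (b) faithful to the bit-completion problem on nice inputs, and (c) equipped with a constant-radius niceness predicate enabling the wildcard escape on adversarial inputs. One has to also be mindful that the ID space available inside a component of size $N$ is polynomial in $N$ (inherited from the polynomial-in-$n$ ID space of the ambient graph together with a standard relabeling inside each component), so the ID-space assumption of \Cref{thm:gapCONGESTrakeCompress-treePolyGap} is met. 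Once these design choices are locked in, the lemma reduces cleanly to the superlogarithmic gap theorem of the previous section.
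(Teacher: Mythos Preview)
Your proposal is correct and follows essentially the same route as the paper: exhibit an $O(\log N)$-round randomized \LOCAL algorithm for $\Pi''$ (constant-radius wildcard check, then the LLL algorithm of \cite{CPS17} on the residual instance) and invoke \Cref{thm:gapCONGESTrakeCompress-treePolyGap}. The paper is slightly more explicit about two points you treat informally: the LLL criterion is not merely ``inherited'' but comes from the standing hypothesis $e\cdot x\cdot \Delta^{4(t_{\mathcal{A}}(n_0)+r)}<1$ (which bounds $e\cdot \Pr(\mathcal{E}_v)\cdot \Delta_H^2$), and the ID issue is handled not by relabeling but by absorbing the $O(\logstar |\mathcal{S}|)$ overhead of \Cref{thm:gapCONGESTrakeCompress-treePolyGap}, which is harmless since the ambient IDs are at most exponential in $N$.
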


We apply algorithm $\mathcal{B}$ on each of the components  $C_1,\ldots,C_k$ in parallel, and the solution of $\Pi''$ on each small components together with the random bit strings from the shattering phase yield a solution for  $\Pi'$. This can then be transformed into a solution for $\Pi$ on the whole tree by running the constant time algorithm $\mathcal{A}_0$ with the computed random bits, which completes our task and proves \Cref{thm:treesrandSublogGap}.

\paragraph{Road map through the section.} We follow the same top-down approach that we used to explain the high level idea. In \Cref{ssec:thm:treesrandSublogGap} we define the problem $\Pi'$ and prove \Cref{thm:treesrandSublogGap}, assuming that we already know how to solve problem $\Pi'$ efficiently (\Cref{lem:PiPrime}). Then, in \Cref{ssec:solvingPiPrime} we define $\Pi''$ and prove \Cref{lem:PiPrime}, assuming that we know how to solve $\Pi''$ efficiently (\Cref{lem:Pi3Congest}). Lastly, in \Cref{ssec:solvingPi3Prime} we prove \Cref{lem:Pi3Congest}.

\subsection{Proof of Theorem~\ref{thm:treesrandSublogGap}: Solving \texorpdfstring{\boldmath $\Pi$}{\textPi}}
\label{ssec:thm:treesrandSublogGap}
We now start by proving that, if we are given an algorithm $\mathcal{A}$ running in $o(\log n)$ rounds, then we can make it faster at the cost of substituting its global  failure probability of $1/n$ with a much larger (even constant) local failure probability. Afterwards, in the rest of the section, we find a random bit assignment such that if we run the latter algorithm with this specific assignment of random bits then it does not fail at any node.  Recall, that $\mathcal{A}(n_0)$ denotes algorithm $\mathcal{A}$ run with parameter $n_0$ as an upper bound on the nodes (see \Cref{sec:definitions} for the formal definition of $\mathcal{A}(n_0)$).
The next statement was first observed in \cite{CP19} and has been used in many results afterwards. We prove it for completeness as it only occurs in the internals of proofs in these works and has not been stated in such an explicit form.
\begin{lemma}\label{lem:fakeN}
	Let $c$ be a constant $\ge 1$ and $\Pi$ an \LCL{} with maximum degree $\Delta$ and with checking radius $r$.
	Assume that we are given a randomized \LOCAL algorithm $\mathcal{A}$ solving $\Pi$ in $f(\Delta) + \epsilon \log_\Delta n$ rounds for some function $f$ and $\epsilon < 1$ with failure probability at most $1/n^c$ on any graph with at most $n$ nodes.
	There exists a constant $\hat{n}$, such that, for any $n_0 \ge \hat{n}$, if we run $\mathcal{A}(n_0)$ on a graph $G$ of size at most $n$, for each node $v$ of $G$ it holds that the outputs produced by $\mathcal{A}(n_0)$ on the $r$-radius neighborhood of $v$ is incorrect, according to the \LCL{} constraints of $\Pi$, with probability at most $1/n_0^c$.
	\end{lemma}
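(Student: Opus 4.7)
The plan is a standard locality-shrinking argument exploiting $\epsilon < 1$. The key idea is that, because $\mathcal{A}(n_0)$ runs in a radius that depends only very weakly on $n_0$, the events that determine whether $v$'s $r$-ball is locally bad are supported on a ball in $G$ of size at most $n_0$, and I can therefore ``redirect'' the analysis to a hypothetical execution of $\mathcal{A}(n_0)$ on that ball, where the standard guarantee applies.

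First, I would set $t_0 := f(\Delta) + \epsilon \log_\Delta n_0$, the running time of $\mathcal{A}(n_0)$. Since the algorithm is a $t_0$-round \LOCAL algorithm and the correctness of $\Pi$ at $v$ is checkable within radius $r$, whether the output of $\mathcal{A}(n_0)$ violates the \LCL{} constraint in the $r$-ball around $v$ depends only on the subgraph $B_v$ of $G$ induced by the $(t_0+r)$-ball around $v$, together with the inputs from $\Sigma_{\mathrm{in}}$ on $B_v$ and the random bits of nodes in $B_v$. Since the maximum degree is $\Delta = O(1)$, the ball $B_v$ has at most $\Delta^{t_0+r+1} = \Delta^{f(\Delta)+r+1}\cdot n_0^{\epsilon}$ nodes. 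Because $\epsilon < 1$, one can choose the constant
\[
\hat{n} := \left\lceil \Delta^{(f(\Delta)+r+1)/(1-\epsilon)}\right\rceil
\]
so that for every $n_0 \geq \hat{n}$ one has $|B_v| \le n_0$.

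Next I would compare two executions. On the one hand, run $\mathcal{A}(n_0)$ on $G$ (``lying about $n$''). On the other hand, consider the (purely hypothetical) execution of $\mathcal{A}(n_0)$ on the induced subgraph $B_v$, viewed as a standalone graph with at most $n_0$ nodes, where the inputs and random bits of the nodes in $B_v$ are coupled to be the same as in the execution on $G$. Because randomized algorithms in our model use no unique IDs, the view that $\mathcal{A}(n_0)$ presents to any node $u$ within distance $r$ of $v$ is determined by its $t_0$-neighborhood; and for every such $u$ this $t_0$-neighborhood sits entirely inside $B_v$, so it is identical in the two executions. Consequently the output of $\mathcal{A}(n_0)$ at every node of the $r$-ball around $v$ is the same random variable in both executions, and the event ``the $r$-ball around $v$ is locally bad'' has the same probability. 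In the execution on $B_v$ the graph has at most $n_0$ nodes, so the promise of $\mathcal{A}$ is satisfied and the \emph{global} failure probability is bounded by $1/n_0^{c}$; a fortiori the probability of local failure at $v$ is at most $1/n_0^{c}$, which by the coupling argument is also the local failure probability in $G$.

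The main (mild) obstacle is ensuring that the ``lying about $n$'' technicality does not spoil the coupling: one must verify that when $\mathcal{A}(n_0)$ is executed on $G$ no node of the $(t_0+r)$-ball of $v$ reaches a state that could not occur in a graph of size at most $n_0$. This is immediate here because the algorithm is randomized (and hence does not receive unique IDs) and its local view is combinatorially indistinguishable from a view generated by a run on $B_v$ itself, so no node inside $B_v$ can detect that the ambient graph is larger than $n_0$. Everything else is bookkeeping: the choice of $\hat{n}$ above is a constant depending only on $\Pi$ (through $\Delta$ and $r$) and on the algorithm $\mathcal{A}$ (through $f$ and $\epsilon$), as required.
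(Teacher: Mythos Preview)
Your proposal is correct and follows essentially the same approach as the paper: both bound the size of the $(t_0+r)$-ball around $v$ by $n_0$ using $\epsilon<1$, and then couple the execution of $\mathcal{A}(n_0)$ on $G$ with its execution on that ball viewed as a standalone graph of at most $n_0$ nodes. The only cosmetic differences are that you give an explicit formula for $\hat n$ (the paper just argues existence from $\Delta^{1+f(\Delta)+r}\cdot n^{\epsilon}=o(n)$) and you phrase the argument as a direct coupling rather than by contradiction.
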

The local error probability of the constant time algorithm $\mathcal{A}(n_0)$ in \Cref{lem:fakeN} is $1/n_0^c$. 
\begin{proof}
	Assume that there is an algorithm $\mathcal{A}$ that takes an input $n$, an upper bound on the size of the graph, runs in $t_{\mathcal{A}}(n)$ on graphs with no IDs, and produces an output with local failure probability at most $1/n^c$, meaning that for each node $v$ of the graph, with probability at least $1-1/n^c$, the neighborhood of radius $r$ of $v$ is contained in $C$. 
	Let $\hat{n}$ be such that
	\begin{align}
		\label{eqn:IamNZero}
		\Delta^{1+t_{\mathcal{A}}(\hat{n}) + r}\leq \hat{n}
	\end{align}
	holds. Such a choice of $\hat{n}$ exists because due to $t_{\mathcal{A}}(n) = f(\Delta) + \epsilon \log_\Delta n$, we have $\Delta^{1+t_{\mathcal{A}}(n) + r}\leq \Delta^{1+f(\Delta) + \epsilon \log_\Delta n + r}=\Delta^{1+f(\Delta)+r} \cdot n^{\epsilon}=o(n)$. Fix an arbitrary $n_0\geq \hat{n}$ and note that Inequality~\ref{eqn:IamNZero} also holds for  $n_0$. 	By the definition of $\mathcal{A}$, we obtain the following fact.
	
	\smallskip
	\textbf{Fact:} Algorithm $\mathcal{A}(n_0)$ produces a feasible solution (according to $\Pi$) with probability $1/n_0^c$ on any graph with at most $n_0$ nodes.
	\smallskip
	
	To prove the statement of the lemma we show that for any $n$, the constant time algorithm $\mathcal{A}(n_0)$ produces a feasible solution (according to $\Pi$) with probability $1/n_0^c$ on any graph $G$ with at most $n$ vertices. Assume for a contradiction that it does not, i.e., that there is some $n$, a graph $G=(V,E)$ on at most $n$ nodes and a node $v\in V(G)$ such that the probability that the \LCL{} constraint of $v$ in its $r$-radius neighborhood is violated is strictly larger than $1/n_0^c$.
	Let $H_v=(V^H, E^H)$ be the graph consisting of the radius-$(t_{\mathcal{A}}(n_0)+r)$ neighborhood of $v$ in $G$ and let $\phi: V^H\rightarrow V$ be the map that identifies each vertex of $H_v$ with its counterpart in $G$. An upper bound on the number of nodes in a radius-$(t_{\mathcal{A}}(n_0)+r)$ neighborhood is given by $\Delta^{1+t_{\mathcal{A}}(n_0)+r}$, and thus by Inequality (\ref{eqn:IamNZero}) the graph $H_v$ has at most  $n_0$ vertices.
	Further, each node $v'\in V^H$ in the $r$-radius neighborhood around $\phi^{-1}(v)$ has the same radius-$t_{\mathcal{A}}(n_0)$ view in $H_v$ and as its counterpart $\phi(v')$ in $G$. Thus, when running $\mathcal{A}(n_0)$ on $H_v$ or on $G$, the joint probability distribution of the output labeling of all nodes in $N^r_{H(v)}=N^r_{H_v}(\phi^{-1}(v))$ and $N^r_{G}(v)$ are identical. As $H_v$ is a graph of at most $n_0$ nodes the aforementioned fact implies that the joint output labeling distribution of  $N^r_{H(v)}$  violates the constraints of $\Pi$ with probability at most $1/n_0^c$, a contradiction to the assumption that the same output labeling distribution on $N_G^r(v)$ violates the constraint with probability strictly larger than $1/n_0^c$.
\end{proof}

We now define the \LCL{} problem $\Pi'(\mathcal{A}, n_0)$ that is the central problem of our algorithm.   We only use the definition for a sublogarithmic algorithm $\mathcal{A}$ and in the proof of \Cref{thm:treesrandSublogGap} we determine a value of $n_0$ such that $\Pi'(\mathcal{A}, n_0)$ is  solvable.

\begin{definition}[\LCL{} problem $\Pi'$]\label{def:variousLCLs}
	Let $\Pi$ be an \LCL{} with checking radius $r$, let $\mathcal{A}$ be a randomized algorithm for $\Pi$ and let $n_0\in \NN$ and $0\leq x\leq 1$ be parameters.
Let $t_0=t_{\mathcal{A}}(n_0)$, $h_0=h_{\mathcal{A}}(n_0)$ be upper bounds on the runtime and on the number of random bits used by a node when running $\mathcal{A}$ on graphs with at most $n_0$ nodes.

\textbf{Problem $\Pi'(\mathcal{A}, n_0)$: }
				The input alphabet of  $\Pi'(\mathcal{A},n_0)$  is the same as in $\Pi$. The output of each node is a string of $h_0$ bits. A ball of radius $r + t_0 $ centered at $v$ is feasible if, by running $\mathcal{A}(n_0)$ for $t_0$ rounds on each node in the radius-$r$ neighborhood of $v$ using the bits given by the output of $\Pi'$, we obtain a neighborhood that is correct according to $\Pi$. The checkability radius is $r + t_0$.
\end{definition}

The next lemma is the main technical contribution in this section and proven in \Cref{ssec:solvingPiPrime}.
 \begin{restatable}[Solving $\Pi'$]{lemma}{lemPiPrime}
	\label{lem:PiPrime}
Let $c\geq 1$ be a constant. Let $\Pi$ be an \LCL{} on trees with checking radius $r$ and maximum degree $\Delta$. Let $\mathcal{A}$ be a \LOCAL algorithm and let $n_0$ be such that $\mathcal{A}(n_0)$ solves $\Pi$ with local failure probability at most $1/n_0$ and such that $n_0>(\Delta^{2(t_{\mathcal{A}}(n_0)+r)}\cdot e)^{26+2c}$ holds. Then $\Pi'(\mathcal{A},n_0)$ can be solved in the \CONGEST model in $O(\log \log n)$ rounds with failure probability at most $1/n^c$.
\end{restatable}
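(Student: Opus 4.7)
First I would encode $\Pi'(\mathcal{A},n_0)$ as an LLL instance. The underlying variables are the $h_0$-bit strings $X_v$ sampled independently at each vertex, and the bad event $\mathcal{E}_v$ is the event that the $r$-ball around $v$ violates $\Pi$ when $\mathcal{A}(n_0)$ is simulated from $(X_u)_u$. By hypothesis, $p := \Pr[\mathcal{E}_v] \leq 1/n_0$. Since $\mathcal{E}_v$ depends only on the bits of vertices within distance $r+t_{\mathcal{A}}(n_0)$ of $v$, two events share variables only when their centers are within distance $2(r+t_{\mathcal{A}}(n_0))$, so the dependency degree satisfies $d \leq \Delta^{2(r+t_{\mathcal{A}}(n_0))}$. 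The assumption $n_0>(\Delta^{2(t_{\mathcal{A}}(n_0)+r)}\cdot e)^{26+2c}$ is then exactly the strong polynomial LLL criterion $p\cdot(ed)^{26+2c}<1$.

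The plan is to apply the LLL shattering framework of~\cite{FGLLL17}. After an $O(\logstar n)$-round precomputation (an $O(1)$-distance coloring of the tree) and $\poly(\Delta)=O(1)$ rounds of initial sampling, the set of vertices whose random bits remain unset, enlarged by a constant number of hops, decomposes with failure probability at most $1/n^{c+1}$ into connected components $C_1,\dots,C_k$ of size at most $N=O(\log n)$. Each $C_i$ is a subtree, and because the starting criterion has the large exponent $26+2c$, the residual LLL instance on each component still satisfies a polynomial criterion of the form $p(ed)^{15}<1$ (say) after conditioning on the fixed bits.

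On each component I would then need to deterministically complete the partial bit assignment in a bandwidth-efficient way. To this end, I introduce an auxiliary \LCL{} $\Pi''$ whose input alphabet augments that of $\Pi$ with a partial bit string (or a \emph{free} marker) at every node, and whose output at a node is either an $h_0$-bit string or a wildcard $\wild$. The constraints require that, whenever the local input satisfies an $O(1)$-radius checkable \emph{niceness} predicate (engineered so that every partial assignment produced by the shattering step is nice), the output must extend the fixed bits to an assignment under which $\mathcal{A}(n_0)$ satisfies $\Pi$ at $v$; when the local input is not nice, any output is allowed, and a $\wild$ at any node inside the checking radius automatically satisfies its neighbors' constraints. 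This escape hatch makes $\Pi''$ solvable on \emph{every} input, so it is a genuine \LCL{} with no input promise, which is the hypothesis needed to invoke~\Cref{thm:gapCONGESTrakeCompress-treePolyGap}.

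Because the polynomial LLL criterion still holds on each component, the randomized \LOCAL LLL algorithm of~\cite{CPS17} solves $\Pi''$ on a tree of $N$ vertices in $O(\log N)$ rounds with failure probability $1/\poly(N)$; this runtime is subpolynomial in $N$, so~\Cref{thm:gapCONGESTrakeCompress-treePolyGap} supplies a \emph{deterministic} \CONGEST algorithm $\mathcal{B}$ that solves $\Pi''$ in $O(\log N)=O(\log\log n)$ rounds. Running $\mathcal{B}$ in parallel on all $C_i$ (vertex-disjoint trees, so without bandwidth conflict) and combining its outputs with the bits already fixed in the shattering phase produces a solution to $\Pi'(\mathcal{A},n_0)$ within the desired $O(\log\log n)$ rounds and $1/n^c$ failure probability. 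The main obstacle, and the source of the novelty, is this last stage: one cannot simply recurse with a randomized LLL solver on the small components because a per-component failure of $1/\poly(N)=1/\polylog(n)$ does not union-bound to $1/n^c$; instead one must route the completion task through the new deterministic \CONGEST solver of~\Cref{thm:gapCONGESTrakeCompress-treePolyGap}, which in turn forces the careful repackaging of the completion task as the promise-free \LCL{} $\Pi''$ via the wildcard mechanism.
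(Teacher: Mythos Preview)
Your proposal follows the paper's proof essentially step for step: the LLL encoding of $\Pi'$, the preshattering of~\cite{FGLLL17} in \CONGEST, the promise-free auxiliary \LCL{} $\Pi''$ with the wildcard escape hatch, and the use of \Cref{thm:gapCONGESTrakeCompress-treePolyGap} (fed by the mere existence of the~\cite{CPS17} algorithm) to obtain a deterministic $O(\log N)$-round \CONGEST solver on the small components.

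One point deserves more care. You assert that the components $C_i$ are ``vertex-disjoint trees, so without bandwidth conflict''. In the paper's construction the enlarged instances $C_i^G$ (each undecided component padded by $2(t_0+r)$ hops in $G$) are \emph{not} vertex-disjoint: the outer buffer layers $\din_i,\dout_i$ of different components may overlap. The paper handles this by showing that only already-decided nodes can be shared between instances, so (i) their outputs are fixed by the preshattering and cannot conflict, and (ii) each node lies in at most $\Delta^{O(t_0+r)}=O(1)$ instances, yielding only a constant-factor bandwidth overhead when running $\mathcal{B}$ in parallel. Relatedly, one must verify that no node in $\uin_i\cup\uout_i\cup\din_i$ is permitted to output $\wild$; the paper ties the wildcard condition to the specific threshold $x=\sqrt{1/n_0}$ below which the preshattering freezes every event's conditional probability, so that the lifted bit strings genuinely solve $\Pi'$ on all of $G$.
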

Given \Cref{lem:PiPrime} we can immediately prove \Cref{thm:treesrandSublogGap}.

\begin{proof}[Proof of \Cref{thm:treesrandSublogGap}]
Let $\Pi$ be an \LCL{} problem with checkability radius $r$ that is solvable with a randomized algorithm $\mathcal{A}$ in $t(n)=o(\log n)$ with failure probability at most $1/n$, in which each node uses at most $h(n)$ random bits, in any graph with at most $n$ nodes. As $t(n)$ is in $o(\log n)$, the function $g(n)=(\Delta^{2(t(n)+r)}\cdot e)^{(26+2c)}$ is in $o(n)$. Thus, we can fix a constant $n_0^A$ such that $n_0^A>g(n_0^A$) holds.
Let $n_0^B$ be the constant $\hat{n}$ of \Cref{lem:fakeN}. We define $n_0 = \max\{n_0^A,n_0^B\}$. Let $t_0=t(n_0)$. We obtain that
\begin{align}
	n_0>(\Delta^{2(t_0+r)}\cdot e)^{(26+2c)} \label{eqn:nzeroChoice},
\end{align}
and that $\mathcal{A}(n_0)$ fails on each neighborhood with probability at most $1 / n_0$. We obtain constants $n_0$, $t_0$ and $h_0=h(n_0)$ that do not depend on any input graph but solely on the properties of $\mathcal{A}$.
For this choice of $n_0$ we consider the \LCL{} problem $\Pi' = \Pi'(\mathcal{A}, n_0)$ which is formally defined in \Cref{def:variousLCLs}. It is the problem of finding $h_0$ random bits for each node of the input graph such that if we run $\mathcal{A}(n_0)$ (that requires $t_0$ rounds) by using the random bits given by the output for $\Pi'$, we solve $\Pi$. We satisfy all requirements of \Cref{lem:PiPrime} to solve $\Pi'(\mathcal{A},n_0)$ in $O(\log \log n)$ rounds. Given a solution for $\Pi'$, we solve the original problem $\Pi$ on the whole tree as follows. We run the $t_0=O(1)$-round \LOCAL algorithm $\mathcal{A}(n_0)$ in $O(1)$ \CONGEST rounds, using the bits provided by the solution of $\Pi'$ as random bits for $\mathcal{A}(n_0)$. Given the solution to $\Pi'$, this process is deterministic and never fails.
 Thus, the overall randomized algorithm can only fail with probability $1/n^c$, the failure probability of solving $\Pi'$ with  \Cref{lem:PiPrime}.
\end{proof}

\subsection{Proof of Lemma~\ref{lem:PiPrime}: Solving \texorpdfstring{\boldmath $\Pi'$}{\textPi'}}
\label{ssec:solvingPiPrime}

In this section we prove the following lemma.

\lemPiPrime*

We first explain how the problem can be solved in the \LOCAL model and then present our much more involved solution in the \CONGEST model. The proof of \Cref{lem:PiPrime} that puts all arguments of the section together appears at the end of the section. For both approaches we require that $\Pi'$ can be solved with a randomized process that is a \lovasz Local Lemma (LLL) instance and whose dependency graph $H$ is in close relation to the communication network $G$. We next explain the LLL problem and the distributed LLL problem. 

\paragraph{Distributed \lovasz Local Lemma (LLL).}  Consider a set $\mathcal{V}$ of independent random variables, and a family $\mathcal{X}$ of $n$
(bad) events $\mathcal{E}_1, \ldots, \mathcal{E}_n$ on these variables. Each event $\mathcal{E}_i \in  \mathcal{X}$ depends on some subset $vbl(\mathcal{E}_i) \subseteq \mathcal{V}$ of variables. Define the dependency graph $H_{\mathcal{X}} = (\mathcal{X},\{(\mathcal{E},\mathcal{E}') \mid vbl(\mathcal{E}) \cap vbl(\mathcal{E}')\neq \emptyset \})$ that connects any two events
which share at least one variable. Let $d=\Delta_H$ be the maximum degree in this graph, i.e., each event $\mathcal{E}\in\mathcal{X}$ shares variables with at most $\Delta_H$ other events $\mathcal{E}'\in \mathcal{X}$. Finally, define $p=\max_{\mathcal{E}\in\mathcal{X}} \Pr(\mathcal{E})$. The \lovasz Local Lemma \cite{LLL73}  shows that $Pr(\cap_{\mathcal{E}\in \mathcal{X}\mathcal{E}})>0$ if $epd<1$. In the \emph{distributed LLL problem} each bad event and each variable  is simulated by a vertex of the communication network and the objective is to compute an assignment for the variables such that no bad event occurs.

\paragraph{\boldmath \LCL{} Problem $\Pi'$ can be solved by LLL.}
Given an instance of $\Pi'(\mathcal{A},n_0)$ on a tree $G$, let each vertex $v\in V$ select its string $R(v)$ of $h_0=h(n_0)=O(1)$ random bits uniformly at random. Associate a \emph{bad event} $\mathcal{E}_v$ to each vertex $v\in V$. The event $\mathcal{E}_v$ occurs if, by executing $\mathcal{A}(n_0)$ with the random bits assigned by $R$, the constraints of $\Pi$ are violated in the radius-$r$ neighborhood of $v$. The associated dependency graph $H$ has one vertex for each bad event $\mathcal{E}_v$ and there is an edge between two vertices $\mathcal{E}_u$ and $\mathcal{E}_v$ if the two events depend on each other, that is, if $u$ and $v$ are in distance at most $2(t_{\mathcal{A}}(n_0)+ r)$ in $G$.
Note that in general $H$ is not a tree.

\begin{claim}[$\Pi'$ can be solved by an LLL]\label{claim:PiPrimeLLL}
	Under the conditions of \Cref{lem:PiPrime} on $\Pi$, $\mathcal{A}$ and $n_0$, the described random process is an instance of LLL with dependency graph $H$ and LLL criterion $p ( \Delta_H \cdot e)^{26+2c}$. Its solution provides a solution to $\Pi'(\mathcal{A},n_0)$. Further, one round of \CONGEST or \LOCAL communication in $H$ can be simulated in $O(1)$ rounds in the same model.
\end{claim}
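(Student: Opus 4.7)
The plan is to unpack the three assertions of the claim in turn, treating them essentially as definition-chasing once the right variables, events, and distances are identified; the only ``real'' calculation is verifying the polynomial LLL criterion from the hypothesis on $n_0$.

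First, I would fix the LLL instance. The independent random variables are the strings $R(v)\in\{0,1\}^{h_0}$, one per vertex $v\in V(G)$. The bad event $\mathcal{E}_v$ is as in the statement: when $\mathcal{A}(n_0)$ is executed with the assignment $R$, the $r$-neighborhood of $v$ violates $\Pi$. Since $\mathcal{A}(n_0)$ has runtime $t_0=t_{\mathcal{A}}(n_0)$, the output of any node $u$ is a function of the random bits in its $t_0$-neighborhood; therefore the $\Pi$-constraint at $v$ depends exactly on the random bits in the $(r+t_0)$-neighborhood of $v$ in $G$. Hence $vbl(\mathcal{E}_u)\cap vbl(\mathcal{E}_v)\neq\emptyset$ iff the $(r+t_0)$-neighborhoods of $u$ and $v$ overlap, iff $d_G(u,v)\leq 2(r+t_0)$, which is precisely the edge set of the stated dependency graph $H$.

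Second, I would check the polynomial LLL criterion $p(e\Delta_H)^{26+2c}<1$. For the degree bound, since $G$ has maximum degree $\Delta$, the number of vertices within distance $2(r+t_0)$ of any $v$ is at most $\Delta^{2(r+t_0)}$, so $\Delta_H\leq\Delta^{2(r+t_0)}$. For the event probability, by the hypothesis of \Cref{lem:PiPrime} combined with \Cref{lem:fakeN}, $\mathcal{A}(n_0)$ has local failure probability at most $1/n_0$ at every radius-$r$ neighborhood, so $p=\max_v\Pr[\mathcal{E}_v]\leq 1/n_0$. Plugging in and using the hypothesis $n_0>(\Delta^{2(r+t_0)}\cdot e)^{26+2c}$ yields
\[
  p\,(e\Delta_H)^{26+2c}\;\leq\;\frac{(e\cdot\Delta^{2(r+t_0)})^{26+2c}}{n_0}\;<\;1,
\]
as required. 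That the solution of the LLL gives a solution of $\Pi'(\mathcal{A},n_0)$ is then immediate from the definition: if no $\mathcal{E}_v$ holds, then executing $\mathcal{A}(n_0)$ with random tape $R$ produces an output satisfying the $\Pi$-constraint at every node, which is exactly what $\Pi'$ demands of the bit string labeling.

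Third, for the simulation, I would observe that each edge of $H$ corresponds to a pair of nodes of $G$ at distance at most $2(r+t_0)=O(1)$; so one round of communication in $H$ is simulated by having each $G$-node exchange its $H$-message with every other node in its $2(r+t_0)$-hop ball in $G$. Because $\Delta$ is constant, each such ball contains $O(1)$ nodes; hence only $O(1)$ messages of $O(\log n)$ bits need to be routed through any edge of $G$, and standard constant-round pipelining delivers them within $O(1)$ rounds in both \CONGEST and \LOCAL. I do not expect a serious obstacle anywhere in the argument; the only spot requiring care is making sure the $(r+t_0)$ radius really captures all dependencies of $\mathcal{E}_v$ (not just the output at $v$ itself but at its whole $r$-ball), which is why the $2(r+t_0)$ distance, rather than $2t_0$, appears in the definition of $H$.
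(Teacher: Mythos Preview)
Your proposal is correct and follows essentially the same approach as the paper's proof, just with more detail: identify the LLL variables and events, bound $\Delta_H\le\Delta^{2(t_0+r)}$, use the local failure probability $p\le 1/n_0$ and the hypothesis on $n_0$ to get the criterion, and simulate $H$ in $G$ using that $\Delta_H=O(1)$. One small redundancy: the local failure bound $\le 1/n_0$ is already part of the hypothesis of \Cref{lem:PiPrime}, so you do not need to invoke \Cref{lem:fakeN} separately.
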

\begin{proof}
	Denote $t_0=t_{\mathcal{A}}(n_0)$.
The maximum degree $\Delta_H$ of $H$, i.e., the dependency degree of the LLL, is upper bounded by $\Delta^{2(t_0+r)}$.   Due to the assumption on $n_0$, $t_0$ and the failure probability of $1/n_0$ for $\mathcal{A}(n_0)$ in the statement of \Cref{lem:PiPrime}, the LLL criterion is satisfied with exponent $(26+2c)$, that is,
\begin{align}
	(1/n_0) \cdot (\Delta_H\cdot e)^{26+2c}< 1.
\end{align}

Since $\Delta_H$ is constant, one round of communication of a \LOCAL or \CONGEST algorithm on $H$ can be run in $G$ in $O(1)$ rounds in the same model. 
\end{proof}

\paragraph{\boldmath Solving $\Pi'$ in the \LOCAL model \cite{CP19,CHLPU18}.}
In the \LOCAL model, it would be possible to solve $\Pi'$ by using a standard LLL algorithm \cite{FGLLL17}. This algorithm consists of two phases, called preshattering and postshattering. In the preshattering phase, the algorithm fixes some variables (strings of random bits $R(v), v\in V$) such that all vertices/events of $H$ that have an unset variable/random bit in their vicinity induce small connected components in $H$, of size $N = O(\log n)$. Moreover, this partial assignment satisfies the following  property: completing this partial assignment is also an LLL problem with a slightly worse criterion. Then, to complete the assignment, in the postshattering phase the small components are solved with a deterministic $O(\log N) = O(\log\log n)$ round algorithm for so called tree-structured LLL instances \cite{CHLPU18}.

\paragraph{\boldmath Solving $\Pi'$ in the \CONGEST model.}

We use the rest of the section to devise an efficient algorithm for $\Pi'$ in the \CONGEST model, that is, we prove \Cref{lem:PiPrime}. In order to solve $\Pi'$ in the \CONGEST model, we use the same preshattering algorithm of \cite{FGLLL17}, as it directly works in the \CONGEST model, but we replace the postshattering phase and provide a bandwidth-efficient way to solve the small components. We next explain our whole \CONGEST algorithm for $\Pi'$; the proof of \Cref{lem:PiPrime} follows at the end of the section.

\paragraph{\boldmath Preshattering for $\Pi'$.}  In the following, let $x=\sqrt{1/n_0}$. We sketch the preshattering phase of \cite{FGLLL17} to point out that it can be implemented in \CONGEST. For more details we refer to \cite{FGLLL17}. First, each vertex obtains an ID chosen uniformly at random from the ID space $[n^{c+3}]$. The rest of the proof of \Cref{lem:PiPrime} is conditioned on the event that the chosen IDs are unique which happens with probability at least $1-1/n^{c+1}$.
Then, we compute a distance-$2$ coloring of $H$ with $\Delta_H^2+1$ colors (this can be done in $O(\poly \Delta_H+\logstar (n^{c+3}))=O(\logstar n)$ rounds in CONGEST, e.g., with Linial's algorithm \cite{linial87}). During the execution of the preshattering phase we have \emph{set} and \emph{unset} random bits as well as \emph{frozen} and \emph{unfrozen} random bits. At the beginning all random bits are unset, and no random bit is frozen.
Then, we iterate through the $\Delta_H^2+1=O(1)$ color classes. In iteration $i\in[\Delta_H^2+1]$ all nodes  $v$ with color $i$ in parallel sample all of their not frozen random bits uniformly at random. Then, a node $v$ checks whether the probability of $\mathcal{E}_u$ for any $u\in (\{v\}\cup N_H(v))$ has increased to $\geq x$, if so, the random bits of $v$ are unset and all random bits that $\mathcal{E}_u$ depends on are frozen.
All of these steps can be executed in $\poly\Delta_H=O(1)$ rounds in the \CONGEST model on the communication graph $G$.

\begin{observation}[\mbox{\cite[arxiv Observation 3.4]{FGLLL17}}]
	\label{obs:shatteringProbability}
	After the preshattering process, for each event $\mathcal{E}_v$ the probability of $\mathcal{E}_v$ having at least one unset variable, i.e., an unset random bit string stored at an $H$-neighbor $\mathcal{E}_u$ of $\mathcal{E}_v$, is at most $(\Delta_H + 1) \sqrt{1/n_0}$. Furthermore, this is independent of events that are further than $2$ hops (in $H$) from $\mathcal{E}_v$.
\end{observation}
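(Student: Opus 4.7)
My plan is to follow the Doob-martingale approach underlying the shattering analysis of \cite{FGLLL17}. The argument decomposes neatly into a martingale concentration step applied to each bad event individually, followed by a union bound over the few events that can affect $\mathcal{E}_v$.

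First I would fix a linear order on the iterations (the $\Delta_H^2+1$ color classes give one; ties within a class can be broken arbitrarily) and let $\mathcal{F}_t$ be the $\sigma$-algebra generated by the random bits revealed up to the end of step $t$. For every event $\mathcal{E}_w$, the process $X^w_t := \Pr[\mathcal{E}_w \mid \mathcal{F}_t]$ is a $[0,1]$-valued Doob martingale with $X^w_0 = \Pr[\mathcal{E}_w] \leq 1/n_0$, where the upper bound on the initial value comes from the local failure probability of $\mathcal{A}(n_0)$. Optional stopping at the hitting time $T_w := \inf\{t : X^w_t \geq x\}$, with $x = \sqrt{1/n_0}$, yields $\Pr[\sup_t X^w_t \geq x] \leq X^w_0 / x \leq \sqrt{1/n_0}$.

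Next I would translate this tail bound into a bound on unsettings. By the preshattering rule, a variable $R(u)$ ends up unset only if, during $u$'s iteration, some event $\mathcal{E}_w$ in the closed $H$-neighborhood of $u$ reaches conditional probability at least $x$, which forces $\sup_t X^w_t \geq x$. Since ``$\mathcal{E}_v$ has an unset variable'' means precisely that $R(u)$ is unset for some $u$ in the closed $H$-neighborhood of $v$, the relevant witness events $\mathcal{E}_w$ all lie within $H$-distance $2$ of $v$. The stated bound $(\Delta_H+1)\sqrt{1/n_0}$ then follows by combining the martingale tail estimate above with the charging/union argument of \cite{FGLLL17}, which assigns every unsetting a single witness event and uses that, once the variables of $\mathcal{E}_w$ are frozen, $\mathcal{E}_w$ can never cause another unsetting.

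For the independence claim I would observe that ``$\mathcal{E}_v$ has an unset variable'' is determined by (i) the random strings $R(u)$ for $u$ at $H$-distance at most $1$ from $v$, together with (ii) the triggering decisions at those $u$, which in turn depend only on events $\mathcal{E}_w$ whose variable-support touches $R(u)$---hence $w$ at $H$-distance at most $2$ from $v$. By definition of the dependency graph $H$, random bits sitting further than $2$ hops away in $H$ are disjoint from the variable-supports of all events within $2$ hops of $v$, so the event we are bounding is independent of everything happening outside this $2$-ball.

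The hard part is the bookkeeping that keeps the constant linear rather than quadratic in $\Delta_H$. A naive union bound over the $\Delta_H+1$ candidate vertices $u$ and the $\Delta_H+1$ candidate witnesses per $u$ yields an extra $\Delta_H$ factor; recovering the linear bound relies on the one-shot nature of each event's trigger and on charging each unsetting to a unique witness, as in \cite{FGLLL17}. A secondary subtlety is verifying that the color-class schedule and the freezing rule respect the martingale structure, which is why it matters that same-color vertices lie at $H$-distance at least $3$ and therefore sample disjoint variable sets.
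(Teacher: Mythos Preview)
The paper does not prove this observation; it is quoted verbatim from \cite{FGLLL17}. So there is no ``paper's proof'' to compare against, and I evaluate your argument on its merits.

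Your martingale-plus-union-bound skeleton is the right one, but the step you flag as ``the hard part'' is where the argument actually breaks. You locate the witness event $\mathcal{E}_w$ only at $H$-distance at most $2$ from $v$ (via $w\in N_H[u]$ and $u\in N_H[v]$) and then try to collapse the quadratic count to $\Delta_H+1$ by charging each unsetting to a unique triggering event, asserting that ``once the variables of $\mathcal{E}_w$ are frozen, $\mathcal{E}_w$ can never cause another unsetting.'' That charging is false: when $\mathcal{E}_w$ triggers, \emph{all} of its not-yet-sampled variables are frozen and therefore end up unset, so a single trigger can be the witness for up to $|\mathrm{vbl}(\mathcal{E}_w)|$ many unset variables. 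No one-to-one charging is available.

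The linear factor comes from a simpler observation you are missing. If $R(u)$ is a variable of $\mathcal{E}_v$ that ends up unset, then---whether $R(u)$ was frozen before $u$'s turn or unset during it---there is a triggering event $\mathcal{E}_w$ that \emph{depends on $R(u)$} (in case (b) this uses the invariant that all events had probability $<x$ before $u$ sampled, so only events containing $R(u)$ can cross the threshold). But $\mathcal{E}_v$ also depends on $R(u)$, so $\mathcal{E}_w$ and $\mathcal{E}_v$ share a variable, i.e.\ $w\in N_H[v]$ \emph{by the definition of the dependency graph}. Hence a plain union bound over the $\Delta_H+1$ events in $N_H[v]$, each triggering with probability at most $\sqrt{1/n_0}$ by the martingale estimate, already gives $(\Delta_H+1)\sqrt{1/n_0}$. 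No charging is needed, and your distance-$2$ detour is what created the spurious extra factor in the first place.

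A secondary gap: your Doob martingale $X^w_t=\Pr[\mathcal{E}_w\mid\mathcal{F}_t]$ is defined against the filtration of \emph{revealed} bits, but the preshattering check is made against the conditional probability given the \emph{currently set} (i.e., not-unset) bits. These differ whenever another event's trigger causes an unsetting of a variable shared with $\mathcal{E}_w$. The optional-stopping bound $\Pr[\sup_t X^w_t\ge x]\le X^w_0/x$ therefore does not immediately control the trigger probability of $\mathcal{E}_w$ in the actual process. One still gets the desired bound, but it needs a short extra argument tracking $\mathcal{E}_w$'s conditional probability along the kept samples; your write-up does not address this.
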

We call vertices/events $\mathcal{E}_v$ \emph{undecided} if one of the random bit strings that the event depends on is unset, i.e., if one of the random bit strings in $v$'s $t_0+r$ neighborhood in $G$ is unset, all other nodes are called \emph{decided}. Note that, in general, even if a node is undecided, it may be the case that its bit string is already set.

As used in \cite{FGLLL17}, \Cref{obs:shatteringProbability} and \Cref{lem:shattering} (the shattering Lemma from \cite{FGLLL17})  imply that only small (independent) connected components remain undecided, as stated in the next claim. Recall that $c$ is the arbitrary constant in \Cref{thm:treesrandSublogGap} and \Cref{lem:PiPrime}, respectively.
\begin{claim}[\mbox{\cite{FGLLL17}}]
	\label{claim:smallComponents}
	After the preshattering phase the undecided vertices of $H$ induce small components $C^H_1,\ldots, C^H_k$, each of size at most $\poly(\Delta_H)\cdot \log n=O(\log n)$ with probability $1-1/n^{c+1}$.	
\end{claim}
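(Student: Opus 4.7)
The plan is to reduce the claim to the standard shattering lemma of \cite{FGLLL17} (referenced here as ``\Cref{lem:shattering}''), using \Cref{obs:shatteringProbability} as the per-event probability bound. The task therefore splits into (i) identifying the ``bad'' event of interest at each node of $H$, (ii) verifying its probability and independence conditions, and (iii) translating the shattering lemma's output into the $O(\log n)$ component-size bound with the desired failure probability $1/n^{c+1}$.

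First, I would fix, for each event $\mathcal{E}_v \in V(H)$, the indicator $X_v$ that $\mathcal{E}_v$ is \emph{undecided} in the sense defined just above the claim: at least one random bit string that $\mathcal{E}_v$ depends on is unset after preshattering. By \Cref{obs:shatteringProbability}, we have $\Pr[X_v = 1] \le (\Delta_H + 1)\sqrt{1/n_0}$, and moreover $X_v$ is a function of the preshattering process restricted to the $2$-hop neighborhood of $\mathcal{E}_v$ in $H$ — so $\{X_v\}_{v}$ is an $r'$-dependent family with $r' = 2$. This $r$-dependence (and not genuine independence across distant events) is exactly what the shattering lemma is designed to exploit.

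Next I would check that the LLL-like criterion of the shattering lemma is satisfied. The lemma requires a bound of the form $p \cdot (\Delta_H)^{\Theta(1)} < 1$ with sufficient slack; concretely, $p \le \Delta_H^{-\kappa}$ for a constant $\kappa$ determined by the constants in the shattering lemma and by the desired exponent $c+1$ on the failure probability. From the hypothesis of \Cref{lem:PiPrime} we have $n_0 > (\Delta_H \cdot e)^{26 + 2c}$, and hence
\[
p \;\le\; (\Delta_H + 1)\sqrt{1/n_0} \;\le\; (\Delta_H \cdot e)^{-(13+c)}\cdot (\Delta_H+1),
\]
which is more than enough slack to feed into the shattering lemma with exponent $c+1$ on the final high-probability bound. (The constant $26+2c$ was chosen precisely so that after the square root loss in \Cref{obs:shatteringProbability}, enough powers of $\Delta_H$ remain to absorb the polynomial factors of the shattering lemma.)

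Finally, applying \Cref{lem:shattering} with dependence degree $\Delta_H$, dependence radius $2$, and the per-event bound $p$ above, we obtain that with probability at least $1 - 1/n^{c+1}$ every connected component of the subgraph of $H$ induced by the undecided events has size at most $\poly(\Delta_H) \cdot \log n = O(\log n)$, where the $\poly(\Delta_H)$ factor absorbs the $O(1)$ constants inherited from the shattering lemma. This is exactly the statement of the claim. The only step that requires any real care is bookkeeping — matching the exponent $26+2c$ in the hypothesis on $n_0$ to the polynomial in $\Delta_H$ that the shattering lemma demands, and to the $1/n^{c+1}$ failure budget — but no new ideas are needed beyond \cite{FGLLL17}.
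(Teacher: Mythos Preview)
Your proposal is correct and follows essentially the same route as the paper's proof: both identify the undecided events as the bad set $B$, invoke \Cref{obs:shatteringProbability} for the per-event probability and the $2$-hop dependence radius, use the hypothesis $n_0 > (\Delta_H \cdot e)^{26+2c}$ to obtain $p \le \Delta_H^{-(12+c)}$, and then feed these parameters into \Cref{lem:shattering} (with $c_1=12+c$, $c_2=2$, $c_3=c+1$) to get components of size $O(\Delta_H^4 \log n)=O(\log n)$ with probability $1-1/n^{c+1}$. The paper merely spells out the explicit constants where you leave them implicit.
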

\begin{proof}
	The goal is to apply \Cref{lem:shattering} with $c_1=12+c$, $c_2=2$ and $c_3=c+1$ (satisfying $c_1>c_3+ 4c_2 + 2$) and with the preshattering process where the set $B$ in the lemma statement consists of undecided nodes.
	By \Cref{obs:shatteringProbability} and using $\Delta_H+1\leq e\Delta_H$ for $\Delta_H\geq 2$ we obtain
	\begin{align}
		Pr(v\in B)\leq (\Delta_H+1)\sqrt{1/n_0}\leq  (\Delta_H+1) \cdot (e\Delta_H)^{-(13+c)}\leq (1/\Delta_H)^{12+c}=(1/\Delta_H)^{c_1}~.
	\end{align}
Thus, by \Cref{lem:shattering} the maximal components of undecided nodes are of size $O(\Delta_H^{2c_2}\log n)=O(\Delta_H^{4}\log n)=O(\log n)$ with probability at least $1-1/n^{c+1}$.
\end{proof}
In \Cref{claim:smallComponents} two distinct components $C^H_i$ and $C^H_j$ are not connected by an edge in $H$, that is, they have distance at least $2(t_0+r)+1$ in $G$.
We do not bound the number of components $k$.

\paragraph{Postshattering.}
When dealing with small components we need to deal with vertices that have already decided on their strings of random bits (these were set in the preshattering phase) and vertices that are still undecided about their random bits.
Next, we define the problem that we solve on the small components. The precise instances of the small components are described later. While the definition of $\Pi''(\mathcal{A},n_0,x)$ is more general, we only use it for sublogarithmic algorithm $\mathcal{A}$, the choice of $n_0$ from the statement of \Cref{lem:PiPrime}, and for $x=\sqrt{1/n_0}$.

\begin{definition}[\LCL{} problem $\Pi''$]\label{def:variousLCLs2}
	Let $\Pi$ be an \LCL{} with checking radius $r$, $\mathcal{A}$ a randomized algorithm for $\Pi$ and $n_0\in \NN$ and $0\leq x\leq 1$ parameters.
	Let $t_0=t_{\mathcal{A}}(n_0)$, $h_0=h_{\mathcal{A}}(n_0)$ be upper bounds on the runtime and on the number of random bits used by a node when running $\mathcal{A}$ on graphs with at most $n_0$ nodes.
	
	\textbf{Problem $\Pi''(\mathcal{A}, n_0,x)$:}
		Each node has in input a pair, where the first element is a possible input for $\Pi$, and the second element can either be an empty string, or a bit string of length $h_0$. The output of each node is either a bit string of $h_0$ bits, or a wildcard \wild. A neighborhood of radius $r + t_0 $ centered around some node $v$ is feasible if and only if the following holds. If $v$ outputs \wild, then it must hold that by running $\mathcal{A}(n_0)$ on nodes in the $r$-radius neighborhood of $v$ for $t_0$ rounds we obtain a radius-$r$ neighborhood that with probability strictly greater than $x$ violates the constraints of $\Pi$, where all random bits that are not predetermined by the input are chosen uniformly at random. If $v$ does not output a wildcard, and an input string of $h_0$ bits is provided in input, then the same bit string must be given in output. Also, a neighborhood of a node not having an input string of $h_0$ bits is always valid if there is a node different from $v$ in the neighborhood that is outputting a wildcard. Finally, if no node in a neighborhood is outputting a wildcard, the output is valid if and only if by running $\mathcal{A}(n_0)$ for $t_0$ rounds on each node in the radius-$r$ neighborhood of $v$ using the bits given by the output of $\Pi'$, we obtain a neighborhood that is correct according to $\Pi$. The checkability radius is $t_0 + r $.	
\end{definition}
In problem $\Pi''(\mathcal{A}, n_0, x)$ a vertex needs to output a random bit string or a wildcard \wild. The inputs of vertices and the value $x$ determines which vertices are allowed to output wildcards. The wildcards in the definition of the problem ensure that the  problem is always solvable for a suitable choice of $x$. The following technical statement is standalone but its motivation will only become clear once we use it thereafter to solve problem $\Pi'$. The proof of \Cref{lem:Pi3Congest} is postponed to \Cref{ssec:solvingPi3Prime}.

\begin{restatable}[Solving $\Pi''$]{lemma}{lemPiThreeCongest}
\label{lem:Pi3Congest}
	Let $\Pi$ be an \LCL{} problem with checking radius $r$ and maximum degree $\Delta$ and let $\mathcal{A}$ be a randomized algorithm for $\Pi$.
	Let $n_0\in \NN$ and $0\leq x \leq 1$ be constant values that satisfy $e\cdot x\cdot \Delta^{4(t_{\mathcal{A}}(n_0)+r)}<1$.

	Then there is a deterministic \CONGEST algorithm to solve $\Pi''(\mathcal{A},n_0,x)$ on trees with at most $n$ vertices and given unique IDs (or a large enough $O(1)$-distance coloring) from an ID space $\mathcal{S}$ in  $O(\log n)+O(\logstar |\mathcal{S}|)$ rounds. In particular, if the ID space is at most exponential in $n$, then the runtime is $O(\log n)$.
\end{restatable}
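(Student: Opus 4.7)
The idea is to solve $\Pi''$ in the LOCAL model by a fast randomized algorithm and then invoke Theorem~\ref{thm:gapCONGESTrakeCompress-treePolyGap} to lift it to a deterministic CONGEST algorithm of the same asymptotic complexity. First I would observe that $\Pi''(\mathcal{A},n_0,x)$ is a bona fide LCL on trees: since $h_0$, $r$, $t_0$, $\Delta$, $x$ are absolute constants depending only on $\Pi$ and $\mathcal{A}$, the input and output alphabets are finite (outputs are bit strings of length $h_0$ or the symbol \wild) and the checking radius $r+t_0$ is constant. In particular $\Pi''$ is eligible as an input to Theorem~\ref{thm:gapCONGESTrakeCompress-treePolyGap}.

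The core of the plan is to design a randomized LOCAL algorithm $\mathcal{R}$ for $\Pi''$ that runs in $O(\log n)$ rounds with failure probability at most $1/n$. In a first (wildcard) phase lasting $O(r+t_0)=O(1)$ rounds, each node $v$ gathers its radius-$(r+t_0)$ neighborhood (inputs, topology and predetermined bits) and computes, by exhaustive enumeration of the constantly many free bit strings there, the probability that $\mathcal{A}(n_0)$ violates a $\Pi$-constraint somewhere in its radius-$r$ neighborhood. If this probability strictly exceeds $x$, node $v$ outputs \wild; the definition of $\Pi''$ then automatically discharges the constraint at $v$ and at every node within distance $r+t_0$ of $v$.

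In the second (LLL) phase, the surviving bad events $\mathcal{E}_v$ each have probability at most $x$ and their dependency graph has maximum degree $d\le\Delta^{2(t_0+r)}$, so by the hypothesis we have $e\cdot x\cdot d^2 \le e\cdot x\cdot\Delta^{4(t_0+r)}<1$ and are left with a tree-structured LLL instance satisfying a polynomial criterion. I would then plug in the randomized LOCAL LLL algorithm of~\cite{CPS17}, which on such instances produces a satisfying bit-string assignment in $O(\log n)$ rounds with failure probability $1/\poly(n)$, which we may tune to be at most $1/n$. Predetermined input bits are treated as fixed rather than free variables (this never raises a local event probability and so does not hurt the criterion), and nodes whose $(r+t_0)$-neighborhood already contains a wildcard need not participate, since any bit string they emit satisfies the $\Pi''$-constraint.

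Invoking Theorem~\ref{thm:gapCONGESTrakeCompress-treePolyGap} on $\mathcal{R}$ with $T(n)=O(\log n)\in\{n^{o(1)}\}\subseteq T^{\mathrm{slow}}$ then yields a deterministic CONGEST algorithm for $\Pi''$ of complexity $f(n^{o(1)})=O(\log n)$; the remark following that theorem accounts for the additive $O(\logstar|\mathcal{S}|)$ term and for the possibility of replacing unique IDs by a suitable distance coloring. The only delicate point is verifying that the LLL instance left behind after the wildcard phase and after fixing predetermined bits is genuinely tree-structured and still meets the criterion required by~\cite{CPS17}; both follow because removing events and fixing variables never violates the original criterion, but keeping the argument clean requires some care with the bookkeeping of which vertices participate and how their neighborhoods overlap.
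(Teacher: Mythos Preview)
Your proposal is correct and follows essentially the same approach as the paper: a constant-round wildcard phase, then formulate the remaining task as an LLL instance with $p\le x$ and $d\le\Delta^{2(t_0+r)}$ so that $epd^2<1$, solve it via the $O(\log n)$-round randomized \LOCAL LLL algorithm of~\cite{CPS17}, and finally invoke Theorem~\ref{thm:gapCONGESTrakeCompress-treePolyGap} to obtain the deterministic \CONGEST algorithm. One terminological quibble: the dependency graph here is a power of the tree and hence not itself a tree, so calling the instance ``tree-structured'' is slightly misleading---but this is harmless since the~\cite{CPS17} algorithm you cite (as stated in the paper's \Cref{thm:CPSLLL}) works on arbitrary dependency graphs.
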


 We show that the components $C^H_1,\ldots, C^H_k$ induced by undecided vertices (together with some surrounding vertices to be defined soon) form independent instances of $\Pi''(n_0,x)$. First, recall from \Cref{def:variousLCLs2} that $\Pi''=\Pi''(\mathcal{A},n_0,x)$ is the same problem as $\Pi'$, but where some nodes might already have their random bits fixed, and a node $v$ is allowed to output \wild if the predetermined random bits in its $(t_0+r)$-radius neighborhood are such that $\mathcal{A}(n_0)$ run for $t_0$ rounds on the nodes in the $r$-radius neighborhood of $v$ has a probability of failure of at least $\geq x$, where undetermined random bits are chosen uniformly at random.

Let $\mathcal{A}$ and $n_0$ be as in the statement of \Cref{lem:PiPrime} and recall that $x=\sqrt{1/n_0}$.

\begin{figure}[h]
	\centering
	\includegraphics[page=8,width=\textwidth]{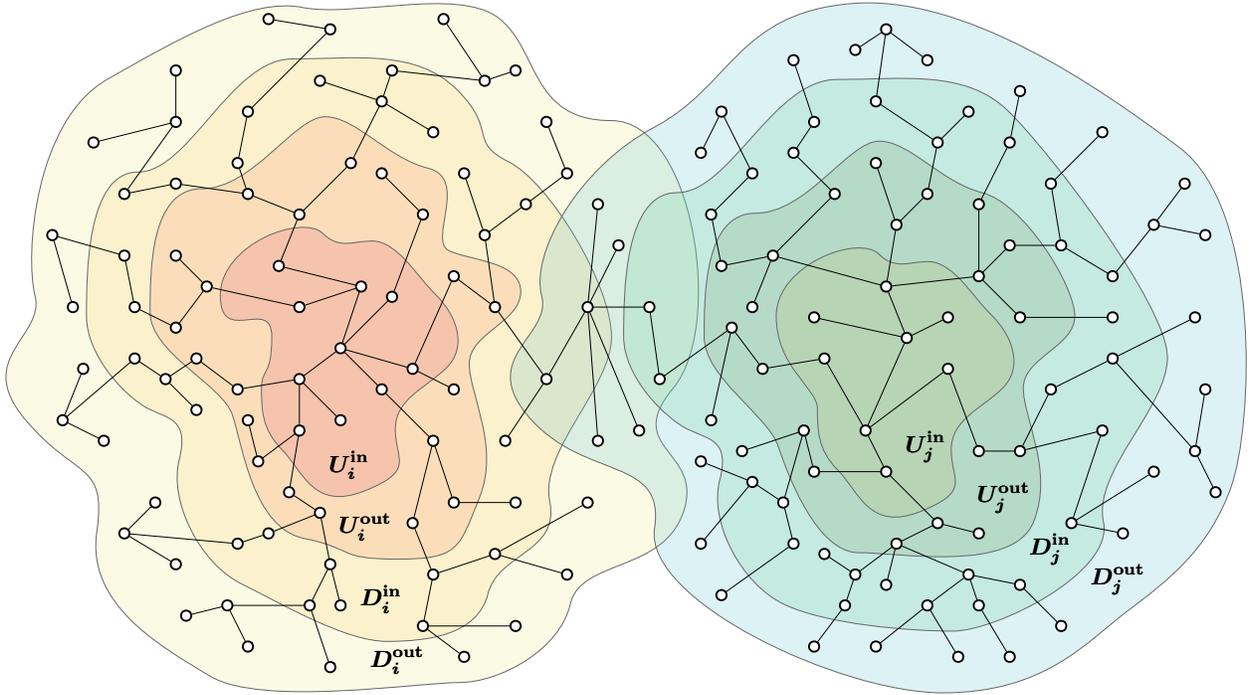}
	\caption{In this illustration we use the values $t_0=1$ and $r=1$. Nodes in $\uin_i$ have their bit strings unfixed, nodes in $\uout_i$ see a node with unfixed bit strings in their checking radius ($t_0+r=2$) but have their own bit strings fixed; nodes in $D_{i}=\din_i\cup\dout_i$ are decided, that is, all bit strings in their $(t_0+r=2)$ neighborhood are fixed and they could run $\mathcal{A}(n_0)$, using these bit strings. }
	\label{fig:smallComp}
\end{figure}

For each component $C^H_i$ we define the four sets of nodes $\uin_i,\uout_i,\din_i,$ and $\dout_i$. For an illustration of their definitions see \Cref{fig:smallComp}.
 Let $\uin_i$ be the nodes of $C^H_i$ that do not have their bit string fixed.  Let $\uout_i$ be the nodes of $C^H_i$ that have their bit string fixed. We obtain $C_i^H=U_i=\uin_i\cup\uout_i$. Let $\din_i$ be nodes that are not in $U_i$ and have at least one node of $\uout_i$ at distance at most $t_0 + r$. Let $\dout_i$ be nodes that are not in $U_i\cup \din_i$ and have a vertex in $\din_i$ at distance at most $t_0 + r$.  Note that $\uin_i$ and $\uout_i$ only contain undecided nodes, and we later prove that $\din_i$ and $\dout_i$ only contain decided nodes.

For each component $C^H_i, i\in[k]$, we obtain an instance of $\Pi''=\Pi''(\mathcal{A}, n_0, x)$ on $C_i^G\subseteq V(G)$ as follows: $C^G_i$ consists of the graph induced by all vertices in $\uin_i \cup \uout_i \cup \din_i \cup \dout_i$. Inputs for $\Pi''$ are inherited from the inputs for $\Pi'$, where, additionally, nodes such that the preshattering phase fixed their bit string are provided with that bit string as input. Note that the defined instances of $\Pi''$ are solvable, because the provided partial assignment either defines an LLL instances, or nodes are allowed to output \wild.

For $1\leq i\leq k$, define $D_i=\din_i \cup \dout_i$ and recall that $U_i=\uin_i\cup \uout_i=C_i^H$ holds by definition.
\begin{observation}\label{obs:independentInstances}
	For any $1\leq i \neq j\leq k$, we have  $D_i \cap U_j= \emptyset$, and hence
$D_i$ only contains  decided nodes.
\end{observation}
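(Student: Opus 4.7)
}
The plan is to exploit the key geometric separation property stated right after \Cref{claim:smallComponents}: any two distinct components $C_i^H$ and $C_j^H$ of undecided vertices are at distance at least $2(t_0+r)+1$ in $G$ (equivalently, no edge of $H$ connects them). I will combine this separation with the distance bounds built into the definitions of $\din_i$ and $\dout_i$.

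First I would unfold the definitions in terms of $G$-distances. By construction, $\din_i$ consists of nodes outside $U_i$ that lie within $G$-distance $t_0+r$ of some node of $\uout_i \subseteq U_i$, and $\dout_i$ consists of nodes outside $U_i \cup \din_i$ that lie within $G$-distance $t_0+r$ of some node of $\din_i$. Composing these two bounds, every vertex $v \in D_i = \din_i \cup \dout_i$ is at $G$-distance at most $2(t_0+r)$ from some vertex $u \in \uout_i \subseteq U_i = C_i^H$. Moreover, since $\uout_i \subseteq C_i^H$, the witness $u$ is in particular undecided.

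Now suppose for contradiction that some $v \in D_i \cap U_j$ with $j \neq i$. Since $v \in U_j = C_j^H$, the vertex $v$ is undecided. By the previous step there is an undecided $u \in C_i^H$ with $\mathrm{dist}_G(u,v) \le 2(t_0+r)$. But then, by the definition of $H$, the events $\mathcal{E}_u$ and $\mathcal{E}_v$ are adjacent in $H$, so $u$ and $v$ lie in the same connected component of undecided vertices. This contradicts the fact (from \Cref{claim:smallComponents} and the remark following it) that distinct components $C_i^H$ and $C_j^H$ have $G$-distance strictly greater than $2(t_0+r)$. Hence $D_i \cap U_j = \emptyset$ whenever $i \neq j$.

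For the "hence" part: by definition of $\din_i$ and $\dout_i$ we also have $D_i \cap U_i = \emptyset$, so combined with the above $D_i$ is disjoint from every $U_\ell = C_\ell^H$. Since the components $C_1^H,\dots,C_k^H$ partition the set of undecided vertices, $D_i$ contains no undecided vertex, i.e., every vertex in $D_i$ is decided. The main (very mild) obstacle is keeping the distance accounting straight, since one has to chain two $t_0+r$-hops through $\din_i$ and be careful that the "witness" $u$ for $v \in D_i$ lies in $\uout_i$ (which is undecided) and not in $\uin_i$; the definitions are set up precisely so this chain goes through.
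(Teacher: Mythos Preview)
Your proof is correct and follows essentially the same approach as the paper: both arguments use that every node in $D_i$ is within $G$-distance $2(t_0+r)$ of $U_i$, while distinct components $U_i=C_i^H$ and $U_j=C_j^H$ are at $G$-distance strictly greater than $2(t_0+r)$ (being non-adjacent in $H$), forcing $D_i\cap U_j=\emptyset$; the ``hence'' part then follows since the $U_\ell$ cover all undecided nodes and $D_i\cap U_i=\emptyset$ by definition. Your remark about the witness lying in $\uout_i$ rather than $\uin_i$ is an unnecessary worry---both are subsets of $U_i=C_i^H$ and hence undecided, so the paper simply says ``distance at most $2(t_0+r)$ from $U_i$'' without distinguishing.
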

\begin{proof}
	As $U_i=C_i^H$ and $U_j=C_j^H$ are maximal connected components in $H$, there is no edge in $H$ between any $u\in U_i$ and $u'\in U_j$. Thus, nodes in $U_i$ are at distance (measured in hops in $G$) strictly larger than $2(t_0 +r)$ from nodes in $U_j$. But by the definition of $D_i$, each node in $D_i$ is in distance at most $2(t_0+r)$ from $U_i$, that is, $D_i$ and $U_j$ must have an empty intersection. Since a node cannot be decided and undecided at the same time,  and since $D_i\cap U_i=\emptyset$ and $D_i\cap U_j=\emptyset$ holds for $j\neq i$, the set $D_i$ only contains decided nodes.
\end{proof}
The size of each instance $C_i^G$ is upper bounded by $N := \Delta^{2(t_0+r)}\cdot |C^H_i|=O(\log n)$. We do not need to bound the total number of instances.  We use \Cref{lem:Pi3Congest} in parallel on all $C_1^G, \ldots, C_k^G$ to solve the $k$ instances of $\Pi''$ in $O(\log N)=O(\log\log n)$ rounds, deterministically, in \CONGEST (IDs are given from the preshattering phase); then we prove that these solutions together with already determined bit strings solve $\Pi'$ on $G$ (the precise lifting of solutions is defined later).
First note that the condition on $x$ required to apply \Cref{lem:Pi3Congest} is satisfied as the assumption on $n_0$ in \Cref{lem:PiPrime} and the choice of $x=\sqrt{1/n_0}$ imply
\begin{align}
	\label{eqn:xisGood}
e\cdot x\cdot \Delta^{4(t_{\mathcal{A}}(n_0)+r)}=e\cdot \sqrt{1/n_0}\cdot \Delta^{4(t_{\mathcal{A}}(n_0)+r)}\leq e \left(\Delta^{2(t_{\mathcal{A}}+r)}e\right)^{-(13+c)}\cdot \Delta^{4(t_{\mathcal{A}}(n_0)+r)}<1~.
\end{align}
Inequality~\ref{eqn:xisGood} has significant slack as the conditions on $n_0$ needed to prove \Cref{claim:smallComponents} are much stricter.

We obtain a solution for $\Pi'$ as follows: for each $i$, each node in $\uin_i$ uses the bit string obtained from the solution of $\Pi''$ in the instance $C^G_i$ (this is well defined, since $\uin_i \cap \uin_j = \emptyset$ for $i\neq j$), while each other node keeps the bit string obtained in the preshattering phase. Due to \Cref{obs:independentInstances}, only decided nodes (that is, nodes in $\din$ and $\dout$) can participate in multiple instances, but as their output comes from the preshattering phase, they do not obtain conflicting outputs from different instances. Further, there is no bandwidth issue as they participate in at most $\Delta^{O(t_0+r)}=O(1)$ instances.

We need to prove that nodes in $\uin_i$ do not output \wild, and that, by combining the solutions of $\Pi''$ obtained for $\uin$ nodes with the solution already provided in input for $\uout$,$\din$, and $\dout$ nodes, we obtain a valid solution for $\Pi'$.

\begin{claim}\label{claim:nowild}
All nodes in $\uin_i \cup \uout_i \cup \din_i$ do not output wildcards in the instance of $\Pi''$ on $C^G_i$.
\end{claim}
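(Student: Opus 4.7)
The plan is to leverage an invariant maintained by the preshattering phase. By construction, whenever sampling new random bits in a given iteration would push the conditional probability of some event $\mathcal{E}_u$ (given the currently fixed bits) up to $x$, those offending bits are unset. Hence after preshattering terminates, every vertex $u \in V(G)$ satisfies $\Pr[\mathcal{E}_u \mid \text{fixed bits in } G] < x$. The definition of $\Pi''$ permits a node $v$ to output \wild only when the corresponding failure probability in the $\Pi''$-instance on $C_i^G$ is \emph{strictly greater} than $x$. So it will suffice to identify the $\Pi''$-failure probability at $v$ with the global post-preshattering probability $\Pr[\mathcal{E}_v \mid \text{fixed bits}]$ for every $v \in \uin_i \cup \uout_i \cup \din_i$.

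To make this identification I would first recall that $\mathcal{E}_v$ depends only on the random bits stored at nodes within distance $t_0+r$ of $v$ in $G$. So I need to check that (a) the entire $(t_0+r)$-neighborhood of $v$ in $G$ lies inside $C_i^G$, and (b) the $\Pi''$-input on $C_i^G$ records exactly the preshattering status of each such node. Part (b) is built into the construction of the instance: nodes in $\uout_i \cup \din_i \cup \dout_i$ hand in their preshattering-fixed bits as input, while nodes in $\uin_i$ receive no input bits and remain to be sampled uniformly, matching the global situation exactly.

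For the containment in (a), I would do a short case analysis on where $v$ lives. If $v \in \uin_i$, then any $u$ within distance $t_0+r$ of $v$ has $v$'s unfixed bits in its own $(t_0+r)$-neighborhood and is thus undecided; since $u$ and $v$ are within distance $2(t_0+r)$ in $G$ they are $H$-adjacent, so $u \in C_i^H = U_i$. If $v \in \uout_i$, then $u$ is either undecided (again $u \in U_i$) or decided, and in the decided case its proximity to the $\uout_i$-node $v$ places it in $\din_i$. If $v \in \din_i$, then $u$ is either in $U_i \cup \din_i$, or else it fails those memberships while still having the $\din_i$-node $v$ within distance $t_0+r$, putting $u \in \dout_i$. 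In every subcase $u \in C_i^G$. Combined with (b), the $\Pi''$-failure probability at $v$ equals $\Pr[\mathcal{E}_v \mid \text{fixed bits}] < x$, which cannot strictly exceed $x$, so the \wild-clause of $\Pi''$ forbids $v$ from outputting \wild.

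The main obstacle I expect is precisely this containment bookkeeping, since one has to keep straight which of $\uin_i, \uout_i, \din_i, \dout_i$ each neighbor of $v$ belongs to based on whether it is decided and on whom it is close to; once the case analysis is in place, the preshattering invariant delivers the strict inequality $< x$ that contradicts the \wild-condition in one line.
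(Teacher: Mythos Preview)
Your proposal is correct and follows essentially the same approach as the paper: both arguments rest on (i) the preshattering invariant that $\Pr[\mathcal{E}_v \mid \text{fixed bits}] < x$ for every $v$, and (ii) the observation that for $v \in \uin_i \cup \uout_i \cup \din_i$ the $(t_0+r)$-neighborhood in $C_i^G$ coincides with that in $G$, so the $\Pi''$-failure probability at $v$ equals the global one and cannot exceed $x$. Your explicit case analysis for the containment in (a) is more detailed than the paper's one-line assertion, but the logic is identical.
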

\begin{proof}
	A node is allowed to output \wild only if by running $\mathcal{A}(n_0)$ on nodes in the $r$-radius neighborhood of $v$ for $t_0$ rounds we obtain a radius-$r$ neighborhood that with probability strictly greater than $x$ violates the constraints of $\Pi$, where all random bits that are not predetermined by the input are chosen uniformly at random.
	Note that the view of nodes $\uin_i \cup \uout_i \cup \din_i$ at distance at most $t_0 + r$ is the same in $G$ and in $C^G_i$, since all nodes at distance at most $t_0 + r$ are either in $\uin_i \cup \uout_i \cup \din_i$ itself, or in $\dout_i$. As the preshattering process freezes all random bits involved in an event $\mathcal{E}_v$ in a state where the probability of the event is $<x$, it is guaranteed that no node $v\in \uin_i \cup \uout_i \cup \din_i$ can output a wildcard \wild in a solution to $\Pi''$.
\end{proof}
We emphasize that nodes in $\dout_i$ might output wildcards as their $(t_0+r)$-hop view in $C_{i}^G$ might be different from their view in $G$.

\begin{claim}\label{claim:validSolution}
	The obtained solution is valid for $\Pi'$.
\end{claim}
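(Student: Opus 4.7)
The plan is a node-by-node verification. Fix an arbitrary $v \in V(G)$; the $\Pi'$-constraint at $v$ asks that running $\mathcal{A}(n_0)$ with the final bit strings on $v$'s $r$-ball yields a $\Pi$-valid labeling at $v$, so only the bit strings in $v$'s $(t_0+r)$-ball in $G$ can possibly matter. Either (i)~$v$ is decided (equivalently, $v \notin \bigcup_j U_j$), or (ii)~$v \in U_j$ for some unique index~$j$.

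The lemma I would establish first is a structural observation: for any $w \in \uin_j$, every node at $G$-distance at most $t_0+r$ from $w$ belongs to $U_j$. Indeed, $w$'s bit string is unset, hence any such node has an unset bit in its own $(t_0+r)$-neighborhood, so its associated event is undecided; maximality of the connected component $U_j$ in $H$ then forces membership. A corollary is that for $v \in U_j$ the ball $B_G(v, t_0+r)$ is contained in $\uin_j \cup \uout_j \cup \din_j \subseteq C_j^G$: each node in the ball is either undecided (hence in $U_j = \uin_j \cup \uout_j$), or decided, in which case the observation (with roles swapped) forbids $v \in \uin_j$, so $v \in \uout_j$ and the decided node falls in $\din_j$ by definition.

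In Case~(i) the observation implies that $B_G(v,t_0+r)$ contains no $\uin$-node, so every bit that is read remains exactly the bit fixed in preshattering. The preshattering invariant is that the probability of $\mathcal{E}_v$ conditioned on the currently fixed bits never exceeds $x < 1$; once $vbl(\mathcal{E}_v)$ is completely fixed this conditional probability lies in $\{0,1\}$, hence must equal~$0$, so $\mathcal{E}_v$ does not occur and the $\Pi$-constraint at $v$ holds. In Case~(ii) we have $B_G(v,t_0+r) \subseteq C_j^G$ and by \Cref{claim:nowild} no node in $\uin_j \cup \uout_j \cup \din_j$ outputs \wild, so the $\Pi''$-constraint at $v$ reduces, via \Cref{def:variousLCLs2}, exactly to the $\Pi'$-constraint. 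One only needs to note that the global bits agree with the $\Pi''$-solution on $B_G(v,t_0+r)$: nodes in $\uin_j$ take their bits from the $\Pi''$-solution by construction, while nodes in $\uout_j \cup \din_j$ keep their preshattering bits, which are supplied as input to $\Pi''$ and which the problem definition requires the $\Pi''$-output to reproduce verbatim.

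The main obstacle of the plan is the containment $B_G(v,t_0+r) \subseteq C_j^G$ in Case~(ii), which is precisely what the structural observation on $\uin_j$ is designed to deliver; once that is in hand, the remaining bookkeeping is an unpacking of the definitions of $\Pi'$, $\Pi''$, decided/undecided nodes, and \Cref{claim:nowild}.
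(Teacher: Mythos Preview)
Your proof is correct and follows the same two-case strategy as the paper: for decided $v$ use that all bits in $B_G(v,t_0+r)$ are those fixed by preshattering, and for $v\in U_j$ use the containment $B_G(v,t_0+r)\subseteq C_j^G$ together with \Cref{claim:nowild} to reduce the $\Pi''$-constraint to the $\Pi'$-constraint. You are more explicit than the paper on two points it glosses over---the structural observation that justifies the view-containment, and the $\{0,1\}$-argument extracting validity for decided nodes from the preshattering invariant $\Pr(\mathcal{E}_v)<x$---so your write-up is in fact cleaner.
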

\begin{proof}
	The \LCL{} constraints of $\Pi''$ of a node $u\in U_i$ is satisfied in the instance of $\Pi''$ on $C^G_i$, and $u$ has the same distance-$(t_0+r)$ topological view in $C_i^G$ and in $G$. Let $F$ be the set of vertices in this view.  Due to \Cref{obs:independentInstances} each vertex in $F$ is either already decided or contained in $C_i^G$. Also, by \Cref{claim:nowild}, no node in $F$ outputs \wild in $\Pi''$ on $C_i^G$. Hence, in the lifted solution for $\Pi'$ on $G$ every vertex in $F$ has the same output as in the solution to $\Pi''$ on $C_i^G$ and the \LCL{} constraint of $\Pi'$ of $u$ is satisfied.
	
	Nodes in $\din_i$ and $\dout_i$ have the same view as in the preshattering phase, since at distance $t_0+r$ they only see nodes that have already fixed their output even before solving $\Pi''$, and hence their output is valid.
\end{proof}

\begin{proof}[Proof of \Cref{lem:PiPrime}]
	During this proof we use the notation that has been introduced in this section.
	Under the condition of \Cref{lem:PiPrime},  \Cref{claim:PiPrimeLLL} states that the instance of $\Pi'$ is an LLL instance with dependency graph $H$. Note that $H$ is not a tree. After randomly choosing unique IDs and computing a distance-$2$ coloring of $H$ in $O(\logstar n)$ rounds, the preshattering phase uses $O(\Delta^H)=O(1)$ rounds and sets the bit strings of some vertices. Recall that any vertex who has an unset random bit string within distance $t_0+r$ is called undecided. Due to \Cref{claim:smallComponents}, the undecided vertices of $H$ induce small connected components $C_1^H,\ldots,C_k^H$, each of size at most $O(\log n)$. Now, for each $C_i^H$ we form an instance of $\Pi''(\mathcal{A}, n_0, x)$ on $C_i^G\supseteq C_i^H$ with $x=\sqrt{1/n_0}$.  We use \Cref{lem:Pi3Congest} to solve each of the instances independently. The condition on $x$ is satisfied due to Inequality~\ref{eqn:xisGood} and due to \Cref{obs:independentInstances} simulating one round of the \CONGEST algorithm from \Cref{obs:independentInstances} on all components in parallel can be done in $O(1)$ rounds. Thus, we obtain a solution for each instance of $\Pi''$ on $C_i^G$ in $O(\log \log n)$ rounds.
	
	We obtain a solution for $\Pi'$ as follows: for each $i$, each node in $C_i^H$ with an unset bit string---with the notation of the section these are the nodes  in $\uin_i$---uses the bit string obtained from the solution of $\Pi''$ in the instance $C^G_i$ (this is well defined, since $\uin_i \cap \uin_j = \emptyset$ for $i\neq j$), while each other node keeps the bit string obtained in the preshattering phase.
	Due to \Cref{claim:validSolution} we obtain a valid solution for $\Pi'$.
	
	\textbf{Failure Probability.}  The process can fail either when the computed IDs are not unique which happens with probability $1/n^{c+1}$ or if the components created during the preshattering phase are actually bigger than $O(\log n)$. Due to \Cref{claim:smallComponents}, this happens with probability at most $1/n^{c+1}$. The total failure probability is bounded by $1/n^{c}$.
\end{proof}

\subsection{Proof of Lemma~\ref{lem:Pi3Congest}: Solving \texorpdfstring{\boldmath $\Pi''$}{\textPi''}}
\label{ssec:solvingPi3Prime}
The only remaining thing in this section is to prove \Cref{lem:Pi3Congest}. Its technical statement is standalone.
The core idea to prove the \Cref{lem:Pi3Congest} is that for certain choices of  $x$ and $n_0$ problem $\Pi''$ can be solved in $O(\log n)$ rounds with the randomized LLL algorithm of \cite{CPS17} (see below). We can then apply \Cref{thm:gapCONGESTrakeCompress-treePolyGap} to show that $\Pi''$ can be solved with a deterministic \CONGEST algorithm also running in $O(\log n)$ rounds.

First, we state the result by \cite{CPS17} on fast randomized algorithms for the LLL problem.  It assumes that the communication network is identical to the dependency graph.
\begin{lemma}[LLL on general graphs, \cite{CPS17}]
	\label{thm:CPSLLL}
	There is a randomized \LOCAL algorithm to solve any LLL instance on general dependency graphs in $O(\log n)$ rounds with failure probability $1/n^2$ on graphs of at most $n$ nodes if the instance satisfies the polynomial LLL criterion $epd^2<1$.
\end{lemma}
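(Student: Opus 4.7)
My strategy is to design a randomized \LOCAL algorithm that solves $\Pi''(\mathcal{A}, n_0, x)$ in $O(\log n)$ rounds with failure probability at most $1/n^2$, and then invoke \theoremref{thm:gapCONGESTrakeCompress-treePolyGap} to turn it into a deterministic \CONGEST algorithm of the same asymptotic round complexity. The whole proof pivots on the observation that the hypothesis $e \cdot x \cdot \Delta^{4(t_{\mathcal{A}}(n_0)+r)} < 1$ is precisely the polynomial LLL criterion $e p d^2 < 1$ required by \lemmaref{thm:CPSLLL} of \cite{CPS17}, instantiated with bad-event probability $p = x$ and dependency degree $d = \Delta^{2(t_0+r)}$, where $t_0 := t_{\mathcal{A}}(n_0)$. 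Before invoking that theorem I must verify $\Pi''$ really is a genuine \LCL{}: the input and output alphabets are finite (bit strings have length $h_0 := h_{\mathcal{A}}(n_0) = O(1)$), the checkability radius $t_0 + r$ is a constant, and every clause of \defref{def:variousLCLs2}---including the wildcard permission ``local failure probability exceeds $x$''---is a deterministic predicate on the $(t_0+r)$-radius view because only finitely many unknown bit strings contribute. This certifies applicability of \theoremref{thm:gapCONGESTrakeCompress-treePolyGap}.

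\textbf{A randomized \LOCAL algorithm for $\Pi''$.} In $t_0 + r = O(1)$ rounds each node $v$ collects its radius-$(t_0+r)$ view and computes $p_v$, the probability that $\mathcal{E}_v$ (the event ``$\mathcal{A}(n_0)$ violates $\Pi$ at $v$'') occurs when every undetermined bit string in that view is drawn uniformly at random. Every node $v$ with $p_v > x$ outputs $\wild$; every other node with a predetermined input bit string outputs that bit string; the remaining nodes---no predetermined string and $p_v \le x$---enter an LLL instance whose variables are their length-$h_0$ random strings and whose bad events are the $\mathcal{E}_v$ at each non-wildcard node. By construction this LLL has $p \le x$; moreover $\mathcal{E}_v$ depends only on bit strings within $G$-distance $t_0+r$ of $v$, so its dependency graph $H$ has maximum degree $d \le \Delta^{2(t_0+r)}$ and the polynomial criterion $e p d^2 \le e x \Delta^{4(t_0+r)} < 1$ holds by hypothesis. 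Invoking \lemmaref{thm:CPSLLL} on $H$ then yields a satisfying assignment in $O(\log n)$ \LOCAL rounds with failure probability at most $1/n^2$; one round on $H$ simulates in $O(1)$ rounds on $G$ because $d = O(1)$ and every $H$-edge spans $G$-distance at most $2(t_0+r) = O(1)$. When $\Pi''$ admits a valid labeling for the given input, a short walk through the clauses of \defref{def:variousLCLs2} shows the composite labeling is feasible: clause~1 holds for wildcards by $p_v > x$, clause~2 holds by preservation of predetermined strings, clause~3 handles neighborhoods of non-input nodes that see a wildcard, and clause~4 is ensured by the LLL whenever no wildcards are present.

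\textbf{Derandomization and main obstacle.} I now have an $O(\log n)$-round randomized \LOCAL algorithm for the \LCL{} $\Pi''$ on trees with success probability at least $1 - 1/n$. Since $O(\log n) \in n^{o(1)} \subseteq T^{\mathrm{slow}}$ and $f(n^{o(1)}) = O(\log n)$, \theoremref{thm:gapCONGESTrakeCompress-treePolyGap} converts it into a deterministic \CONGEST algorithm for $\Pi''$ running in $O(\log n)$ rounds, plus the additive $O(\logstar |\mathcal{S}|)$ overhead flagged in the remark after that theorem (coming from the ruling-set computation inside the tree decomposition); the same remark also permits unique IDs to be replaced by a sufficiently large constant-distance input coloring. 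The main obstacle I anticipate is cleanly arguing that \theoremref{thm:gapCONGESTrakeCompress-treePolyGap} applies to $\Pi''$ with no promise on the input: one has to show that for \emph{every} input, the randomized \LOCAL algorithm above produces a valid labeling with high probability whenever one exists. This is exactly what the wildcard clause buys us---any node at which the LLL criterion would fail locally is pre-authorized by the definition of $\Pi''$ to output $\wild$, so the LLL only has to contend with the residual set of nodes on which the hypothesized criterion $e x \Delta^{4(t_0+r)} < 1$ uniformly holds.
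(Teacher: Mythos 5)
You have not proven the given statement. \lemmaref{thm:CPSLLL} is an \emph{imported} result, cited verbatim from Chang, Pettie, and Su~\cite{CPS17}; the paper supplies no proof of it (nor is one expected), because it is a standalone theorem about randomized \LOCAL algorithms for the distributed \lovasz Local Lemma under a polynomial criterion. What you have actually sketched is, almost verbatim, the proof of the \emph{subsequent} result, \lemmaref{lem:Pi3Congest} (``Solving $\Pi''$''): you set up an LLL instance on the non-wildcard nodes of $\Pi''$, observe that the hypothesis $e\cdot x\cdot\Delta^{4(t_0+r)}<1$ matches the criterion $epd^2<1$ with $p=x$ and $d\le\Delta^{2(t_0+r)}$, invoke \lemmaref{thm:CPSLLL} as a black box to get an $O(\log n)$-round randomized \LOCAL algorithm, and then push that through \theoremref{thm:gapCONGESTrakeCompress-treePolyGap} to obtain a deterministic \CONGEST algorithm. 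As a proof of \lemmaref{lem:Pi3Congest} this is sound and closely tracks the paper's argument, but it \emph{presupposes} \lemmaref{thm:CPSLLL} rather than establishing it, which is circular relative to the task.

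A genuine proof of \lemmaref{thm:CPSLLL} would require analyzing a concrete distributed resampling algorithm: give nodes $\poly(n)$-range random IDs, and for $O(\log n)$ iterations have every currently violated bad event that is a local ID-minimum (among violated events in the dependency graph) resample its variables; one then shows, via a witness-tree or entropy-compression argument in the Moser--Tardos style, that under $epd^2<1$ the number of iterations needed to clear all bad events is $O(\log n)$ except with probability $1/n^2$. None of this analysis appears in your write-up. This is not a small gap that can be patched in a line---it is the entire content of the referenced result, and the reason the paper cites it rather than re-deriving it.
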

 The actual result in \cite{CPS17} is more general than the statement of \Cref{thm:CPSLLL}. We restate \Cref{lem:Pi3Congest} and prove it.
\lemPiThreeCongest*
\begin{proof}
	We prove the lemma by designing an algorithm for $\Pi''=\Pi''(\mathcal{A},n_0,x)$. Let $t_0=t_{\mathcal{A}}(n_0)$. At the beginning, each node $v$ gathers its $r + t_0 $ radius neighborhood and checks whether the inputs in its neighborhood are not nice, that is, if \wild would be a correct output according to $\Pi''$. In that case $v$ outputs \wild. Then, any node for which a bit string is provided in its input outputs the same bit string.  We then define an LLL process to solve the problem on nodes that did not output already. The random variables are the unset random bit strings, and we associate a bad event $\mathcal{E}_v$ with each node $v$ as follows: If a node $v$ sees a wildcard in distance $t_0+r$ its \emph{bad event} $\mathcal{E}_v$ is \emph{empty}, i.e., never occurs, otherwise its \emph{bad event} occurs if $\mathcal{A}(n_0)$ run for $t_0$ rounds in the radius-$r$ neighborhood of $v$ with the random bits assigned in the $t_0+r$ neighborhood of $v$ violates the constraints of $\Pi$ at $v$.
	The dependency graph $H$ of the LLL has one vertex for each bad event $\mathcal{E}_v$ and an edge between two events $\mathcal{E}_v$, $\mathcal{E}_u$ if $u$ and $v$ are in distance $\leq 2(t_0+r)$ in $G$.  Let $\Delta_H\leq \Delta^{2(t_0+r)}$ denote the maximum degree of $H$. A \LOCAL algorithm on $H$ can be simulated in $G$ with an $O(t_0)$ multiplicative overhead. 
	
	As we only consider bad events $\mathcal{E}_v$ for vertices who are not allowed to output a wildcard \wild, we obtain  $\Pr(\mathcal{E}_v)\leq x$ and in particular the LLL instance satisfies the following polynomial LLL criterion by the assumption on $n_0$ and $x$.
	\begin{align}\label{eqn:LLLcrit}
		e\cdot Pr(\mathcal{E}_v)\Delta_H^2\leq e\cdot x\cdot  \big(\Delta^{2(t_0+r)}\big)^{2}<1.
	\end{align}
	Due to the polynomial  LLL criterion (\ref{eqn:LLLcrit}), we can use the algorithm of \Cref{thm:CPSLLL} to solve the problem with probability $1/n^2$ in $O(\log n)$ rounds.\footnote{Alternatively we could use any other $\poly\log n$ round randomized LLL algorithm in the \LOCAL model, e.g, the celebrated algorithm by Moser and Tardos \cite{MT10}, even though the expected runtime of the original paper would have to be turned into a with high probability guarantee. The algorithm from \cite{CPS17} in \Cref{thm:CPSLLL} is extremely simple: First, nodes obtain IDs from a range of size $\poly n$, which fails only with probability $1/\poly n$. Then, for $O(\log n)$ iterations,  local ID-minima among the violated bad events resample all their variables.} Hence, the randomized \LOCAL complexity of $\Pi''$ on trees is at most $O(\log n)$.
	
	The mere existence of such a randomized algorithm, that in particular is an $n^{o(1)}$-rounds algorithm that fails with probability at most $1/n$, implies that we can apply \Cref{thm:gapCONGESTrakeCompress-treePolyGap} to obtain a deterministic \CONGEST algorithm that solves $\Pi''$ in $O(\log n) + O(\logstar |\mathcal{S}|)$ rounds if nodes are provided with IDs from an ID space $\mathcal{S}$. This algorithm works even if nodes are provided with just an $l$-distance coloring for a large enough constant distance $l$.
\end{proof}

\section{Separation of CONGEST and LOCAL for General Graphs}\label{sec:separation}
In this section we define an \LCL{} problem $\Pi$ on general bounded-degree graphs, and show that, while $\Pi$ can be solved deterministically in $O(\log n)$ rounds in the \LOCAL model, any randomized \CONGEST algorithm requires $\Omega(\sqrt{n}/\log^2 n)$ rounds. On a high level, the section is structured as follows.
We start by formally defining a family $\mathcal{G}$ of graphs of interest, and then we present a set $\mathcal{C}^{\lproof}$ of local constraints, satisfying that a graph $G$ is in $\mathcal{G}$ if and only if it can be labeled with labels from a constant-size set, such that all nodes satisfy the constraints in $\mathcal{C}^\lproof$. We then define our \LCL{} $\Pi$ in the following way:
\begin{itemize}[noitemsep]
	\item there is a problem $\Pi^{\mathsf{real}}$ such that, on any correctly labeled graph $G\in\mathcal{G}$, nodes must solve $\Pi^{\mathsf{real}}$;
	\item on any labeled graph $G\notin\mathcal{G}$, nodes can either output a locally checkable proof that shows that there exists a node in $G$ that does not satisfy some constraint in $\mathcal{C}^\lproof$ (by solving some \LCL problem that we call $\Pi^\lerr$), or solve $\Pi^{\sf{real}}$ (if possible).
\end{itemize}
Finally, we show lower and upper bounds for $\Pi$ in the \CONGEST and \LOCAL model respectively. The challenging part is to express all these requirements as a proper \LCL, while preventing nodes from ``cheating'', that is, on all graphs $G \in \mathcal{G}$, it must not be possible for nodes to provide a locally checkable proof showing that $G$ is not in $\mathcal{G}$, while for any graph $G \not\in \mathcal{G}$ it should be possible to produce such a proof within the required running time.

Informally, the graphs contained in the family $\mathcal{G}$ of graphs look like the following: we start from a $2$-dimensional grid; we build a binary tree-like structure on top of each column $i$, and let $r_i$ be the root of the tree-like structure in top of column $i$; we use another grid to connect all left-most nodes of these trees; finally, we build on top of these $r_i$ nodes another binary tree-like structure, where $r_i$ nodes are its leaves (see Figure \ref{fig:pyramid-like} for an example). Note that the graphs in $\mathcal{G}$ are the bounded-degree variant of the lower bound family of graphs of Das Sarma et al. \cite{Sarma2012}, where we also add some edges that are necessary in order to make the construction locally checkable.

More formally, we prove the following theorem.
\begin{theorem}\label{thm:separation}
	There exists an \LCL{} problem $\Pi$ that can be solved in $O(\log n)$ deterministic rounds in the \LOCAL model, that requires $\Omega(\sqrt{n} / \log^2 n)$ rounds in the \CONGEST model, even for randomized algorithms.
\end{theorem}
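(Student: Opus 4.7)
}
The plan is to flesh out the outline sketched in \sectionref{sec:introduction}: first build a bounded-degree ``pyramid'' graph family $\mathcal{G}$ together with a locally checkable proof $\Pi^{\mathrm{proof}}$ of membership, then define a communication-style problem $\Pi^{\mathrm{real}}$ that is easy in \LOCAL but forces $\tilde\Omega(\sqrt{n})$ bits across a bandwidth bottleneck in \CONGEST, and finally glue everything into a single \LCL $\Pi$ using an auxiliary ``bad-certificate'' problem $\Pi^{\mathrm{bad}}$ so that the promise on the input graph can be removed.

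First I would define $\mathcal{G}$ explicitly: each graph consists of a $k \times k$ bottom grid with $k = \Theta(\sqrt{n})$, a balanced binary tree-like structure of height $O(\log k)$ placed on top of every column (with the column nodes as its leaves), a side grid linking the left boundary of all column-trees, and one top tree-like structure whose leaves are the roots of the column-trees; any extra auxiliary edges are inserted only to ensure local checkability. The proof $\Pi^{\mathrm{proof}}$ attaches a constant-size label to every half-edge specifying its role (grid cell vs.\ tree node, parent/left-child/right-child pointer, layer index, bookkeeping tags towards the boundary grids, etc.), chosen so that local agreement of these labels forces the global graph to be isomorphic to a member of $\mathcal{G}$. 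This is the analogue, in this paper's framework, of how a labeled grid or labeled tree can be recognized by purely local constraints, and it is the step where I would have to be most careful about label design.

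I would then define $\Pi^{\mathrm{real}}$ on a graph in $\mathcal{G}$ as a bit-transport task: the bottom row of the bottom grid carries a $\{0,1\}$ input per node, and each other node in the same column must output that same bit. In the \LOCAL model the diameter of every graph in $\mathcal{G}$ is $O(\log n)$, so the entire input can be gathered and $\Pi^{\mathrm{real}}$ solved in $O(\log n)$ rounds. In the \CONGEST model, an algorithm that solves $\Pi^{\mathrm{real}}$ must push $\Omega(\sqrt{n})$ bits of information from one half of the bottom row to the other half, and the only short paths between those two halves cross a ``neck'' of width $O(\log n)$ through the top tree-like structure; a bounded-degree adaptation of the cut argument of Das Sarma et al.\ then yields the claimed $\Omega(\sqrt{n}/\log^2 n)$ bound (the $\log^2 n$ factor accounts for the $O(\log n)$ bandwidth per edge and the $O(\log n)$ width of the neck in the bounded-degree variant). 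The complementary problem $\Pi^{\mathrm{bad}}$ requires any node witnessing a local violation of $\Pi^{\mathrm{proof}}$ to raise a flag and broadcast, along the pointer tree already encoded in $\Pi^{\mathrm{proof}}$, a short certificate pointing at the offending label; I would design $\Pi^{\mathrm{bad}}$ so that it is solvable in $O(\log n)$ \LOCAL rounds whenever $\Pi^{\mathrm{proof}}$ fails somewhere, and is \emph{unsatisfiable} whenever the given labeling is a correct proof.

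Putting these together, $\Pi$ at each node asks for either a valid $\Pi^{\mathrm{real}}$ output or a valid $\Pi^{\mathrm{bad}}$ output. The \LOCAL upper bound: every node gathers its $O(\log n)$-neighborhood, decides whether the proof is locally valid, and then either propagates a bad-certificate along the claimed pointers (solving $\Pi^{\mathrm{bad}}$) or, knowing it lies inside a correctly labeled copy of some $G \in \mathcal{G}$, uses the tree structure to route inputs and solve $\Pi^{\mathrm{real}}$. The \CONGEST lower bound: on any input consisting of $G \in \mathcal{G}$ together with a valid proof, $\Pi^{\mathrm{bad}}$ is unsatisfiable, so every algorithm must solve $\Pi^{\mathrm{real}}$, and the bandwidth argument above applies. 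The main obstacle will be the joint design of $\Pi^{\mathrm{proof}}$ and $\Pi^{\mathrm{bad}}$: I must ensure that on genuine members of $\mathcal{G}$ nodes cannot cheat by forging a fake violation (otherwise the \CONGEST lower bound collapses), while on every possible adversarial input the proof-checking view within distance $O(\log n)$ is rich enough to either rule out a globally ill-structured graph or to lead to a consistent $\Pi^{\mathrm{real}}$ solution; getting both sides of this ``no cheating'' tension into a finite, constant-radius \LCL is where the delicate combinatorial work lies.
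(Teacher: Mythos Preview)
Your plan is correct and follows essentially the same approach as the paper: the graph family $\mathcal{G}$ (bottom grid, column-trees, side grid, top tree), the locally checkable proof of membership, the bit-transport problem $\Pi^{\mathrm{real}}$, the pointer-chain ``bad certificate'' problem, and the Das Sarma et al.\ moving-cut lower bound are all exactly what the paper does, and you have correctly identified the no-cheating design of $\Pi^{\mathrm{bad}}$ as the technically delicate part. The only cosmetic difference is orientation---the paper puts the $\{0,1\}$ inputs on the left-most column and propagates along rows, whereas you put them on the bottom row and propagate along columns---which is an immaterial transposition.
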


\subsection{The Graph Family \texorpdfstring{\boldmath $\mathcal{G}$}{\mathcal{G}}}

We start by defining the graph family $\mathcal{G}$ in a global way. In order to do so, we first focus on defining a grid structure, then we formally define a tree-like structure, and finally we show how these structures are ``glued'' together.

\begin{figure}
	\centering
	\includegraphics[page=4,width=0.6\textwidth]{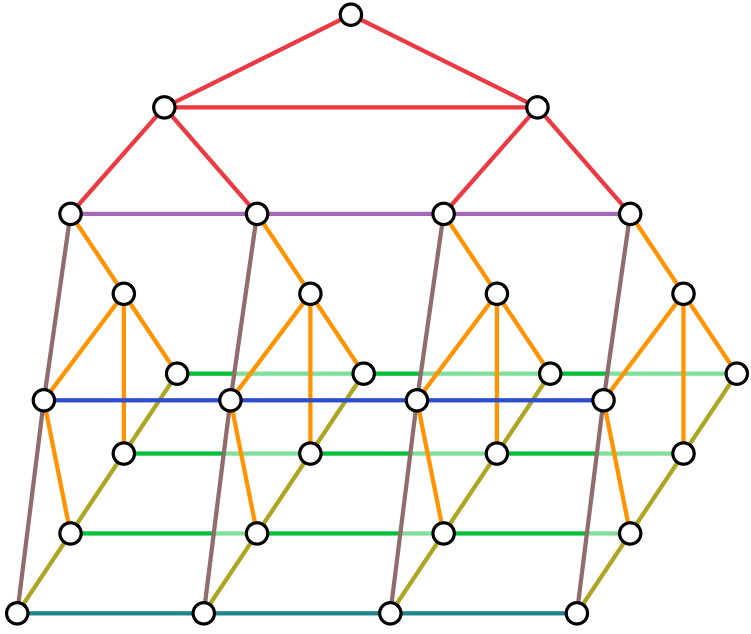}
	\caption{An example of a graph in the family $\mathcal{G}$ of graphs. Edges colored with red, orange, green, and blue, belong exclusively to the structures top-tree, column-tree, bottom-grid, and side-grid, respectively. The other edges belong to exactly two different type of structures and are given the right mixed color.}\label{fig:pyramid-like}
\end{figure}

\begin{definition}[Grid structure]\label{def:gridstructure}
	A graph $G$ is a \emph{grid structure} of size $h \times w$ if it is possible to assign coordinates to its nodes satisfying the following. Let $(x_u, y_u)$ be the coordinates of a node $u$, where $0\le x_u<w$, $0\le y_u < h$. Let $u$ and $v$ be two grid-nodes with coordinates $(x_u, y_u)$ and $(x_v, y_v)$ respectively, such that $x_v \le x_u$, and $y_v \le y_u$. There is an edge connecting $u$ and $v$ if and only if $(x_u, y_u)=(x_v+1, y_v)$, or $(x_u, y_u)=(x_v, y_v+1)$.
\end{definition}

\begin{definition}[Tree-like structure]\label{def:treestructure}
	A graph $G$ is a \emph{tree-like structure} of height $\ell$ if it is possible to assign coordinates to its nodes satisfying the following.
	Let $(l_u, k_u)$ be the coordinates of a node $u$, where  $0\le l_u < \ell$ describes the depth of $u$ in the tree, and $0\le k_u < 2^{l_u}$ the position of $u$ in layer $l_u$ according to some order. Let $u$ and $v$ be two nodes with coordinates $(l_u, k_u)$ and $(l_v, k_v)$ respectively, such that $l_v \le l_u$, and $k_v \le k_u$. There is an edge between $u$ and $v$ if and only if $(l_v,k_v)=(l_u-1,\lfloor \frac{k_u}{2} \rfloor)$, or $(l_u,k_u)=(l_v,k_v+1)$.
\end{definition}

We are now ready to formally define our family $\mathcal{G}$ of graphs, by showing how to glue together grid structures and tree-like structures.
\begin{definition}[The graph family $\mathcal{G}$]
	A graph $G$ is in $\mathcal{G}$ if and only if it can be obtained by the following process.
	Consider one $h\times w$ grid $G_1$, where $h = 2^\ell$ and $w = 2^{\ell'}$, one $\ell \times w$ grid $G_2$, $w$ many tree-like structures $T_i$ ($i\in\{0,1,\dotsc, w-1\}$) of height $\ell$, and one tree-like structure $T$ with height $\ell'$. 
	Each tree-like structure $T_i$ is put on top of a column of the grid $G_1$, that is, each node $u$ of $T_i$ with coordinates $(\ell-1,k_u)$ is identified with the node $v$ of the grid with coordinates $(i,k_u)$ (i.e., node $(\ell-1,k_u)$ of $T_i$ is the same as the grid-node $(i,k_u)$). 
	The grid $G_2$ is put on the side of the trees $T_i$, that is, node $(x,y)$ of $G_2$ is identified with node $(\ell-y-1,0)$ of $T_{x}$.
	Finally, the tree $T$ is put on top of this construction, that is, the root $r_{i}$ of the tree $T_{i}$ is identified with the $i$th leaf of $T$. More precisely, the root node $r_{i}$ is the same as node $(\ell'-1,i)$ of $T$.
\end{definition}

\subsection{Local Checkability}\label{subsec:localCheckability}

In this section, we characterize our graph family $\mathcal{G}$ from a \emph{local} perspective. We define an \LCL problem $\Pi^\lproof$ that is solvable on a graph $G$ if and only if $G\in\mathcal{G}$. We will not require nodes to solve $\Pi^\lproof$, but we will provide to nodes a solution to $\Pi^\lproof$ to prove them that $G\in\mathcal{G}$. In particular, we define a constant-size set of labels and a set of local constraints $\mathcal{C}^\lproof$ on these labels. With ``local'' constraints we mean two things: firstly, each node can check whether it satisfies the constraints in constant time, and secondly, the constraints are globally satisfied if they are satisfied locally at all nodes. The set of constraints $\mathcal{C}^\lproof$ is defined such that the following holds: (i) any graph $G$ that satisfies $\mathcal{C}^\lproof$ at all nodes is in $\mathcal{G}$, and (ii) any graph $G\in\mathcal{G}$ can be labeled such that it satisfies $\mathcal{C}^\lproof$ at all nodes. Intuitively, the set of local constraints $\mathcal{C}^\lproof$ that we define is tailored for graphs in the family $\mathcal{G}$. In other words, $\mathcal{C}^\lproof$ is such that it is trivial to label a graph $G\in\mathcal{G}$ in such a way that all nodes in $G$ satisfy the constraints in $\mathcal{C}^\lproof$. The hard part is to ensure that graphs not in $\mathcal{G}$ cannot be labeled such that $\mathcal{C}^\lproof$ is satisfied by all nodes. For simplicity, we first provide labels and constraints for the two structures that compose a graph in $\mathcal{G}$ separately (i.e., for grid structures and tree-like structures), and then we provide the set $\mathcal{C}^\lproof$ of constraints that determine how these structures can be glued together and result in a graph in $\mathcal{G}$.

Before diving into the details, we provide some useful definitions. Let $L_u(e)$ be the label assigned to the half-edge $(u,e)$, where $e=\{u,v\}$. Also, let $L_1,L_2,\dotsc,L_k$ be $k$ labels, and let $f_u(L_1,L_2,\dotsc,L_k)$ be a function that takes in input a node $u$ and a sequence of labels $L_1,L_2,\dotsc,L_k$. Informally, $f_u(L_1,L_2,\dotsc,L_k)$ returns the node reachable from $u$ by following half-edges labeled $L_1,L_2,\dotsc,L_k$ if and only if this node exists and is unique. More formally, consider a path $P$ that starts on $u = v_1$ and, for $1\le i \le k$, continues on node $v_{i+1}$ reachable from $v_{i}$ by following the edge $e_i = \{v_{i},v_{i+1}\}$ such that $L_{v_{i}}(e_i)=L_i$. If $P$ exists and is unique, then $f_u(L_1,L_2,\dotsc,L_k)$ returns the node $v_{k+1}$ reached by $P$, otherwise $f_u(L_1,L_2,\dotsc,L_k)$ returns $\bot$.

\paragraph{Grid structure.}
Let $\mathcal{E^\lgrid} = \{\lup,\ldown, \lleft, \lright\}$ (each label stands for ``up'', ``down'', ``left'', and  ``right'', respectively) be the possible labels that can be assigned to half-edges.
Assume that each incident half-edge of each node is labeled with exactly one label in $\mathcal{E^\lgrid}$.
The local constraints $\mathcal{C}^{\lgrid}$ are defined as follows:
\begin{enumerate}[label=(\alph*), noitemsep]	
	\item for any two edges $e,e'$ incident to a node $u$, it must hold that $L_u(e)\neq L_u(e')$;\label{cons-grid:differentEdgeLabels}
	
	\item for each edge $e=\{u,v\}$, if $L_u(e)=\lleft$, then $L_v(e)=\lright$, and vice versa; \label{cons-grid:left-right}
	
	\item for each edge $e=\{u,v\}$, if $L_u(e)=\lup$, then $L_v(e)=\ldown$, and vice versa;\label{cons-grid:up-down}
	
	\item if a node $u$ has two incident edges labeled with $\lright$ and $\lup$ respectively, then it must hold that $f_u(\lright,\lup,\lleft,\ldown)=u$. \label{cons-grid:square}
	
	\item if $f_u(\lright)$ exists, then
	$u$ has an incident edge labeled with $\ldown$ (resp. $\lup$) if and only if  $f_u(\lright)$ has an incident edge labeled with $\ldown$  (resp. $\lup$). \label{cons-grid:down-propagates}
	
	\item if $f_u(\lup)$ exists, then
	$u$ has an incident edge labeled with $\lleft$ (resp. $\lright$) if and only if  $f_u(\lup)$ has an incident edge labeled with $\lleft$  (resp. $\lright$). \label{cons-grid:left-propagates}
\end{enumerate}
While the above constraints characterize a grid structure from a local perspective, there are graphs that can be labeled such that the constraints in $\mathcal{C}^{\lgrid}$ are satisfied, but they are not grid structures. Such graphs can be, for examples, torii. Nevertheless, we can prove the following.
\begin{lemma}\label{lem:grid}
	Let $G$ be a graph that is labeled with labels in $\mathcal{E^\lgrid}$ such that $\mathcal{C}^{\lgrid}$ is satisfied for all nodes. Moreover, assume that there exists at least one node that has no incident half-edge labeled $\ldown$ (or $\lup$), and that there exists at least one node that has no incident half-edge labeled $\lleft$ (or $\lright$). Then, $G$ is a grid structure.
\end{lemma}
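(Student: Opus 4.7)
My plan is to assign each $u \in V(G)$ explicit coordinates $(x_u, y_u) \in \mathbb{Z}_{\geq 0}^2$ realizing $G$ as a grid in the sense of \defref{def:gridstructure}. By symmetry I may assume the hypotheses supply a node $\ell$ with no incident $\lleft$ half-edge and a node $t$ with no incident $\ldown$ half-edge, and that $G$ is connected (a disconnected $G$ would be treated component-wise). I first produce a distinguished \emph{corner} $c$ having no incident $\lleft$ and no incident $\ldown$ half-edge, by performing the deterministic $\ldown$-walk starting at $\ell$. Constraint~\ref{cons-grid:left-propagates} propagates ``no $\lleft$'' along this walk, so every visited node also has no $\lleft$; in a finite graph the walk either terminates at a node $c$ with no $\ldown$ (the desired corner) or closes into a $\lup$/$\ldown$ cycle. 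I rule out the cycle case: on such a cycle, constraint~\ref{cons-grid:left-propagates} forces ``has $\lright$'' to be constant around the cycle. If no cycle node has $\lright$, the cycle is a whole connected component of $G$ consisting only of $\lup$/$\ldown$ edges, contradicting the presence of $t$ in that component; if every cycle node has $\lright$, constraints~\ref{cons-grid:down-propagates} and~\ref{cons-grid:square} produce a parallel cycle one $\lright$-step away, and iterating must eventually exhaust the component of $\ell$ (which is closed under all four labeled edges), forcing every node in it to have an incident $\ldown$ and again contradicting the existence of $t$ there.

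Next I build a coordinate map $\phi$ by setting $\phi(0,0) = c$ and recursing $\phi(i+1, j) = f_{\phi(i,j)}(\lright)$ and $\phi(i, j+1) = f_{\phi(i,j)}(\lup)$, extending only while the required neighbors exist. Applying constraint~\ref{cons-grid:square} at each unit square, together with constraints~\ref{cons-grid:down-propagates} and~\ref{cons-grid:left-propagates} to propagate the presence of the needed $\lright$ and $\lup$ edges along rows and columns, shows the two recursions agree on overlap, so $\phi$ is well-defined on the rectangle $\{0,\dots,w-1\} \times \{0,\dots,h-1\}$, where $w$ is the length of the bottom row $\phi(\cdot, 0)$ and $h$ the length of the left column $\phi(0, \cdot)$. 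For injectivity, the $\ldown$-walk from $\phi(i,j)$ deterministically traces $\phi(i, j-1), \dots, \phi(i, 0)$ and halts, since $\phi(i,0)$ inherits ``no $\ldown$'' from $c$ via constraint~\ref{cons-grid:down-propagates}; hence the length of that walk is a graph invariant of $\phi(i,j)$ equal to $j$, and the analogous $\lleft$-walk length equals $i$, pinning down $(i,j)$ from the node alone. For surjectivity, the image $S$ of $\phi$ is closed under all four labeled edges: closure under $\lright, \lup$ is immediate from the construction; closure under $\lleft, \ldown$ in the interior follows from constraints~\ref{cons-grid:left-right} and~\ref{cons-grid:up-down}, and on the boundary because nodes of the bottom row have no $\ldown$ and nodes of the left column have no $\lleft$ by the propagation argument just used. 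Hence $S$ is a union of connected components of $G$ containing $c$, so $S = V(G)$.

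Finally, setting $(x_u, y_u) := \phi^{-1}(u)$, constraints~\ref{cons-grid:left-right} and~\ref{cons-grid:up-down} together with~\ref{cons-grid:differentEdgeLabels} identify every edge of $G$ as a unit step in exactly one coordinate direction, matching \defref{def:gridstructure}. The main obstacle throughout is the cycle-ruling: $\mathcal{C}^{\lgrid}$ alone is satisfied by cylinders and tori, so the entire content of the lemma rests on combining the two existence hypotheses with the propagation constraints~\ref{cons-grid:down-propagates} and~\ref{cons-grid:left-propagates} to preclude those configurations; once cycles are ruled out, everything else reduces to routine coordinate bookkeeping via constraint~\ref{cons-grid:square}.
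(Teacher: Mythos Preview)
Your proof is correct and considerably more detailed than the paper's. The paper gives only a sketch: it observes that constraints~\ref{cons-grid:left-right} and~\ref{cons-grid:up-down} force $\#\lleft = \#\lright$ and $\#\lup = \#\ldown$ (as half-edge counts), so the existence of a node missing $\lleft$ implies one missing $\lright$ (and similarly for $\lup/\ldown$), and then asserts that constraints~\ref{cons-grid:down-propagates} and~\ref{cons-grid:left-propagates} make these four boundary types ``propagate correctly'', concluding without further argument that $G$ is a grid. You instead take a constructive route: you manufacture an actual corner $c$ by walking $\ldown$ from $\ell$ and using~\ref{cons-grid:left-propagates} to carry ``no $\lleft$'' along, you explicitly rule out the cycle case via the parallel-cycle exhaustion argument against $t$, and then you build coordinates from $c$ and verify well-definedness, injectivity, and surjectivity. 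Both arguments rest on the same core observation---that~\ref{cons-grid:down-propagates} and~\ref{cons-grid:left-propagates} are exactly what kills cylinders and tori once a boundary node is assumed---but the paper leaves the cycle-exclusion and the coordinate reconstruction implicit, whereas you carry them out. Your version is what one would write if asked to make the paper's proof rigorous; the counting trick in the paper is a slightly slicker way to get all four boundary \emph{types} but does not by itself locate a corner or preclude cycles, so your walk-based approach is arguably the more honest one.
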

\begin{proof}
	We prove the statement assuming that there exists at least one node that has no incident half-edge labeled $\ldown$ and at least one node that has no incident half-edge labeled $\lleft$, since the other cases are symmetric.
	By constraint \ref{cons-grid:differentEdgeLabels}, the label of each incident edge of a node in $G$ is different. Constraints \ref{cons-grid:left-right} and \ref{cons-grid:up-down} ensure that edge ``directions'' are given in a consistent manner. By constraint \ref{cons-grid:square}, we have that $G$ locally looks like a grid structure. Constraints \ref{cons-grid:left-right} and \ref{cons-grid:up-down} also guarantee that the number of half-edges labeled $\lright$ (resp. $\lup$) is the same as the number of half-edges labeled $\lleft$ (resp. $\ldown$). Hence, since by assumption there is at least one node with no incident half-edge labeled $\ldown$ and at least one node with no incident half-edge labeled $\lleft$, the graph contains at least one node with no incident edge labeled $\lup$ and a node with no incident edge labeled $\lright$.
	By constraints \ref{cons-grid:down-propagates} and \ref{cons-grid:left-propagates} the boundaries of the grid propagate correctly. In other words, the existence of nodes with no incident half-edges labeled $\lup$, $\ldown$, $\lleft$, and $\lright$, imply that we have a graph $G$ that locally looks like a grid structure everywhere, and that has bottom, left, right, and up boundaries. Hence $G$ is a grid structure.
\end{proof}

Each grid structure can be labeled such that the constraints $\mathcal{C}^{\lgrid}$ are satisfied. Consider two neighboring nodes $v = (x_v,y_v)$ and $u = (x_u,y_u)$. The half-edge $(u,e = \{u,v\})$ is labeled as follows:
\begin{itemize}[noitemsep]
	\item $L_u(e)=\lup$ if $(x_v, y_v)=(x_u, y_u+1)$.
	
	\item $L_u(e)=\ldown$ if $(x_v, y_v)=(x_u, y_u-1)$.
	
	\item $L_u(e)=\lleft$ if $(x_v, y_v)=(x_u-1, y_u)$.
	
	\item $L_u(e)=\lright$ if $(x_v, y_v)=(x_u+1, y_u)$.
\end{itemize}
It is easy to check that this labeling satisfies $\mathcal{C}^{\lgrid}$.

\paragraph{Tree-like structure.}
Let $\mathcal{E^\ltreelike} = \{\lleft,\lright, \lparent,\llch,\lrch\}$ (each label stands for ``left'', ``right'', ``parent'',  ``left child'', and ``right child'', respectively) be the possible labels that can be assigned to half-edges.
Assume that each incident half-edge of each node is labeled with exactly one label in $\mathcal{E^\ltreelike}$.
The local constraints $\mathcal{C}^{\ltreelike}$ are defined as follows:
\begin{enumerate}[label=(\alph*),noitemsep]
	\item for any two edges $e,e'$ incident to a node $u$, it must hold that $L_u(e)\neq L_u(e')$;\label{cons-tree:differentEdgeLabels}
	
	\item for each edge $e=\{u,v\}$, if $L_u(e)=\lleft$, then $L_v(e)=\lright$, and vice versa; \label{cons-tree:left-right}
	
	\item for each edge $e=\{u,v\}$, if $L_u(e)=\lparent$, then $L_v(e)\in\{\llch,\lrch\}$, and vice versa;\label{cons-tree:parent-child}
	
	\item if a node $u$ has an incident edge $e=\{u,v\}$ with label $L_u(e)=\lparent$ such that $L_v(e)=\llch$, then $f_u(\lparent,\lrch,\lleft)=u$;\label{cons-tree:triangle}
	
	\item if a node $u$ has an incident edge $e=\{u,v\}$ with label $L_u(e)=\lparent$ such that $L_v(e)=\lrch$, if $u$ has an incident edge labeled $\lright$, then $f_u(\lparent,\lright,\llch,\lleft)=u$.\label{cons-tree:square}
	
	\item if a node has an incident half-edge labeled $\llch$, then it must also have an incident half-edge labeled $\lrch$, and vice versa;\label{cons-tree:2children}
	
	\item node $u$ does not have an incident half-edge labeled $\lparent$ if and only if it has no incident half-edges labeled $\lleft$ or $\lright$; \label{cons-tree:root}
	
	\item if a node $u$ does not have an incident edge $e$ with label $L_u(e)\in\{\llch, \lrch\}$, then neither do nodes $f_u(\lleft)$ and $f_u(\lright)$ (if they exist);\label{cons-tree:boundarychildren}
	
	\item if a node $u$ has an incident edge $e=\{u,v\}$ with label $L_u(e)=\lparent$ such that $L_v(e)=\lrch$ (resp. $L_v(e)=\llch$), then $u$ has an incident edge labeled $\lright$ (resp. $\lleft$) if and only if $f_u(\lparent)$ has an incident edge labeled $\lright$ (resp. $\lleft$);\label{cons-tree:boundarylr}
	
\end{enumerate}
The above constraints give an exact characterization of tree-like structures. Hence, we prove the following.
\begin{lemma}\label{lem:treelike}
	Let $G$ be a graph that is labeled with labels in $\mathcal{E^\ltreelike}$ such that $\mathcal{C}^{\ltreelike}$ is satisfied for all nodes. Then, $G$ is a tree-like structure.
\end{lemma}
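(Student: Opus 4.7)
The plan is to show that the constraints $\mathcal{C}^{\ltreelike}$ force $G$ to be built level by level as a horizontal row whose width doubles with depth, and that the labels at each vertex exactly encode its position in the canonical tree-like layout of \defref{def:treestructure}. I will proceed in three stages: locate a unique root, inductively characterize each depth, and then verify that no stray vertices or edges survive.

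First, I would locate a root $r \in V(G)$, i.e., a vertex with no incident half-edge labeled $\lparent$, and use constraint~(g) to conclude that $r$ has no incident $\lleft$ or $\lright$ half-edges either. Existence of such a node follows from finiteness of $G$: starting from any vertex and repeatedly applying $f_{\cdot}(\lparent)$, an infinite trajectory would necessarily form a cycle in the parent relation, and such a cycle is incompatible with the combined triangle constraint~(d) and square constraint~(e), which force each parent step to strictly move toward a smaller ``layer'' of the graph. Assign $r$ the coordinates $(0,0)$; uniqueness of the root will follow from the inductive characterization below.

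Next, I prove by induction on $l \ge 0$ that the set $V_l$ of vertices reachable from $r$ by a length-$l$ sequence of $\llch/\lrch$ half-edges consists of exactly $2^l$ distinct nodes $v_0^l, v_1^l, \ldots, v_{2^l-1}^l$ connected in order by $\lleft/\lright$ edges, with $v_k^l$ being the left child of $v_{\lfloor k/2 \rfloor}^{l-1}$ when $k$ is even and its right child when $k$ is odd. The base case is $r = v_0^0$. For the inductive step, I use constraint~(f) together with the propagation constraint~(h) to argue that either every node of $V_l$ has two children or none does; in the latter case the process terminates and we set $\ell = l+1$. In the former case, constraint~(d) supplies an $\lleft/\lright$ edge joining each sibling pair (the left and right children of a common parent); constraint~(e) supplies an $\lleft/\lright$ edge from the right child of $v_k^l$ to the left child of $v_{k+1}^l$ whenever $v_{k+1}^l$ exists; and constraint~(i) propagates the absence of a $\lleft$ (resp.\ $\lright$) boundary edge from the endpoints of $V_l$ to the endpoints of $V_{l+1}$. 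Chaining these observations turns the $2^{l+1}$ children into a single simple horizontal path indexed as claimed, while constraints~(a), (b), (c) ensure that no two of these children coincide.

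Finally, I close the argument by showing $G$ has no additional vertices or edges. Each vertex $v_k^l$ is incident to exactly the half-edges $\lparent$ (if $l > 0$), $\llch$ and $\lrch$ (if there is a next level), $\lleft$ (unless $k = 0$), and $\lright$ (unless $k = 2^l - 1$); constraint~(a) prevents any further half-edge, and the label-matching constraints~(b), (c) ensure each such half-edge connects as expected. Connectivity of $G$ therefore forces $V(G) = \bigcup_l V_l$, and the coordinate map $v_k^l \mapsto (l,k)$ realizes $G$ as a tree-like structure per \defref{def:treestructure}. The main obstacle is the inductive step, and specifically ruling out a ``cyclic'' row at depth $l+1$ while the row at depth $l$ is a simple path; this, together with ruling out cycles in the parent relation when identifying the root, is precisely what the triangle constraint~(d), square constraint~(e), and boundary-propagation constraint~(i) are designed to exclude.
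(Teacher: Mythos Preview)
Your top-down plan (locate a root, then inductively grow the levels downward) is the mirror image of the paper's bottom-up argument (locate the leaf layer first, show it is a path rather than a cycle, and then climb upward until a unique root appears). Either direction can in principle be made rigorous, and once a root is in hand your induction is essentially sound: distinctness of the $2^{l+1}$ children follows from~(a) and~(c), the horizontal edges at depth $l{+}1$ are supplied by~(d) and~(e), and constraint~(i) pushes the missing $\lleft$ at the left end of $V_l$ down to the left child of $v_0^l$, which prevents the new row from closing into a cycle.

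The genuine gap is your very first step. You assert that an infinite $\lparent$-trajectory must close into a cycle ``incompatible with constraints~(d) and~(e), which force each parent step to move toward a smaller layer.'' This is circular: the notion of \emph{layer} only acquires meaning after the tree-like structure has been established, and constraints~(d) and~(e) are purely local four- and five-node identities that attach no decreasing potential to the $\lparent$ relation. Nothing you have written rules out a finite $\lparent$-cycle, and there is no obvious short argument that does so directly from the constraints.

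The paper sidesteps this difficulty by going the other way. From~(c) and~(a) one gets $\#\lparent = \#\llch + \#\lrch$, and since by~(f) any node with one child has both, a simple count forces at least one leaf; constraint~(h) then propagates leaf-hood horizontally. To see that the leaf row is a path and not a cycle, the paper argues by contradiction: if it were a cycle, constraint~(i) would lift it to a cycle of half the length one level up, and iterating would eventually produce a self-loop at a single node, which is impossible. Only after the bottom is pinned down as a path does a unique root emerge at the top via~(g). If you want to keep the top-down direction you will need an independent argument excluding $\lparent$-cycles, and it is not clear that any such argument is simpler than the paper's route through the leaves.
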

\begin{proof}
	By constraint \ref{cons-tree:differentEdgeLabels} the label of each incident edge of a node in $G$ is different. Constraints \ref{cons-tree:left-right} and \ref{cons-tree:parent-child} ensure that edge ``directions'' are given in a consistent manner. By constraints \ref{cons-tree:triangle} and \ref{cons-tree:square}, $G$ locally looks like a tree-like structure. Constraint \ref{cons-grid:up-down} guarantees that the number of half-edges labeled $\lparent$ is exactly two times the number of half-edges labeled $\llch$ or $\lrch$, and hence, by a counting argument, that there exists at least one node with no half-edges labeled $\llch$ or $\lrch$. By constraint \ref{cons-tree:2children} such a node has neither half-edges labeled $\llch$ nor half-edges labeled $\lrch$. By constraint \ref{cons-tree:boundarychildren} we ensure that the bottom boundary propagates correctly. Also, among nodes that do not have incident half-edges with a label in $\{\llch,\lrch\}$, there must exist a node that does not have an incident half-edge labeled $\lleft$. If this were not the case, then, by constraint \ref{cons-tree:left-right} all such nodes must have a $\lleft$ and a $\lright$ incident half-edge. Since $G$ locally looks like a tree-like structure, it means that these nodes with no incident half-edges with a label in $\{\llch,\lrch\}$ must form a cycle. By constraint \ref{cons-tree:boundarylr}, this propagates on the above layers forming cycles the length of which halves each time we go from a level to the one above. This reaches a contradiction at the top of the structure, since we would have a single node with a self-loop formed by half-edges with labels $\lleft$ and $\lright$. Hence, in the bottom boundary, there must exist a node that does not have an incident half-edge labeled $\lleft$, and by constraint \ref{cons-tree:left-right} (and by a counting argument) there must also exist a node that does not have an incident half-edge labeled $\lright$. Hence, the bottom boundary is a path, as required, and by constraint \ref{cons-tree:boundarylr}, the left and right boundaries are correctly propagated. Thus, each layer contains a path the length of which halves each time we go from a lever to the one above, ensuring that $G$ has a single node $u$ that does not have incident half-edges with a label in $\{\lleft,\lright\}$. By constraint \ref{cons-tree:root}, this node $u$ is the root of $G$. Hence, $G$ locally looks like a valid tree-like structure for all nodes, and it has valid boundaries at the bottom and on the sides. This means that $G$ is a tree-like structure.
\end{proof}

Each tree-like structure can be labeled such that the constraints $\mathcal{C}^{\ltreelike}$ are satisfied. Consider two neighboring nodes $v = (l_v,k_v)$ and $u = (l_u,k_u)$. The half-edge $(u,e = \{u,v\})$ is labeled as follows:
\begin{itemize}[noitemsep]
	\item $L_u(e)=\lright$ if $(l_v, k_v)=(l_u, k_u+1)$.
	\item $L_u(e)=\lleft$ if $(l_v, k_v)=(l_u, k_u-1)$.
	\item $L_u(e)=\lparent$ if $(l_v, k_v)=(l_u-1, \lfloor \frac{k_u}{2} \rfloor)$.
	\item $L_u(e)=\llch$ if $(l_v, k_v)=(l_u+1, 2k_u)$.
	\item $L_u(e)=\lrch$ if $(l_v, k_v)=(l_u+1, 2k_u + 1)$.
\end{itemize}
It is easy to check that this labeling satisfies $\mathcal{C}^{\ltreelike}$. Note that tree-like structures have already been used in the context of \LCL{}s in \cite{BBOS20paddedLCL}, and that we use the same proof ideas in order to locally check these structures.

\paragraph{\boldmath Graphs in the family $\mathcal{G}$.}
There are four ingredients that compose a graph in the family $\mathcal{G}$ of graphs: the grid on the bottom, some column-trees, the grid on the side of the column-trees, and a top-tree. The half-edges of each of these structures have a labeling that satisfies some constraints, as described in the above paragraphs. Informally, we tag each edge with a label that represents in which structure the edge belongs to. Also, to each half-edge we assign a label that is used to check the validity of the structure the edge belongs to. Each edge can be part of multiple structures. See \Cref{fig:pyramid-like-labels} for an example of a correctly labeled graph $G\in\mathcal{G}$. More formally, let $\mathcal{E^\lstruct} = \{\lgridbot, \lgridside, \lcolumntrees, \ltoptree\}$. Each edge must be labeled with a subset of $\mathcal{E^\lstruct}$. Let $s(e)$ be the labels from $\mathcal{E^\lstruct}$ given to edge $e$. Also, for a node $u$, let $s(u) = \bigcup_{e : u \in e} s(e)$. Labels in $\mathcal{E^\lstruct}$ highlight in which structure each edge belongs to (the bottom-grid, the side-grid, a column-tree, or the top-tree), and for an edge of some structure the following holds (note that an edge can be part of more than one structure). 

\begin{itemize}[noitemsep]
	\item For every edge labeled $\lgridbot$ both its half-edges must have labels from $\{\lgridbot\} \times \mathcal{E^\lgrid}$.
	\item For every edge labeled $\lgridside$ both its half-edges must have labels from $\{\lgridside\} \times \mathcal{E^\lgrid}$.
	\item For every edge labeled $\lcolumntrees$ both its half-edges must have labels from $\{\lcolumntrees\} \times \mathcal{E^\ltreelike}$.
	\item For every edge labeled $\ltoptree$ both its half-edges must have labels from $\{\ltoptree\} \times \mathcal{E^\ltreelike}$.
\end{itemize}    
Moreover, given a half-edge $b=(u,e)$, let $l^{\lgridbot}(b)$ be the function that maps $b$ into the label $L$ satisfying that the half-edge $b$ is labeled $(\lgridbot,L)$. We define analogously $l^{\lgridside}$, $l^{\lcolumntrees}$ and $l^{\ltoptree}$.
The local constraints $\mathcal{C}^{\lproof}$ are defined as follows.
\begin{enumerate}[noitemsep]
	\item The following constraints guarantee that each graph induced by edges of the same type satisfies the constraints of the grid and tree-like strictures.
	\begin{enumerate}[noitemsep]
		\item\label{cons-fam:grid} Consider the graph induced by edges labeled $\lgridbot$ (resp.\ $\lgridside$) and the labeling given by $l^{\lgridbot}$ (resp.\ $l^{\lgridside}$). This labeling must satisfy the constraints $\mathcal{C}^{\lgrid}$.
		
		\item\label{cons-fam:tree} Consider the graph induced by edges labeled $\lcolumntrees$ (resp.\ $\ltoptree$) and the labeling given by $l^{\lcolumntrees}$ (resp.\ $l^{\ltoptree}$). This labeling must satisfy the constraints $\mathcal{C}^{\lcolumntrees}$.
	\end{enumerate}
	\item\label{cons-fam:nodetypes} All nodes $u$ must satisfy $|s(u)| = 1$, unless specified differently. Also, no node is allowed to satisfy $s(u) \in \{\{\lgridbot\},\{\lgridside\}\}$.
	\item The following constraints guarantee a good connection between the top-tree, the column-trees, and the side-grid.
	\begin{enumerate}[noitemsep]
		\item\label{cons-fam:top} Each node $u$ such that $s(u) = \{\ltoptree\}$ must have incident half-edges that are labeled $(\ltoptree,\llch)$ and $(\ltoptree,\lrch)$.
		
		\item\label{cons-fam:midtop} A node $u$ can satisfy $s(u) = \{\lcolumntrees, \ltoptree, \lgridside\}$ if its half-edges are labeled either $(L_1,L_2,L_3,L_4,L_5)$, $(L_1,L_2,L_3,L_4)$, or $(L_1,L_2,L_3,L_5)$, where
		\begin{align*}
		L_1 &= \{(\ltoptree,\lparent)\},\\
		L_2 &= \{(\lcolumntrees, \lrch)\},\\
		L_3 &= \{(\lcolumntrees, \llch),(\lgridside,\ldown)\},\\
		L_4 &= \{(\ltoptree,\lleft),(\lgridside,\lleft)\},\\
		L_5 &= \{(\ltoptree,\lright),(\lgridside,\lright)\}.
		\end{align*}
		\item\label{cons-fam:mid} Each node $u$ such that $s(u) = \{\lcolumntrees\}$ must have an incident half-edge labeled $(\lcolumntrees,\lparent)$.
		\item\label{cons-fam:midside} A node $u$ can satisfy $s(u) = \{\lcolumntrees, \lgridside\}$ if its half-edges are labeled either $(L_1,L_2,L_3,L_4,L_5,L_6)$, $(L_1,L_2,L_3,L_4,L_5)$, or $(L_1,L_2,L_3,L_4,L_6)$, where
		\begin{align*}
		L_1 &= \{(\lcolumntrees,\lparent),(\lgridside,\lup)\},\\
		L_2 &= \{(\lcolumntrees, \lrch)\},\\
		L_3 &= \{(\lcolumntrees, \llch),(\lgridside,\ldown)\},\\
		L_4 &= \{(\lcolumntrees,\lright)\},\\
		L_5 &= \{(\lgridside,\lleft)\},\\
		L_6 &= \{(\lgridside,\lright)\}.
		\end{align*}
	\end{enumerate}
	\item The following constraints guarantee a good connection between the column-trees, the bottom-grid, and the side-grid.
	\begin{enumerate}[noitemsep]
		\item\label{cons-fam:midbot} Each node $u$ such that $s(u) = \{\lcolumntrees\}$ must have incident edges labeled $(\lcolumntrees,\llch)$ and $(\lcolumntrees,\lrch)$.
		\item\label{cons-fam:bot} A node $u$ can satisfy  $s(u) = \{\lgridbot, \lcolumntrees\}$ if its half-edges are labeled either $(L_1,L_2,L_3,L_4,L_5)$, $(L_1,L_2,L_3,L_4)$, $(L_1,L_2,L_3,L_5)$, $(L_1,L_2,L_4,L_5)$, $(L_1,L_2,L_3)$, or $(L_1,L_2,L_4)$, where
		\begin{align*}
		L_1 &= \{(\lcolumntrees,\lparent)\},\\
		L_2 &= \{(\lcolumntrees,\lleft),(\lgridbot,\ldown)\},\\
		L_3 &= \{(\lgridbot,\lleft)\},\\
		L_4 &= \{(\lgridbot,\lright)\},\\
		L_5 &= \{(\lcolumntrees,\lright),(\lgridbot,\lup)\}.
		\end{align*}
		\item\label{cons-fam:botside} A node $u$ can satisfy $s(u) = \{\lgridbot, \lcolumntrees, \lgridside\}$ if its half-edges are labeled either $(L_1,L_2,L_3,L_4)$, $(L_1,L_2,L_3)$, or $(L_1,L_2,L_4)$, where
		\begin{align*}
		L_1 &= \{(\lcolumntrees,\lparent),(\lgridside,\lup)\},\\
		L_2 &= \{(\lcolumntrees,\lright),(\lgridbot,\lup)\},\\
		L_3 &= \{(\lgridbot, \lleft),(\lgridside,\lleft)\},\\
		L_4 &= \{(\lgridbot,\lright),(\lgridside,\lright)\}.
		\end{align*}
	\end{enumerate}
\end{enumerate}

The constraints in $\mathcal{C}^\lproof$ are tailored for graphs in $\mathcal{G}$. Hence, while it is not hard to see that any graph in $\mathcal{G}$ can be labeled such that the set $\mathcal{C}^\lproof$ is satisfied at all nodes, the converse is not easy to see. In the following lemma, we prove that all graphs that satisfy the constraints in $\mathcal{C}^\lproof$ must be in the graph family $\mathcal{G}$. This is a property that will be useful when proving an upper bound for the \LCL problem $\Pi$ that we define later in \Cref{sec:LCL-Pi}.

\begin{lemma}
	Let $G$ be a labeled graph such that the set $\mathcal{C}^{\lproof}$ of constraints is satisfied at all nodes. Then $G \in \mathcal{G}$.
\end{lemma}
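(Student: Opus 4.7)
The plan is to decompose $G$ into four edge-induced subgraphs $H_\lgridbot$, $H_\lgridside$, $H_\lcolumntrees$, $H_\ltoptree$, each carrying its corresponding labeling $l^\lgridbot$, $l^\lgridside$, $l^\lcolumntrees$, $l^\ltoptree$, apply Lemmas~\ref{lem:grid} and~\ref{lem:treelike} to show each is a grid or tree-like structure, and then verify that the gluing constraints reproduce the construction process that defines $\mathcal{G}$.

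First, I will use constraint~\ref{cons-fam:tree} together with Lemma~\ref{lem:treelike} to conclude that every connected component of $H_\lcolumntrees$ and of $H_\ltoptree$ is a tree-like structure. For the grid subgraphs, constraint~\ref{cons-fam:grid} gives that the labelings satisfy $\mathcal{C}^\lgrid$, so it remains to exhibit the boundary witnesses required by Lemma~\ref{lem:grid}. These witnesses fall out of the node-type enumeration in constraints~\ref{cons-fam:bot}, \ref{cons-fam:botside}, \ref{cons-fam:midtop}, and~\ref{cons-fam:midside}: the label $\lgridbot$ can appear on a node only under the types $\{\lgridbot,\lcolumntrees\}$ and $\{\lgridbot,\lcolumntrees,\lgridside\}$, and the half-edge patterns listed for these types include no $\ldown$ labels. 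Hence $H_\lgridbot$ contains nodes with no $\ldown$ half-edge; analogous inspection of the other node types yields the remaining boundary witnesses for $H_\lgridbot$ and for all four directions of $H_\lgridside$, so both are grid structures.

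Second, I will identify each substructure using the node-type constraints. A column-tree leaf---a node of $H_\lcolumntrees$ lacking $\llch$ and $\lrch$ on its $\lcolumntrees$ half-edges---must, by~\ref{cons-fam:midbot} and~\ref{cons-fam:nodetypes}, have a type containing $\lgridbot$, and therefore lies in the bottom-grid. Dually, each column-tree root (a node of $H_\lcolumntrees$ without an incident $\lparent$ label in $H_\lcolumntrees$) must have a type listed in~\ref{cons-fam:midtop} or~\ref{cons-fam:midside}, which places it either at a leaf of the top-tree or along the top row of the side-grid. Since $G$ is connected and every column-tree root is attached in this way to $H_\ltoptree$ or $H_\lgridside$, $H_\ltoptree$ is forced to consist of a single tree-like structure $T$ whose leaves are in bijection with the column-tree roots.

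The main obstacle will be matching the dimensions of the various substructures so that the parameters $\ell, \ell', h, w$ required by the definition of $\mathcal{G}$ are consistent. To handle this I will trace how the node-type constraints identify the leaves of each column-tree $T_i$ with one column of $H_\lgridbot$, the leftmost nodes of each layer of the column-trees with one column of $H_\lgridside$, and the leaves of $T$ with the column-tree roots. Because $H_\lgridbot$ and $H_\lgridside$ are grids, their rows and columns have uniform lengths, so all column-trees share the same height $\ell$ which matches the vertical extent of $H_\lgridside$; and the number of column-trees equals the number of leaves of $T$, which equals the width $w$ of $H_\lgridbot$. Once these identifications are assembled, reading off $\ell$ and $\ell'$ exhibits $G$ as the result of the construction process, concluding $G \in \mathcal{G}$.
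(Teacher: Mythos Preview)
Your overall strategy---decompose into the four edge-labeled subgraphs, apply Lemmas~\ref{lem:grid} and~\ref{lem:treelike}, then verify the gluing---is the same as the paper's. However, the execution has a concrete error and a structural gap.

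The concrete error is your claim that for the node types $\{\lgridbot,\lcolumntrees\}$ and $\{\lgridbot,\lcolumntrees,\lgridside\}$ ``the half-edge patterns listed for these types include no $\ldown$ labels.'' This is false: in constraint~\ref{cons-fam:bot} the block $L_2=\{(\lcolumntrees,\lleft),(\lgridbot,\ldown)\}$ is present in \emph{every} allowed configuration, so every node of type $\{\lgridbot,\lcolumntrees\}$ does carry $(\lgridbot,\ldown)$. Only the type $\{\lgridbot,\lcolumntrees,\lgridside\}$ from~\ref{cons-fam:botside} lacks $(\lgridbot,\ldown)$. So to get a $\ldown$-free witness in $H_\lgridbot$ you must first prove that some node of that triple type actually exists in $G$---which you have not done. (A similar remark applies to your side-grid boundary witnesses.) You also misread~\ref{cons-fam:midside}: those nodes carry $(\lcolumntrees,\lparent)$ in $L_1$, so they are not column-tree roots; roots are exactly the~\ref{cons-fam:midtop} nodes.

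The structural gap is the order of the argument. Lemma~\ref{lem:grid} is conditional: it needs boundary witnesses, and those witnesses come from specific node types whose presence you assume rather than derive. The paper avoids this circularity by working top-down: first it argues that a $\ltoptree$ edge must exist (tracing any $\lcolumntrees$ edge up to a root, which by~\ref{cons-fam:mid}--\ref{cons-fam:midtop} forces $\ltoptree$); then Lemma~\ref{lem:treelike} (which needs no boundary hypothesis) gives a valid top-tree with a leftmost leaf; that leaf is a side-grid node with no $\lleft$ or $\lup$, so Lemma~\ref{lem:grid} now applies to $H_\lgridside$; and only then, via~\ref{cons-fam:botside}, do you get the boundary witness needed for $H_\lgridbot$. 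Reordering your proof along these lines, and correcting the label inspection, would close the gap.
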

\begin{proof} We prove the lemma by first showing that there must be a top-tree structure in $G$. Then, we show that this implies that $G$ must contain column-trees that are correctly connected to a side-grid, and that the leaves of these column-trees form a bottom-grid. Finally we put all this together to show that $G\in\mathcal{G}$.
	
	\paragraph{Top-tree structure.} By constraint \ref{cons-fam:nodetypes}, there exists at least an edge in $G$ with label $\lcolumntrees$ or $\ltoptree$. Starting by this, we want to show that $G$ must have an edge labeled $\ltoptree$. Hence, suppose we are in the case where there exists at least an edge labeled $\lcolumntrees$. Constraint \ref{cons-fam:tree}, combined with  \Cref{lem:treelike}, implies that the tree-like structure of each column-tree is correct, and hence that each column-tree has a root. By constraints \ref{cons-fam:mid}, \ref{cons-fam:midside}, \ref{cons-fam:bot}, and \ref{cons-fam:botside}, nodes that do not have incident edges of type $\ltoptree$ must have a parent in a column-tree, that is, an edge labeled $(\lcolumntrees,\lparent)$. These nodes cannot be the root of a column-tree, and the only possible assignment of labels to incident half-edges that allows to not have a parent in a column-tree is the one allowed by constraint \ref{cons-fam:midtop}, which implies that in $G$ there must exist at least an edge labeled $\ltoptree$. By constraint \ref{cons-fam:tree}, combined with \Cref{lem:treelike}, $G$ contains a top-tree with a correct tree-like structure. By constraint \ref{cons-fam:nodetypes} and \ref{cons-fam:top} each non-leaf node of the top-tree has only incident edges of type $\ltoptree$.
	
	\paragraph{Column-tree and side-grid structures.} By constraints \ref{cons-fam:top} and \ref{cons-fam:midtop}, the leaves of the top-tree must be roots of column-trees, that by constraint \ref{cons-fam:tree}, combined with  \Cref{lem:treelike}, implies that all these column-trees have a correct tree-like structure. By constraint \ref{cons-fam:nodetypes} and \ref{cons-fam:midside}, the only additional type of edges incident to each non-leaf node of a column-tree must be $\lgridside$, and by constraint \ref{cons-fam:midside} only the left-most nodes of each layer of the column-trees can be incident to the side-grid.
	
	\paragraph{Bottom-grid structure.} By constraint \ref{cons-fam:midbot} leaves of column-trees must have incident edges of some additional type different from $\lcolumntrees$, that by constraints \ref{cons-fam:nodetypes}, \ref{cons-fam:bot}, and \ref{cons-fam:botside}, this additional type must be $\lgridbot$. By constraint \ref{cons-fam:nodetypes}, \ref{cons-fam:bot}, and \ref{cons-fam:botside}, leaves of column-trees must be connected to the bottom-grid in the desired way. Also, constraints \ref{cons-fam:bot} and \ref{cons-fam:botside} guarantee that the side-grid terminates on the ``down'' boundary of the bottom-grid.
	
	\paragraph{Putting things together.} The correctness of the top-tree implies that there is a left-most leaf in the top-tree. This node is also a side-grid node with no edges labeled $\lleft$ or $\lup$. Moreover, by constraint \ref{cons-fam:midtop}, nodes that are part of the top-tree are also part of the \emph{same} side-grid. By \Cref{lem:grid}, the side-grid is a valid grid structure, that by constraint \ref{cons-fam:midside} is connected correctly on the left-most nodes of the column-trees. This implies that the $i$th leaf of the top-tree is connected to a column-tree, that in turn is connected to the $i$th column of the bottom-grid. By constraint \ref{cons-fam:botside} this implies that there is at least a node of the bottom-grid with no incident edges labeled $\lleft$ or $\ldown$, and by \Cref{lem:grid} this implies that the bottom-grid is a valid grid structure. By constraint \ref{cons-fam:nodetypes} no other edges are allowed to be connected to the bottom-grid.
\end{proof}

\begin{figure}
	\centering
	\includegraphics[page=5,width=0.8\textwidth]{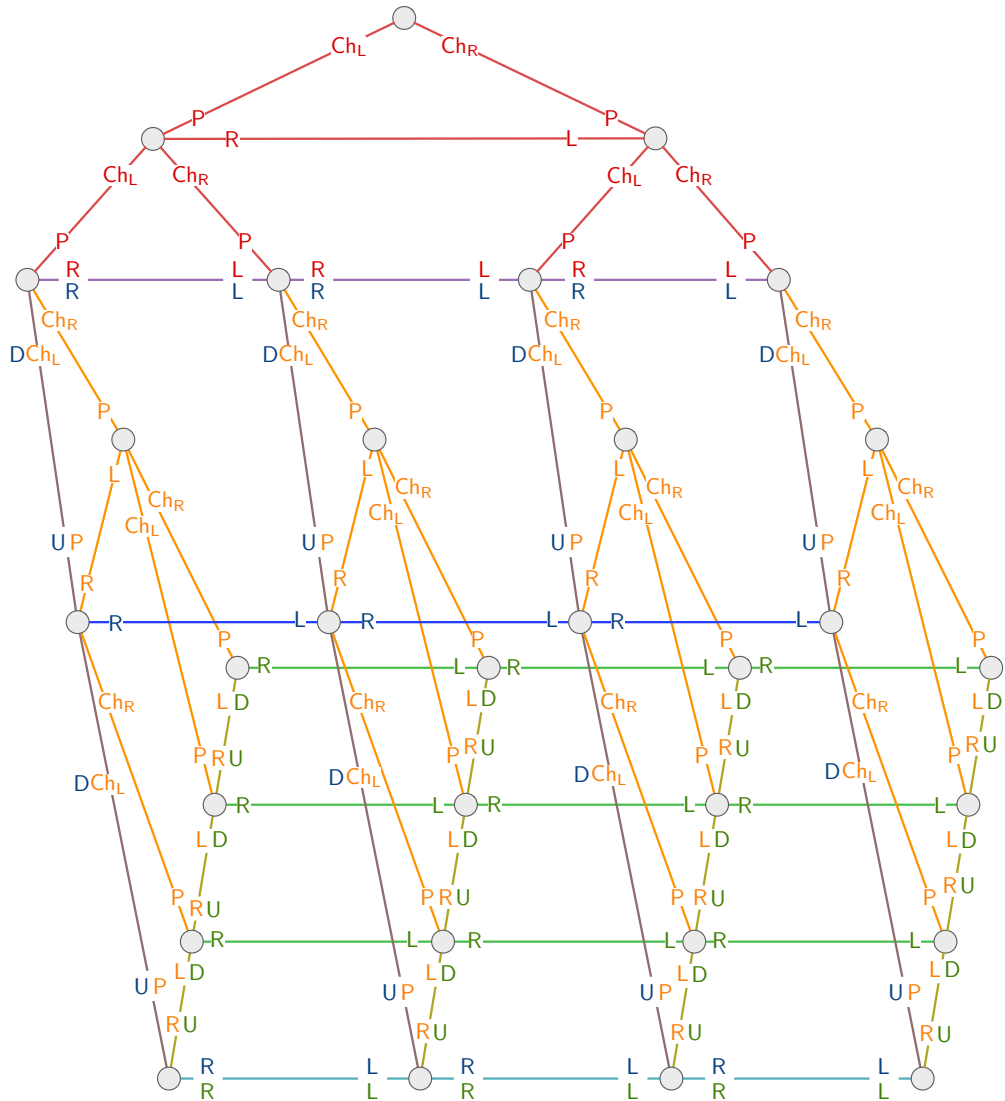}
	\caption{An example of a labeled graph $G\in\mathcal{G}$ where each node locally satisfies the constraints in $\mathcal{C}^\lproof$. Each type of structure has its own color. Edges belonging to two different structures are colored with the sum of the colors of the structures they belong to.}\label{fig:pyramid-like-labels}
\end{figure}

Since the constraints in $\mathcal{C}^{\lproof}$ are tailored for the target family of graphs, it is easy to see that any graph $G \in \mathcal{G}$ can be labeled such that the set $\mathcal{C}^{\lproof}$ is satisfied (see \Cref{fig:pyramid-like-labels} for an example). In fact, when describing the constraints for the grid-structure and the tree-like structure in \Cref{subsec:localCheckability}, we showed a way to label each of them such that the respective local constraints were satisfied. Hence we obtain the following lemma, that will be useful when proving a lower bound for the \LCL problem $\Pi$ that we define later in \Cref{sec:LCL-Pi}.

\begin{lemma}\label{lem:validinstances}
	Any graph $G \in \mathcal{G}$ can be labeled such that each node locally satisfies the constraints in $\mathcal{C}^{\lproof}$.
\end{lemma}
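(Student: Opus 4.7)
The plan is to build the labeling structure-by-structure using the canonical coordinate-based labelings already exhibited in the excerpt, and then to verify $\mathcal{C}^\lproof$ by a short case analysis at each node. Since $G \in \mathcal{G}$, every edge $e$ inherits a nonempty subset $s(e) \subseteq \mathcal{E^\lstruct}$ indicating which of the four substructures (bottom-grid $G_1$, side-grid $G_2$, column-trees $T_i$, top-tree $T$) contain $e$; I would simply assign $s(e)$ to be that set. Inside each substructure I would apply the canonical half-edge labeling already defined in the excerpt --- the $(\lup,\ldown,\lleft,\lright)$ labeling for the two grids and the $(\lleft,\lright,\lparent,\llch,\lrch)$ labeling for the three trees --- read off directly from coordinates. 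This immediately establishes constraints \ref{cons-fam:grid} and \ref{cons-fam:tree}, since the paragraphs preceding \Cref{lem:grid} and \Cref{lem:treelike} already verified that these canonical labelings satisfy $\mathcal{C}^\lgrid$ and $\mathcal{C}^\ltreelike$.

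What remains is to check constraints \ref{cons-fam:nodetypes}--\ref{cons-fam:botside} at every node $u$. Nodes in the interior of a single substructure trivially satisfy $|s(u)|=1$ together with the corresponding structure-specific clause (\ref{cons-fam:top} for a non-leaf of $T$, \ref{cons-fam:mid} and \ref{cons-fam:midbot} for column-tree nodes not touching another substructure). The shared nodes split into four classes dictated by the identifications of $\mathcal{G}$: (i) a leaf $r_i$ of $T$, simultaneously the root of $T_i$ and the top-row side-grid node $(i,\ell-1)$ of $G_2$; (ii) a leftmost interior node $(l,0)$ of $T_i$ with $0<l<\ell-1$, identified with side-grid node $(i,\ell-l-1)$; (iii) a leaf of $T_i$ that is not the leftmost one, identified with a bottom-grid node strictly inside a column; (iv) the leftmost leaf $(\ell-1,0)$ of $T_i$, identified with the bottom-grid node $(i,0)$ and the side-grid node $(i,0)$. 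In each class a short coordinate computation shows that the edges declared ``shared'' by the sets $L_j$ of the corresponding constraint are literally the same edge of $G$ (for instance in class (ii), the $\lparent$ edge in $T_i$ and the $\lup$ edge in $G_2$ both point to the node $(l-1,0)$ of $T_i$, which equals $(i,\ell-l)$ of $G_2$, and $\llch$ in $T_i$ equals $\ldown$ in $G_2$). Consequently the combined half-edge labels at $u$ fall into one of the allowed patterns in \ref{cons-fam:midtop}, \ref{cons-fam:midside}, \ref{cons-fam:bot}, or \ref{cons-fam:botside}, with the exact subset of $L_j$'s determined by the boundary conditions $i \in \{0, w-1\}$; constraint \ref{cons-fam:nodetypes} is satisfied at the same time, because $s(u)$ equals the prescribed subset.

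The main obstacle is the bookkeeping in classes (ii) and (iv): one must verify, by chasing the coordinate identifications, that the $\lparent$/$\llch$ edges of a leftmost interior column-tree node really coincide with the $\lup$/$\ldown$ edges of the identified side-grid node, and that the $\lleft$/$\lright$ edges in the bottom row of $G_2$ coincide with the $\lleft$/$\lright$ edges in the bottom row of $G_1$. These are the only nontrivial edge equalities needed; once checked, the combinatorial structure of $\mathcal{G}$ forces each of the four boundary clauses of $\mathcal{C}^\lproof$ to be immediate, and the labeling satisfies $\mathcal{C}^\lproof$ at every node.
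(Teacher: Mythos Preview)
Your proposal is correct and takes essentially the same approach as the paper: assign to each edge the set of substructures it belongs to, use the canonical coordinate-based labelings already exhibited for grids and tree-like structures, and then check the gluing constraints at boundary nodes. In fact, the paper does not spell out a proof at all---it simply asserts that the constraints in $\mathcal{C}^{\lproof}$ are ``tailored'' for $\mathcal{G}$, refers to \Cref{fig:pyramid-like-labels}, and recalls that canonical labelings satisfying $\mathcal{C}^{\lgrid}$ and $\mathcal{C}^{\ltreelike}$ were already given---so your case analysis across the four classes of shared nodes is strictly more detailed than what the paper provides.
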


We say that a (labeled) graph $G$ is a \emph{valid} instance if and only if all its nodes locally satisfy the constraints $\mathcal{C}^{\lproof}$. In other words, $G$ is a valid instance if $G\in\mathcal{G}$ and $G$ is correctly labeled such that it locally satisfies everywhere the constraints $\mathcal{C}^{\lproof}$. Otherwise, we say that $G$ is invalid.

\subsection{Proving That a Graph Is Invalid}\label{subsec:Pi-err}

In this section, we define an \LCL $\Pi^{\lerr}$ where, informally, in valid instances, nodes \emph{must} output an empty output, while, in invalid instances, nodes \emph{can} prove that the graph is invalid (or just produce an empty output). While the complexity of this problem is clearly $O(1)$, we will prove that, in invalid instances, nodes can produce a valid non-empty output, that is, a locally checkable proof that the instance is invalid, in $O(\log n)$ rounds in the \LOCAL model.

In \Cref{subsec:localCheckability}, for the sake of readability, we assigned sets of labels on edges and half-edges, while the definition of \LCL{}s in \Cref{sec:definitions} only allows labels on half-edges. Note that it is trivial to convert this kind of labels and constraints such that they satisfy the requirements of a formal definition of an \LCL. Hence, assume that $\mathcal{C}^{\lproof}$ is the set of constraints previously defined, but modified such that it only refers to half-edges. Also, let $\Sigma^{\lproof}_{\lout}$ be the possible half-edge labels appearing in $\mathcal{C}^{\lproof}$. The input labels $\Sigma^{\lerr}_{\lin}$ for $\Pi^{\lerr}$ are the same as $\Sigma^{\lproof}_{\lout}$. The possible output labels $\Sigma^{\lerr}_{\lout}$ of $\Pi^{\lerr}$, with an intuitive explanation of their purpose, are the following:

\begin{itemize}[noitemsep]
	\item $\lerror$, used by nodes that do not satisfy $\mathcal{C}^{\lproof}$;
	\item $(\lpointer,c,p, t)$, where $c \in \{1,2,3\}$ is a counter, $p \in \{\lparent,\lrch,\lleft,\lright\}$ is a direction, $t \in \{\lcolumntrees,\ltoptree\}$ is an edge type, used by nodes to produce pointer chains that point to errors. These kind of labels are referred as \emph{pointers};
	\item $\bot$, an empty output, used by all nodes when the graph is valid, or to accept pointers.
\end{itemize}
The constraints $\mathcal{C}^{\lerr}$ are defined as follows.

\begin{enumerate}[noitemsep]
	\item A node $u$ is allowed to output $\lerror$ only if it does not satisfy $\mathcal{C}^{\lproof}$.\label{con-err:error}
	
	\item A node $u$ can output $(\lpointer,c,p, t)$ on a half-edge $b=(u,e)$ if $t \in s(e)$ and $l^t(b)=p$, that is, $(u,e)$ contains the label $(t, p)$. \label{con-err:preserve-struct}
	
	\item There can be at most one pointer for each edge, that is, nodes cannot point to each other. \label{con-err:consistent}
	
	\item If a node $u$ has an incident half-edge $(u,e)$ labeled $(\lpointer,c,p,t)$, then $v = f_u(p)$ must either output $\lerror$ on all incident half-edges, or output $(\lpointer,c',p',t')$ on at least one incident half-edge $(v,e')$, such that if $t = t'$ then $c = c'$, while if $t\neq t'$, then $c' < c$. \label{con-err:counters}
	
	\item  If a node $u$ has an incident half-edge $(u,e)$ labeled $(\lpointer,c,p,t)$ and $v = f_u(p)$ has an incident half-edge $(v,e')$ labeled $(\lpointer,c,p',t)$, then only the following values of $p$ and $p'$ are allowed.
	\begin{itemize}[noitemsep]
		\item if $p = \lleft$ then $p' = \lleft$;
		
		\item if $p = \lright$ then $p' = \lright$;
		
		\item if $p =\lparent$ then $p' \in \{\lparent,\lleft,\lright\}$;
		
		\item if $p = \lrch$ then $p' \in \{\lrch, \lleft, \lright\}$.
		
	\end{itemize} \label{con-err:nocycles}
\end{enumerate}

In the following lemma, we prove that, for all $G\in\mathcal{G}$ there exists an input labeling such that the only valid output for the nodes is $\bot$.
\begin{lemma}\label{lem:nocheat}
	Let $G$ be a graph in $\mathcal{G}$. There exists an input labeling such that, in order to satisfy $\mathcal{C}^{\lerr}$, all nodes must output $\bot$ on all incident edges.
\end{lemma}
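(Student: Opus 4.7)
By \lemmaref{lem:validinstances}, we can equip $G$ with an input labeling that satisfies $\mathcal{C}^{\lproof}$ at every node. The plan is to show that, with this input, every node is forced to output $\bot$ on all incident half-edges. First I would rule out $\lerror$ outputs: the first constraint of $\mathcal{C}^{\lerr}$ permits $\lerror$ only at a node that violates $\mathcal{C}^{\lproof}$, and no such node exists. The main challenge is to rule out pointer outputs.

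I plan to argue by contradiction: suppose some half-edge carries a pointer $(\lpointer, c_0, p_0, t_0)$. The fourth constraint of $\mathcal{C}^{\lerr}$, together with the absence of $\lerror$ outputs, forces this pointer to be followed by another pointer, which must itself be followed, and so on, producing an infinite sequence of pointers $(\lpointer, c_i, p_i, t_i)$ placed on half-edges $(u_i, e_i)$, where $u_{i+1}$ is the opposite endpoint of $e_i$.

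I would then decompose this sequence into \emph{phases}, a phase being a maximal run of consecutive pointers sharing a common structure type. Within a phase, the fourth constraint forces the counter to stay constant, and the fifth constraint restricts direction transitions: once the chain uses direction $\lleft$ (resp.\ $\lright$) it is locked into $\lleft$ (resp.\ $\lright$) for the remainder of the phase, while $\lparent$ and $\lrch$ may only transition to $\lleft$, $\lright$, or themselves. Using the fact that each induced substructure of a given type is a valid grid (via \lemmaref{lem:grid}) or tree-like structure (via \lemmaref{lem:treelike}), I would argue that each locked sub-direction strictly decreases a structural coordinate---depth for $\lparent$, layer-position for $\lleft$ or $\lright$, and the analog for $\lrch$---so each phase terminates in finitely many steps at a node with no outgoing half-edge of the required type and direction.

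At the endpoint of a phase the chain must continue but cannot do so within the same type, so by the fourth constraint the structure type changes and the counter strictly decreases. Since $c_i\in\{1,2,3\}$, at most two such decreases are possible, so there are at most three phases and the chain has finite total length. The final pointer then has no valid successor---neither an $\lerror$ output (already ruled out) nor a compatible pointer---contradicting the fourth constraint. Hence no node can output a pointer, and every node must output $\bot$ everywhere. The main obstacle is the phase analysis: one must verify carefully that the direction restrictions of the fifth constraint, combined with the monotone progress guaranteed by the validated grid and tree-like structures, genuinely prevent the chain from cycling inside a single type-phase.
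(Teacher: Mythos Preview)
Your proposal is correct and follows essentially the same strategy as the paper: use the valid labeling from \lemmaref{lem:validinstances} to rule out $\lerror$, then show that any pointer chain must be finite yet cannot legally terminate. The paper's write-up is slightly more compressed---it observes that a cycle would have to stay in a single tree-like structure (since the counter drops across type changes) and that the direction rules forbid ever mixing $\lparent$ with $\lrch$ within one structure, so no cycle is possible---whereas you spell out the same fact via monotone progress in tree coordinates; these are the same argument in different clothing. One small correction: pointer types range only over $\{\lcolumntrees,\ltoptree\}$, so only tree-like structures are traversed and the appeal to \lemmaref{lem:grid} is unnecessary.
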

\begin{proof}
	Assume that $G$ is labeled in such a way that all nodes satisfy $\mathcal{C}^\lproof$, which is possible by \Cref{lem:validinstances}. Note that by constraint \ref{con-err:error} no node can output $\lerror$. What remains to be proven is that no node can output pointers. By constraint \ref{con-err:counters}, if a node outputs a pointer, then the pointed node must also output a pointer, and by constraint \ref{con-err:consistent} nodes cannot point to each other. Hence, pointers must produce pointer chains. We now prove that pointer chains cannot close cycles, implying that pointer chains must form paths. These (oriented) paths can only terminate on nodes that output $\lerror$ (constraint \ref{con-err:counters}), and since this is not possible, then nodes cannot output pointers at all. 
	
	By constraint \ref{con-err:preserve-struct} and \ref{con-err:counters}, each time the pointer chain propagates on a different structure, a counter must decrease, but on the same structure counters must be equal. Hence, the only way to produce a cycle is to put the whole cycle on the same tree-like structure. Note that, a necessary condition for creating a cycle in a valid tree-like structure, is to use, on the same pointer chain, both labels $\lparent$ and $\lrch$. The claim then follows since once a pointer chain starts using an $\lleft$ (resp.\ $\lright)$, then it must continue using the same direction. Also, a chain using $\lparent$ (resp.\ $\lrch$) can either continue using it, or switch to $\lleft$ or $\lright$.
\end{proof}

Before showing that it is always possible to prove efficiently that an invalid instance is indeed invalid, we first prove that, on any invalid tree-like structure, it is possible to efficiently produce an error pointer chain that satisfies the requirements. In fact, we prove something stronger, that is, even if the structure is valid, it is possible to produce a pointer chain that ends on a chosen ``marked'' node.
\begin{lemma}\label{lem:pointers}
	Let $G$ be a graph. If $G$ is a valid tree-like structure, assume that at least one node is marked. If $G$ is an invalid tree-like structure, assume that all nodes that do not satisfy $\mathcal{C}^{\ltreelike}$ are marked. Then, there is a deterministic $O(\log n)$-round algorithm for the \LOCAL model that produces consistent pointer chains (according to constraint \ref{con-err:nocycles}) that end in marked nodes, such that each non-marked node has at least an incident edge that is part of a pointer chain.
\end{lemma}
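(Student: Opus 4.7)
The plan is to exploit the bounded diameter of tree-like structures. A tree-like structure of height $\ell$ has $\Theta(2^\ell)$ nodes, so $\ell = O(\log n)$ and the diameter is $O(\log n)$. In $O(\log n)$ rounds of \LOCAL communication every node can therefore gather its entire $O(\log n)$-neighborhood, which covers the whole graph, and then independently compute the same canonical solution.

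The first step is to agree on a canonical target. Every node identifies the marked node $m^* = (l^*,k^*)$, defined as the leftmost marked node at the smallest depth $l^*$ that contains any marked node; in the invalid case, where label-derived coordinates might be ambiguous, ties are broken by node IDs. The second step is pointer emission at each non-marked node $u = (l_u,k_u)$. Because $m^*$ sits at the smallest depth, $l_u \geq l^*$ always holds, so we fall into one of three subcases: if $l_u > l^*$, then $u$ outputs a $\lparent$-pointer on its $\lparent$-labeled half-edge; otherwise $l_u = l^*$, and $u$ outputs a $\lleft$-pointer when $k_u > k^*$ and a $\lright$-pointer when $k_u < k^*$. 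The counter $c$ and the structure tag $t$ are fixed uniformly over the whole structure. The chain from any non-marked $u$ walks up via $\lparent$ to level $l^*$ and then horizontally to $m^*$. Every transition that appears along such a chain, namely $\lparent \to \lparent$, $\lparent \to \lleft$, $\lparent \to \lright$, $\lleft \to \lleft$, and $\lright \to \lright$, is explicitly permitted by \ref{con-err:nocycles}, so the chain is consistent and terminates at $m^*$. Each non-marked node has its own outgoing pointer, which gives it an incident chain edge as required.

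The main obstacle is the invalid case, in which some nodes do not satisfy $\mathcal{C}^{\ltreelike}$ and label-derived coordinates near them cannot be trusted. This is handled by pushing all computation into the globally gathered view: every node sees the whole labeled graph and the whole marked set, so all nodes agree on the same $m^*$ under a fixed deterministic rule over IDs, and every non-marked $u$ reads off its outgoing direction from its own labels, which are consistent because $u$ itself passes $\mathcal{C}^{\ltreelike}$. Any would-be chain that later enters a node with inconsistent labels is stopped there, since that node is marked by assumption. In both the valid and the invalid case this yields the required deterministic $O(\log n)$-round \LOCAL algorithm producing pointer chains compatible with \ref{con-err:nocycles} and terminating at marked nodes.
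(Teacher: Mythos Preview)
Your argument has two genuine gaps.

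\textbf{Gap in the valid case.} The claim ``because $m^*$ sits at the smallest depth, $l_u \geq l^*$ always holds'' is a non-sequitur. By your own definition, $l^*$ is the smallest depth that contains a marked node, so every node at depth strictly less than $l^*$ is \emph{unmarked}. Such nodes exist whenever the root is not marked, and your three subcases ($l_u>l^*$, $l_u=l^*$ with $k_u>k^*$, $l_u=l^*$ with $k_u<k^*$) do not cover them. A node at depth $l_u<l^*$ has no $\lparent$, $\lleft$, or $\lright$ path down to $m^*$; it would have to use $\lrch$, which constraint~\ref{con-err:nocycles} does allow but which you never emit. The paper's proof handles this by including $\lrch^*(\lleft^*\mid\lright^*)$ as a fourth path pattern in its priority list.

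\textbf{Gap in the invalid case.} You assert that in $O(\log n)$ rounds ``every node sees the whole labeled graph''. This is only guaranteed for valid tree-like structures. A graph labeled with $\mathcal{E}^{\ltreelike}$ labels that violates $\mathcal{C}^{\ltreelike}$ somewhere can have diameter far exceeding $O(\log n)$, so nodes cannot in general agree on a single global target $m^*$. The paper sidesteps this entirely: each node only inspects its $O(\log n)$-ball and applies a fixed priority rule over the four path patterns $\lright^*$, $\lleft^*$, $\lparent^*(\lleft^*\mid\lright^*)$, $\lrch^*(\lleft^*\mid\lright^*)$. Consistency then comes from all nodes using the same priorities rather than from agreeing on one target, and the paper argues that if none of the four patterns reaches a marked node within $O(\log n)$ hops then the nodes along all such paths satisfy $\mathcal{C}^{\ltreelike}$, forcing the local picture to be a valid tree-like structure and yielding a contradiction.
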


\begin{proof}
	Note that any valid tree-like structure has diameter $T=O(\log n)$.
	Every node $u$ starts by spending $T$ rounds to gather its $T$-radius ball. If $G$ is a valid tree-like structure, then $u$ gathered the whole graph (and sees at least one marked node). If $G$ is an invalid tree-like structure, then $u$ must see at least one marked node, since there must be a node at distance $O(\log n)$ that notices an error locally. In the following, we use regular expressions to shortly denote different kinds of pointer chains. For example, $\lparent^* (\lleft^* | \lright^*)$ denotes a pointer chain composed of some $\lparent$ labels, followed by either some $\lleft$ labels or some $\lright$ labels. A node $u$ chooses the first possible among the available following actions.
	\begin{enumerate}[noitemsep]
		\item If there is a path of the form $\lright^*$ connecting $u$ to a marked node, then $u$ outputs $\lright$.
		\item If there is a path of the form $\lleft^*$ connecting $u$ to a marked node, then $u$ outputs $\lleft$.
		\item If there is a path of the form $\lparent^* (\lleft^* | \lright^*)$ connecting $u$ to a marked node, then $u$ outputs $\lparent$.
		\item If there is a path of the form $\lrch^* (\lleft^* | \lright^*)$ connecting $u$ to a marked node, then $u$ outputs $\lrch$.
	\end{enumerate}
	Note that, since all nodes follow the same priority list, then outputs are consistent. We need to prove that at least one of the four cases always applies. Suppose that, for any marked node $w$, for any path that connects $u$ to $w$, $u$ cannot reach $w$ by following the above rules. This means that, by following any of the above rules, we cannot reach nodes that notice invalidities in the tree-like structure (since, otherwise, they would have been marked). This implies that $u$ also cannot reach any marked node by following paths of the form $\llch^* (\lleft^* | \lright^*)$, since they could be replaced by paths of the form $\lrch^* (\lleft^* | \lright^*)$. Hence, node $u$ cannot reach any marked node by following paths of the form $\lleft^* | \lright^* |  (\lparent^*|\llch^*|\lrch^*) (\lleft^* | \lright^*)$.
	The claim follows by noticing that, in a valid tree-like structure, any node can be reached by following only such paths.
\end{proof}

We are now ready to prove the main result of \Cref{subsec:Pi-err}.
\begin{lemma}\label{lem:solve-pi-err}
	Let $G$ be a graph not contained in $\mathcal{G}$. For any input labeling, it is possible to produce, in $O(\log n)$ rounds in the \LOCAL model, a solution of $\Pi^{\lerr}$ where all nodes have at least one incident half-edge not labeled $\bot$.
\end{lemma}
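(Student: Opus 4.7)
The plan is to have every node locally violating $\mathcal{C}^\lproof$ output $\lerror$ on all of its incident half-edges, and to have every non-error node emit a pointer chain of at most three successive tree-structure segments (using the three available counter values $3, 2, 1$) that terminates at one of those error nodes. Since $G \notin \mathcal{G}$, the characterization of $\mathcal{G}$ via $\mathcal{C}^\lproof$ proved earlier in the section guarantees that for every input labeling at least one node locally violates $\mathcal{C}^\lproof$. Moreover, by constraint~\ref{cons-fam:nodetypes}, every non-error node has $s(u)$ containing $\lcolumntrees$ or $\ltoptree$, so it is incident to at least one tree edge on which to place a pointer. The first step is therefore a single round of local checks that produces the $\lerror$ labels at exactly the error nodes, which trivially satisfies constraint~\ref{con-err:error}.

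Each non-error node then gathers its $O(\log n)$-radius view, enough to see in full any column tree it inhabits and the whole top tree $T$, since both have height $O(\log n)$. The routing is done by three separate applications of \Cref{lem:pointers}: Tier~3 (counter $3$, structure $\lcolumntrees$) is applied to every column tree $T_i$ that is not the designated target, with the errors of $T_i$ as marked nodes when any exist and otherwise with $r_i$ as the single marked node so that the resulting chains are exactly $\lparent^*$; Tier~2 (counter $2$, structure $\ltoptree$) is applied to $T$, with its errors marked when any exist and otherwise with a canonical leaf $r_{j^*}$ of $T$ whose column tree $T_{j^*}$ contains an error (such $r_{j^*}$ must exist, since otherwise all nodes would satisfy $\mathcal{C}^\lproof$ and $G$ would lie in $\mathcal{G}$); Tier~1 (counter $1$, structure $\lcolumntrees$), used only when $T$ is error-free, is applied to $T_{j^*}$ with its errors marked. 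A boundary node $r_i$ (column-tree root and top-tree leaf) participates in both its column-tree tier and the top-tree tier, emitting at most one pointer per structure on edges of disjoint types. Constraint~\ref{con-err:counters} then holds because the counter is constant within a tier and drops strictly at the handoffs $3 \to 2$ (at an error-free $r_i$) and $2 \to 1$ (at $r_{j^*}$); constraint~\ref{con-err:nocycles} is immediate since the chain shapes $\lright^*$, $\lleft^*$, $\lparent^*(\lleft^* \mid \lright^*)$, and $\lrch^*(\lleft^* \mid \lright^*)$ guaranteed by \Cref{lem:pointers} are exactly the allowed ones; constraint~\ref{con-err:consistent} is automatic since pointers of different tiers live on edges of disjoint structure types. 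The total running time is $O(\log n)$, dominated by the view-gathering step and the three applications of \Cref{lem:pointers}.

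The main obstacle is the interaction at boundary nodes, most notably the target leaf $r_{j^*}$. Such a node may simultaneously receive incoming pointers of counter~$2$ from the top tree and be obligated to emit into $T_{j^*}$ with counter strictly less than~$2$, while other nodes of $T_{j^*}$ are routing to the same errors; the three tier markings have to be tuned so that every non-error node has at least one valid outgoing pointer and no two pointers collide on the same edge. In particular, one must verify (i) that marking $r_i$ in an error-free $T_i$ indeed removes its column-tree duty while still triggering its Tier~2 emission into $T$, and (ii) that marking $r_{j^*}$ in $T$ removes its top-tree duty while its Tier~1 emission into $T_{j^*}$ with counter~$1$ satisfies the strict-decrement case of constraint~\ref{con-err:counters} relative to every incoming counter-$2$ top-tree pointer. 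This is precisely why the case split between ``top tree contains an error'' (Tier~1 unused, Tier~3 handles every column tree) and ``top tree is error-free'' (Tier~1 replaces Tier~3 on $T_{j^*}$) is needed.
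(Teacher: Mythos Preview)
Your high-level plan (three pointer segments with strictly decreasing counters, using \Cref{lem:pointers} inside each tree component) matches the paper's, but there is a genuine gap in how you assign counters to column trees.

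The problem is the sentence ``each non-error node then gathers its $O(\log n)$-radius view, enough to see in full any column tree it inhabits and the whole top tree $T$, since both have height $O(\log n)$.'' This is only true for \emph{valid} tree-like structures. An invalid column-tree component can have diameter $\omega(\log n)$: for instance, glue $k$ valid tree-like structures of height $h$ in a chain by identifying the rightmost leaf of one with the root of the next. The glue points and their neighbours violate $\mathcal{C}^{\ltreelike}$ (via constraints~\ref{cons-tree:boundarychildren} and~\ref{cons-tree:boundarylr}), so every node is within $O(h)=O(\log n)$ of an error, which is exactly what \Cref{lem:pointers} needs. But a non-error node in the $j$-th block is at distance $\Theta(jh)$ from the unique column-tree root, and for suitable $k,h$ this is $\omega(\log n)$. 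Such a node cannot reach the top tree at all, hence cannot learn the canonical $j^*$, and therefore cannot decide whether to emit counter $1$ or counter $3$. If different nodes of the same column tree disagree on this, a counter-$1$ pointer will eventually land on a counter-$3$ node of the same structure type, violating constraint~\ref{con-err:counters} (which requires $c=c'$ when $t=t'$). Your scheme also implicitly assumes a single top-tree component, which need not hold in an arbitrary $G\notin\mathcal{G}$.

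The paper sidesteps all of this by making the counter equal to the \emph{iteration number} rather than tying it to a global choice. In iteration $i$, every still-unlabelled node checks whether the tree component it lies in (of either type) contains a marked node within $O(\log n)$ hops; if so, it runs \Cref{lem:pointers} and outputs with counter $i$, and is then marked. Thus every node in a component containing an error outputs counter $1$ in the first iteration (no agreement on $j^*$ is needed), the adjacent top-tree component outputs counter $2$ in the second iteration, and the remaining valid column trees output counter $3$ in the third. Each handoff across structure types goes from counter $i$ to a node that was marked in some earlier iteration $i'<i$, so constraint~\ref{con-err:counters} is automatic.
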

\begin{proof}
	We first provide an $O(\log n)$ \LOCAL algorithm and then we argue about its correctness and runtime. The algorithm is the following (see \Cref{fig:error-pointers} for an example of execution of the algorithm).
	\begin{enumerate}[noitemsep]
		\item Mark all nodes that do not satisfy $\mathcal{C}^{\lproof}$. These nodes output $\lerror$ on all incident half-edges.
		
		\item For $i = 1, 2, 3$, repeat the following on nodes that still need to produce an output.
		\begin{enumerate}
			\item For each graph induced by edges of the same type $t\in \{\lcolumntrees,\ltoptree\}$, each unmarked node $u$ checks, in $O(\log n)$ rounds, if there is at least one marked node.
			
			\item Apply the algorithm of \Cref{lem:pointers} inside each graph induced by edges of the same type $t$ containing at least one marked node, and output $(\lpointer, i, p, t)$, where $t\in \{\lcolumntrees,\ltoptree\}$ is the type of the subgraph and $p$ is the output obtained by \Cref{lem:pointers}.
			
			\item Mark all nodes that produced at least one pointer.
		\end{enumerate}
	\end{enumerate}
	Clearly, the total runtime of the above algorithm is $O(\log n)$. We need to prove that we obtain a correct output. First of all, note that in any graph $G\in\mathcal{G}$, every node must be contained in at least one tree-like structure. Hence, any node that is not contained in any valid tree-like structure outputs an error. In the first iteration, all nodes that are part of invalid tree-like structures produce valid pointers (this is ensured by \Cref{lem:pointers}), and the algorithm marks all of them at the end of the first iteration. In particular, this implies that, for every column-tree it holds that, either its nodes produced correct pointer chains because it is invalid, or it is valid and it has a root connected to a top-tree. Moreover, if a top-tree is invalid, all its nodes produce correct pointer chains and get marked at the end of the first iteration. Hence, if a valid column-tree is connected to an invalid top-tree, it produces correct pointer chains during the second iteration (since the root of the column-tree is marked after the first iteration, being it also part of a top-tree). Finally, if a column-tree is valid, and the top-tree it is connected to is also valid, in order for the graph to not be contained in $\mathcal{G}$, it means that there is another column-tree $T$ connected to the top-tree, such that $T$ is invalid. In this case, nodes of $T$ produce correct pointer chains, and all of these nodes get marked, during the first iteration. Then, nodes of the top-tree produce correct pointer chains during the second iteration, and nodes of all the other column-trees produce correct pointer chains during the third iteration. Hence, all nodes output at least one pointer label. Inside each structure, pointers use the same counter, and by \Cref{lem:pointers}, they satisfy $\mathcal{C}^{\lerr}$. On the other hand, pointers produced during later iterations have larger counters, and hence they also satisfy $\mathcal{C}^{\lerr}$.
\end{proof}

\begin{figure}
	\centering
	\includegraphics[page=7,width=0.9\textwidth]{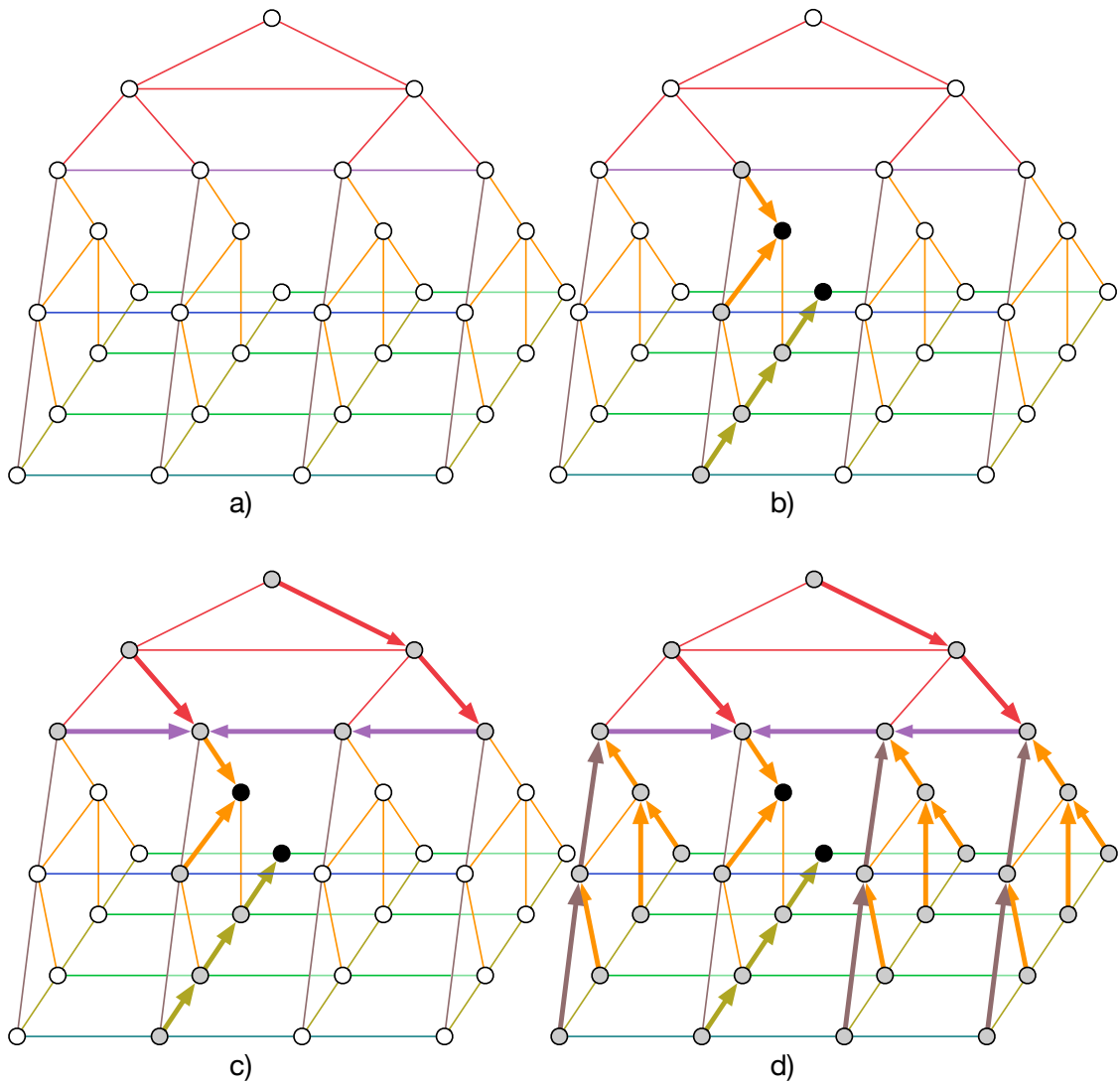}
	\caption{An example of execution of the algorithm described in \Cref{lem:solve-pi-err}. Figure a) shows an invalid graph due to a missing edge between the nodes that are black in Figure b). Figures b), c), and d) show the status after the first, second, and third iteration respectively.}\label{fig:error-pointers}
\end{figure}

\subsection{The \LCL{} Problem \texorpdfstring{\boldmath $\Pi$}{\textPi}}\label{sec:LCL-Pi}

In this section we formally define our \LCL{} problem $\Pi$. Informally, the \LCL{} $\Pi$ is defined as follows. Hard instances $\mathcal{H}$ for the problem are graphs that belong to the family $\mathcal{G}$ and labeled accordingly, where we additionally give to each left-most node of the bottom grid an input from $\{0,1\}$. On this kind of instances, the problem is defined such that every row of the bottom-grid must be labeled with the same value given to its left-most node. This problem corresponds to the problem $\Pi^{\mathsf{real}}$ mentioned in the introduction. 
 While we do not formally define $\Pi^{\mathsf{real}}$, since the notion of \LCL{}s defined in \Cref{sec:definitions} does not allow \LCL{}s with promises, we directly embed this problem in the definition of $\Pi$. On all instances not in $\mathcal{H}$, nodes are allowed to produce a non-empty output for $\Pi^{\lerr}$, and hence prove that the graph is not a valid input instance.

Formally, the set of input labels $\Sigma^{\Pi}_{\lin}$ of our \LCL $\Pi$ is defined as $\Sigma^{\lerr}_{\lin} \times \{0,1,\eps\}$. The set of output labels $\Sigma^{\Pi}_{\lout}$ is defined as $\Sigma^{\lerr}_{\lout} \times \{0,1,\eps\}$. Let $(i_\lerr,i_\Pi)$ be the input of some half-edge $b=(u,e)$. We define $\alpha^{\lerr}(b) = i_\lerr$ and  $\alpha^{\Pi}(b) = i_\Pi$. Also, let $(o_\lerr,o_\Pi)$ be the output on some half-edge $b=(u,e)$. We define $\beta^{\lerr}(b) = o_\lerr$ and $\beta^{\Pi}(b) = o_\Pi$.

The constraints $\mathcal{C}^{\Pi}$ are defined as follows.
\begin{enumerate}[noitemsep]
	\item The input induced by $\alpha^{\lerr}$, combined with the output induced by $\beta^{\lerr}$, must be correct according to $\mathcal{C}^{\lerr}$.\label{cons-pi:nocheat}
	
	\item Any node with at least one incident half-edge $b$ satisfying $\beta^{\lerr}(b) \neq \bot$ is exempt from satisfying the next constraints.\label{cons-pi:solvepi}
	
	\item Every node $u$ such that $\lgridbot \in s(u)$ that does not have an incident half-edge labeled $(\lgridbot,\lleft)$ must satisfy that, if all incident half-edges $b$ have the same input according to $\alpha^\Pi$, then $\beta^{\Pi}(b) = \alpha^{\Pi}(b)$.\label{cons-pi:inout}
	
	\item Every node $u$ such that $\lgridbot \in s(u)$ that does have an incident half-edge $(u,e=\{u,v\})$ labeled $(\lgridbot,\lleft)$ must satisfy that all incident half-edges are labeled with the same label that node $v$ outputs on its half-edge labeled $(\lgridbot,\lright)$.\label{cons-pi:propagate}
\end{enumerate}
Intuitively, the first constraint guarantees that nodes cannot cheat and prove that the graph is invalid on valid instances. The second constraint allows nodes to be exempt on solving the actual problem if the graph is invalid. The third and fourth constraints require nodes in the rows of the grids to propagate the input of the first nodes. See \Cref{fig:LCL-real} for an example of a solution to the \LCL problem $\Pi$ on a graph $G\in\mathcal{G}$.

\begin{figure}
	\centering
	\includegraphics[page=6,width=0.6\textwidth]{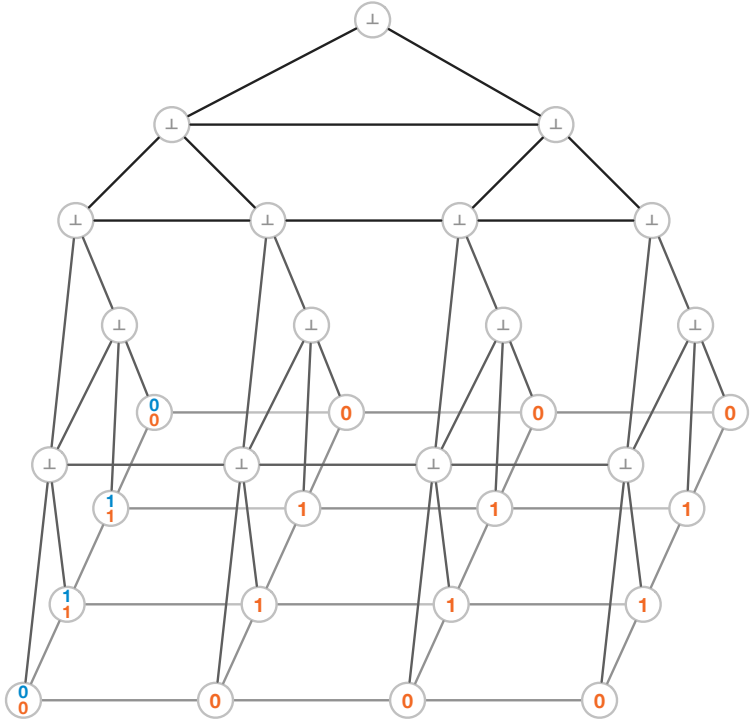}
	\caption{An example of a solution to $\Pi$ on a graph $G\in\mathcal{G}$. The left-most nodes of the bottom-grid have in input either $0$ or $1$ (shown in blue), and give in output the value they have in input. The other nodes of the bottom-grid propagate along the row the output of the let-most nodes. The other nodes output $\bot$.}\label{fig:LCL-real}
\end{figure}

\paragraph{\boldmath Upper bound for $\Pi$ in \LOCAL.}
We now prove that $\Pi$ can be solved efficiently in the \LOCAL model.
\begin{lemma}
	The \LCL problem $\Pi$ can be solved in $O(\log n)$ deterministic rounds in the \LOCAL model.
\end{lemma}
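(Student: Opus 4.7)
My plan is to combine the procedure of \Cref{lem:solve-pi-err} with a routing argument through the tree structures of a valid instance. In parallel, every node will run two procedures, each taking $O(\log n)$ rounds in the \LOCAL model. Procedure (a) executes the algorithm of \Cref{lem:solve-pi-err} to obtain the $\beta^{\lerr}$ component of the output. Procedure (b) has each node gather its $O(\log n)$-radius ball and, if the resulting view is consistent with being part of a valid instance and the node lies in the bottom grid, navigate through the gathered tree structures to find the leftmost node of its row and output its input value as $\beta^{\Pi}$; all other nodes output some default such as $\eps$ for $\beta^\Pi$.

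If $G$ is a valid input instance (that is, $G \in \mathcal{G}$ with a labeling satisfying $\mathcal{C}^{\lproof}$ everywhere), then by \Cref{lem:nocheat} the only feasible $\Pi^{\lerr}$-labeling is all-$\bot$, and procedure (a) indeed produces this, because no node locally violates $\mathcal{C}^{\lproof}$ and the algorithm of \Cref{lem:solve-pi-err} therefore marks no vertex and creates no pointer chain. As a result, every node must satisfy constraints \ref{cons-pi:inout} and \ref{cons-pi:propagate}. These are ensured by procedure (b): any $G \in \mathcal{G}$ has diameter $O(\log n)$, because a bottom-grid node $(x,y)$ reaches $(x',y')$ by going up column-tree $T_x$ to its root $r_x$, across the top-tree $T$ to $r_{x'}$, and down $T_{x'}$ to $(x',y')$, a total of $O(\ell + \ell') = O(\log n)$ hops. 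In particular, each bottom-grid node can read the input stored at the leftmost node of its own row in $O(\log n)$ \LOCAL rounds, so constraints \ref{cons-pi:inout} and \ref{cons-pi:propagate} are met.

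If $G$ is instead an invalid instance, \Cref{lem:solve-pi-err} guarantees that procedure (a) produces a non-$\bot$ output at every vertex, so every node is exempt from constraints \ref{cons-pi:inout} and \ref{cons-pi:propagate} via constraint \ref{cons-pi:solvepi}, while constraint \ref{cons-pi:nocheat} is directly satisfied by \Cref{lem:solve-pi-err}. The arbitrary $\beta^{\Pi}$ output produced by (b) then becomes harmless. The main subtlety I expect is in composing (a) and (b) cleanly: one has to verify that the all-$\bot$ error labeling forced in valid instances is precisely what the algorithm of \Cref{lem:solve-pi-err} outputs there, and that in invalid instances the same algorithm always covers every vertex with at least one non-$\bot$ pointer. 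Both facts are already contained in the cited lemmas, so the argument should close without further work.
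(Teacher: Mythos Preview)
Your approach matches the paper's: use the algorithm of \Cref{lem:solve-pi-err} for the $\beta^{\lerr}$ component and exploit the $O(\log n)$ diameter of graphs in $\mathcal{G}$ for the $\beta^{\Pi}$ component. There is, however, a small gap in your case analysis. You split into ``valid instance'' ($G\in\mathcal{G}$ with a labeling satisfying $\mathcal{C}^{\lproof}$ everywhere) versus ``invalid instance,'' and for the latter you invoke \Cref{lem:solve-pi-err} to obtain a non-$\bot$ output at every vertex. But \Cref{lem:solve-pi-err} is stated only for $G\notin\mathcal{G}$; it says nothing about the case $G\in\mathcal{G}$ together with an input labeling that violates $\mathcal{C}^{\lproof}$ somewhere. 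In that situation the algorithm of \Cref{lem:solve-pi-err} will mark the violating nodes and grow some pointer chains, but there is no guarantee that \emph{every} vertex ends up with a non-$\bot$ half-edge, so a bottom-grid node could still be obliged to satisfy constraints~\ref{cons-pi:inout} and~\ref{cons-pi:propagate} while your procedure~(b), seeing an inconsistent view, hands it~$\eps$.

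The fix is exactly what the paper does: case-split on $G\in\mathcal{G}$ versus $G\notin\mathcal{G}$. Whenever $G\in\mathcal{G}$ the diameter is $O(\log n)$ \emph{regardless} of the input labeling, so every node sees the entire graph in its $O(\log n)$-ball and can brute-force a globally consistent solution to~$\Pi$ (producing a valid pointer-chain proof if the labeling is bad, or propagating row inputs if it is good). Your procedure~(b) already gathers the right ball; just have it brute-force whenever that ball is the whole graph, rather than only when the view looks like a valid instance.
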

\begin{proof}
	By \Cref{lem:solve-pi-err} nodes can spend $O(\log n)$ rounds to solve $\Pi^\lerr$, such that if $G \not\in \mathcal{G}$, then every node outputs a value different from $\bot$ on at least one incident edge. This implies that if $G \not\in \mathcal{G}$, then nodes can solve $\Pi$ in $O(\log n)$ rounds as well.
	On the other hand, if $G \in \mathcal{G}$, then the diameter of the graph is $O(\log n)$, and hence nodes can spend $O(\log n)$ rounds to gather the whole graph and produce a valid solution for $\Pi$.
\end{proof}

\paragraph{\boldmath Lower bound for $\Pi$ in \CONGEST.}
We now prove that $\Pi$ is a hard problem in the \CONGEST model.
\begin{lemma}
	\label{lemma:general-congest-LB}
	The \LCL problem $\Pi$ requires $\Omega(\sqrt{n}/\log^2{n})$ rounds in the \CONGEST model, even for a randomized algorithm.
\end{lemma}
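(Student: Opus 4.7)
The plan is a reduction from two-party set disjointness, exploiting the fact that the family $\mathcal{G}$ is a bounded-degree realization of the Das Sarma et al.~\cite{Sarma2012} lower-bound graph: graphs in $\mathcal{G}$ have diameter $O(\log n)$ through the top tree, yet the only direct paths between far-apart columns of the bottom grid have length $\Theta(\sqrt n)$. Fixing $h = w = \Theta(\sqrt n)$ so that $|V(G)| = \Theta(n)$, I will build a family of hard inputs to $\Pi$ parameterized by a pair $(a,b) \in \{0,1\}^h \times \{0,1\}^h$ and show that solving $\Pi$ in $T$ rounds implies a protocol for $\mathrm{Disj}(a,b)$ using $O(T \log^2 n)$ bits of communication; combining with the classical $\Omega(h)$ randomized communication lower bound for disjointness yields $T = \Omega(\sqrt n / \log^2 n)$.

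Given $(a,b)$, both parties construct (their portion of) a graph $G \in \mathcal{G}$ with bottom-grid dimensions $h \times w$, labeled on $\alpha^\lerr$ with the canonical labeling from \Cref{lem:validinstances} so that $\mathcal{C}^\lproof$ holds at every node. Alice places $\alpha^\Pi = a_i$ on the leftmost node of row $i$, Bob places $\alpha^\Pi = b_i$ on the rightmost node of row $i$, and the remaining $\alpha^\Pi$ values are set to $\eps$. By \Cref{lem:nocheat} and constraints~\ref{cons-pi:nocheat}--\ref{cons-pi:solvepi}, no node may output a non-$\bot$ value in $\beta^\lerr$; hence constraints~\ref{cons-pi:inout} and~\ref{cons-pi:propagate} force the output $\beta^\Pi$ in row $i$ of the bottom grid to equal $a_i$, so once Bob recovers the outputs of the rightmost bottom-grid column he knows $a$ and can decide $\langle a,b\rangle$. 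The parties then jointly simulate a putative $T$-round \CONGEST algorithm $\mathcal{A}$ for $\Pi$: Alice simulates the leftmost $w/2$ columns together with their column-trees and the left halves of the side grid and top tree, Bob simulates the complement, and in each round they exchange the $O(\log n)$-bit messages $\mathcal{A}$ sends across the cut $E(V_A,V_B)$. By the binary structure of the top tree, this cut contains only $O(\log n)$ top-tree edges (one horizontal edge per level, with no parent-child edges crossing when leaf-halves are chosen consistently across levels), $O(\log n)$ side-grid edges, and $h$ horizontal bottom-grid edges.

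The main obstacle is precisely the $h$ bottom-grid edges in the cut: a naive bandwidth bound permits $\Omega(hT\log n)$ bits to cross, which is too weak. The key claim, in the spirit of Das Sarma et al., is that in the regime $T \ll \sqrt n$ these bottom-grid cut edges cannot transmit any new information about $a$ beyond what has already crossed via the top tree or side grid. Intuitively, for the Alice-side endpoint of a bottom-grid cut edge (at column $w/2-1$) to have learned the bit $a_j$, that bit must have reached it either by traveling $\Omega(\sqrt n)$ hops along the bottom grid (impossible within $T$ rounds) or by coming down through the top tree, whose cut has only $O(\log n)$ edges. A round-by-round induction on the joint distribution of messages then shows that the transcript on the $h$ bottom-grid cut edges at round $t$ is a deterministic function of the inputs within radius $t$ of the cut, and the portion of these inputs lying in the leftmost bottom-grid column is determined by information that has already crossed the top-tree/side-grid cut by round $t$. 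Aggregating across $T$ rounds bounds the total information about $a$ crossing $E(V_A,V_B)$ by $O(T \log^2 n)$.

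Combining with the $\Omega(h) = \Omega(\sqrt n)$ randomized communication lower bound for set disjointness gives $T\log^2 n = \Omega(\sqrt n)$, and hence $T = \Omega(\sqrt n / \log^2 n)$, as claimed. The remaining steps (constructing the hard input, accounting for the simulation, and invoking the disjointness lower bound) are routine once the ``useless bandwidth'' property of the bottom-grid cut edges is formalized; this information-theoretic step is the technical core of Das Sarma-type lower bounds, adapted here to the input/output structure of $\Pi$.
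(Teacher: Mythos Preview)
Your fixed-cut simulation has a genuine gap. With Alice simulating columns $0,\dots,w/2-1$ and Bob the rest, the cut contains $h=\Theta(\sqrt n)$ bottom-grid edges, and in a standard simulation Alice must transmit all $h$ messages on these edges every round; that gives a communication bound of $O(T\sqrt n\log n)$, which is vacuous. Your proposed fix---that the bottom-grid cut messages ``carry no new information about $a$ beyond what has already crossed the top-tree/side-grid cut''---does not hold as stated. The information path from $a$ (column $0$) to the Alice-side endpoint at column $w/2-1$ runs \emph{entirely within Alice's side}: up $T_0$, across the top tree to $r_{w/2-1}$, and down $T_{w/2-1}$. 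None of these edges are in the Alice--Bob cut, so nothing about this flow is ``determined by information that has already crossed'' that cut. After $O(\log n)$ rounds the state of node $(w/2-1,i)$ genuinely depends on $a$, Bob cannot reconstruct it, and Alice must send the full bottom-grid messages. This is also not what the Das~Sarma et al.\ technique does: the core of their argument is a \emph{moving} cut, not an information-theoretic compression of a fixed one.

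The paper's proof uses exactly this moving-cut approach. Bob maintains a shrinking set $B_r$: he starts with everything except column~$0$ and, at each round, drops one more column together with the tree nodes above it. The key structural fact is that every bottom-grid node in $B_{r+1}$ has all of its bottom-grid and column-tree neighbors already in $B_r$, so Bob can compute those incoming messages himself. Only $O(\log n)$ top-tree edges per round (one per level) straddle the boundary between $B_r\setminus B_{r+1}$ and $B_{r+1}$; Alice sends just those, for $O(T\log^2 n)$ total bits. Combined with the trivial $\Omega(k)$ one-way lower bound for ``Bob learns Alice's $k$-bit string'' (disjointness is unnecessary here, and your inputs $b_i$ on the rightmost column have no effect on $\Pi$ anyway, since constraint~\ref{cons-pi:inout} applies only to the leftmost column), this yields $T=\Omega(\sqrt n/\log^2 n)$.
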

\begin{proof}
	Consider any graph $G \in \mathcal{G}$, and note that for any $n$ there is a graph of size at least $n$ in the family, such that the bottom-grid has dimensions $s \times s$, for some $s = \Theta(\sqrt{n})$, that is, the grid is a square.
	Moreover, assume that a value in $\{0,1\}$ is provided to each left-most node of the bottom-grid. By \Cref{lem:nocheat}, the only way to solve $\Pi^{\lerr}$ is to output $\bot$ on all nodes, and by constraint \ref{cons-pi:nocheat}  the same holds for $\Pi$ as well. Hence, by constraint \ref{cons-pi:solvepi}, all nodes must satisfy constraints \ref{cons-pi:inout} and \ref{cons-pi:propagate}. Constraint \ref{cons-pi:inout} ensures that, if a value in $\{0,1\}$ is given in input to a left-most node $u$ of the bottom-grid, then it must produce the same output. Constraint \ref{cons-pi:propagate} ensures that all nodes on the same row of $u$ must produce the same output of $u$. Hence, each right-most node in the grid must know the input in $\{0,1\}$ given to the left-most node in the same row. We prove that there is no bandwidth-efficient way to solve this problem in the \CONGEST model. 
	
	The proof follows the lines of the lower bounds of Das Sarma et al.~\cite{Sarma2012}. Since the lower bound graph is slightly different here, we include a proof here for completeness.
	
	We reduce from the 2-party communication problem $P$ in which one of two players, Alice, has an input string $str=\{0,1\}^k$, and the other player, Bob, needs to output that string. A straightforward information theoretic argument gives that the communication complexity of this task is exactly $k$ bits.
	
	Consider a \CONGEST algorithm $A$ for computing $\Pi$ on the graph $G$ with $n=\Theta(k^2)$ in $T$ rounds. If $T\geq k-1$ then $T=\Omega(\sqrt{n})$ and the lemma follows. Otherwise, we show that Alice and Bob can simulate $A$ and produce the output for $P$ by exchanging a number of bits that is $\tilde{O}(T)$, which implies that $T$ has to be at least $\tilde{\Omega}(k)=\tilde{\Omega}(\sqrt{n})$.
	
	The setup for the simulation is as follows. Alice and Bob construct the graph $G$ with $n=\Theta(k^2)$ nodes such that $s=k$ (recall that the grid size is $s\times s$). Alice assigns each node in position $(0,y)$ of the bottom-grid with the input $str[y]$ from her string, for every $0\leq y\leq k-1$. Thus, Alice can simulate the entire distributed algorithm $A$ locally without communication with Bob. On the other hand, while Bob knows the topology of the graph, he does not know the inputs to the nodes $(0,y)$ and therefore cannot simulate the entire distributed algorithm $A$ locally without communication with Alice. However, communicating with Alice in order to simulate all the nodes throughout the algorithm is too expensive, and nullifies the lower bound. Hence, Bob's goal is to simulate only the nodes $(k-1,y)$ for every $0\leq y\leq k-1$ and thus after simulating them in $A$, Bob knows Alice's input as desired. The simulation works by the \emph{moving cut} approach of Das Sarma et al.~\cite{Sarma2012}, in which Bob starts by simulating all nodes except those whose inputs he does not know, and gradually drops additional nodes from the simulation when simulating them becomes too expensive, in a way which still allows the simulation of the remaining nodes to proceed with little communication with Alice.
	
	Formally, we need to define some sets of nodes, as follows. Fix $0\leq j\leq k-1$, and let $C_j= \{ (j,y) \mid 0\leq y\leq k-1\}$ be the set of all nodes in column $j$ of the bottom-grid. Define $W_j$ to be the set of all the nodes in $C_j$ and all the nodes in the levels above that are reachable from nodes in $C_j$ by moving only up to a parent (either in the column-tree, or in the top-tree). That is, let $U_{0,j}=C_j$, and for every $0<t<\ell$ (where $\ell=\Theta(\log n)$ is the number of levels), let $U_{t,j} = \{ u \in V\mid \exists v \in U_{t-1,j}, u=\lparent(v) \}$, where $\lparent(v)$ denotes the parent of $v$. Then $W_j$ is defined to be $\bigcup_{0\leq t < \ell} {U_{t,j}}$.
	
	We now define $B_r$ for $0\leq r < k-1$ as $B_r=\bigcup_{r+1\leq j \leq k-1} {W_j}$.
	The goal is that for every $0\leq r < k-1$, Bob knows the state of all nodes in $B_r$ at the end of round $r$. In particular, at the end of round $k-2$, Bob knows the state of all nodes in $B_{k-2}$, which is equal to the set $W_{k-1}$, which includes all nodes in $C_{k-1}$. Since the assumption is that algorithm $A$ completes in $T\leq k-2$ rounds, Bob knows the outputs of the nodes in $C_{k-1}$ and thus he knows Alice's input. It remains to bound the communication that is needed for Bob to know the state of all nodes in $B_r$ at the end of round $r$, for every $0\leq r < k-1$. We prove by induction on $r$ that by delivering $O(\log^2 n)$ bits from Alice to Bob it is possible for Bob to obtain this information.
	
	The base case is for $r=0$. Notice that $B_0$ is equal to the set $V\setminus \{ (0,y) \mid 0\leq y\leq k-1\}$, which is the set of all nodes except those whose inputs Bob does not know. It holds that Bob knows the initial state of all nodes in $B_0$ at the start of the simulation, or in other words, at the end of round $r=0$.
	
	The induction hypothesis is that the claim holds for some value of $r$, and we show that it holds for $r+1$. That is, we need to show that Bob knows the state of all nodes in $B_{r+1}$ at the end of round $r+1$. First, notice that by definition, it holds that  $B_{r+1}\subseteq B_r$, and therefore Bob knows the state of all nodes in $B_{r+1}$ at the end of round $r$. To obtain their state at the end of round $r+1$, Bob needs to know the messages that are sent to them in round $r+1$. For any node $u$ in $B_{r+1}$ whose neighbors are all in $B_r$, the messages that are sent to $u$ in round $r+1$ can be computed locally by Bob without obtaining any information from Alice, due to the induction hypothesis. It remains to bound the number of messages that need to be sent to a node $u \in B_{r+1}$ in round $r+1$ from neighbors which are not in $B_r$, and to sum this over all nodes $u \in B_{r+1}$.
	
	We consider several cases for a node $u \in B_{r+1}$. The first case is that $u$ is a bottom-grid node (the nodes on the side-grid will be treated as column-tree nodes). In this case all of the neighbors of $u$ are in $B_r$, because $\lparent(u)$ is in $B_{r+1}$ and hence in $B_r$, and if $u=(x,y)$ then its neighbor $(x-1,y)$ is in $B_r$ and its possible neighbors $(x+1,y)$, $(x,y-1)$, and $(x,y+1)$ are in $B_{r+1}$ and hence in $B_r$. Note that this argument is the reason for which we defined $B_{r}$ to drop the grid nodes of the $r$-th column.
	
	The second case is that $u$ is in the column-tree $T_x$ for some column $x$. Then all of its neighbors are either in $T_x$ or in $T_{x-1}$, or in $T_{x+1}$, and hence in $B_r$, except if it is the root of the column-tree (and therefore is in the top-tree).
	
	The remaining case is if $u$ is in the top-tree, in which case the parent of $u$, $\lparent(u)$, its neighbor to the right, $\lright(u)$, and its right child $\lrch(u)$ are also in $B_{r+1}$ and hence in $B_r$. Its remaining neighbors, the neighbor to its left, $\lleft(u)$, and its left child, $\llch(u)$, may be out of $B_r$, but this can happen only for a single node in each level of the top tree. This gives a total of $2\log{n}$ messages that Alice needs to send to Bob in order for him to successfully simulate the nodes of $B_{r+1}$ at the end of round $r+1$, for a total of $O(\log^2{n})$ bits, as claimed.
	
	To conclude, we obtain that if the complexity $T$ of the distributed algorithm $A$ is $T\geq k-1$ then the lower bound directly holds since $k=\Theta(\sqrt{n})$, and otherwise, Bob can simulate $A$ on the nodes $(k-1,y)$ for all $0\leq y \leq k-1$, by receiving $O(T\log^2{n})$ bits from Alice, after which Bob knows Alice's input $str$. Since this number of bits must be at least $k$, we obtain that $T\geq \Omega(k/\log^2 n)$, and the lemma follows.
\end{proof}

\section*{Acknowledgments}
This project was partially supported by the European Union's Horizon 2020 Research and  Innovation Programme under grant agreement no. 755839 (Keren Censor-Hillel, Yannic Maus).
\bibliographystyle{alpha}
\bibliography{references}

\newcommand{\etalchar}[1]{$^{#1}$}
\begin{thebibliography}{BCHD{\etalchar{+}}19}

\bibitem[ACHK16]{AbboudCK16}
Amir Abboud, Keren Censor-Hillel, and Seri Khoury.
\newblock Near-linear lower bounds for distributed distance computations, even
  in sparse networks.
\newblock In {\em Proc.\ 30th International Symposium on Distributed Computing
  (DISC 2016)}, volume 9888 of {\em Lecture Notes in Computer Science}, pages
  29--42. Springer, 2016.

\bibitem[BBC{\etalchar{+}}19]{balliu19lcl-decidability}
Alkida Balliu, Sebastian Brandt, Yi-Jun Chang, Dennis Olivetti, Mika{\"e}l
  Rabie, and Jukka Suomela.
\newblock The distributed complexity of locally checkable problems on paths is
  decidable.
\newblock In {\em Proc.\ 38th ACM Symposium on Principles of Distributed
  Computing (PODC 2019)}, pages 262--271. ACM Press, 2019.

\bibitem[BBE{\etalchar{+}}20]{binary_lcls}
Alkida Balliu, Sebastian Brandt, Yuval Efron, Juho Hirvonen, Yannic Maus,
  Dennis Olivetti, and Jukka Suomela.
\newblock Classification of distributed binary labeling problems.
\newblock In {\em Proc.\ 34th International Symposium on Distributed Computing
  (DISC 2020)}, volume 179 of {\em LIPIcs}, pages 17:1--17:17. Schloss
  Dagstuhl--Leibniz-Zentrum f{\"u}r Informatik, 2020.

\bibitem[BBH{\etalchar{+}}19]{BBHORS19MMlowerBound}
Alkida Balliu, Sebastian Brandt, Juho Hirvonen, Dennis Olivetti, Mika{\"e}l
  Rabie, and Jukka Suomela.
\newblock Lower bounds for maximal matchings and maximal independent sets.
\newblock In {\em Proc.\ 60th Annual IEEE Symposium on Foundations of Computer
  Science (FOCS 2019)}, pages 481--497. IEEE, 2019.

\bibitem[BBKO21]{BBKOmis}
Alkida Balliu, Sebastian Brandt, Fabian Kuhn, and Dennis Olivetti.
\newblock Improved distributed lower bounds for {MIS} and bounded (out-)degree
  dominating sets in trees.
\newblock In {\em Proc.\ 40th ACM Symposium on Principles of Distributed
  Computing (PODC 2021)}. ACM Press, 2021.

\bibitem[BBO20]{BBO20rs}
Alkida Balliu, Sebastian Brandt, and Dennis Olivetti.
\newblock Distributed lower bounds for ruling sets.
\newblock In {\em 61st {IEEE} Annual Symposium on Foundations of Computer
  Science, {FOCS} 2020, Durham, NC, USA, November 16-19, 2020}, pages 365--376.
  {IEEE}, 2020.

\bibitem[BBO{\etalchar{+}}21]{balliu21rooted-trees}
Alkida Balliu, Sebastian Brandt, Dennis Olivetti, Jan Studen{\'y}, Jukka
  Suomela, and Aleksandr Tereshchenko.
\newblock Locally checkable problems in rooted trees.
\newblock In {\em Proc.\ 40th ACM Symposium on Principles of Distributed
  Computing (PODC 2021)}. ACM Press, 2021.

\bibitem[BBOS20a]{BBOS18almostGlobal}
Alkida Balliu, Sebastian Brandt, Dennis Olivetti, and Jukka Suomela.
\newblock Almost global problems in the {LOCAL} model.
\newblock {\em Distributed Computing}, 2020.

\bibitem[BBOS20b]{BBOS20paddedLCL}
Alkida Balliu, Sebastian Brandt, Dennis Olivetti, and Jukka Suomela.
\newblock How much does randomness help with locally checkable problems?
\newblock In {\em Proc.\ 39th ACM Symposium on Principles of Distributed
  Computing (PODC 2020)}, pages 299--308. ACM Press, 2020.

\bibitem[BCHD{\etalchar{+}}19]{BachrachCDELP19}
Nir Bachrach, Keren Censor-Hillel, Michal Dory, Yuval Efron, Dean Leitersdorf,
  and Ami Paz.
\newblock Hardness of distributed optimization.
\newblock In {\em Proc.\ 38th ACM Symposium on Principles of Distributed
  Computing (PODC 2019)}, pages 238--247, 2019.

\bibitem[BEG18]{BarenboimEG18}
Leonid Barenboim, Michael Elkin, and Cyril Gavoille.
\newblock A fast network-decomposition algorithm and its applications to
  constant-time distributed computation.
\newblock {\em Theor. Comput. Sci.}, 751:2--23, 2018.

\bibitem[BEPS16]{BEPSv3}
L.~Barenboim, M.~Elkin, S.~Pettie, and J.~Schneider.
\newblock The locality of distributed symmetry breaking.
\newblock {\em Journal of the ACM}, 63(3):20:1--20:45, 2016.

\bibitem[BFH{\etalchar{+}}16]{BFHKLRSU16}
Sebastian Brandt, Orr Fischer, Juho Hirvonen, Barbara Keller, Tuomo
  Lempi{\"a}inen, Joel Rybicki, Jukka Suomela, and Jara Uitto.
\newblock A lower bound for the distributed {L}ov{\'a}sz local lemma.
\newblock In {\em Proc.\ 48th ACM Symposium on Theory of Computing (STOC
  2016)}, pages 479--488. ACM Press, 2016.

\bibitem[BGGR21]{brandt21trees}
Sebastian Brandt, Jan Greb{\'i}k, Christoph Grunau, and V{\'a}clav Rozho{\v n}.
\newblock The landscape of distributed complexities on trees, 2021.
\newblock Unpublished manuscript.

\bibitem[BHK{\etalchar{+}}17]{Brandt2017}
Sebastian Brandt, Juho Hirvonen, Janne~H. Korhonen, Tuomo Lempi{\"a}inen,
  Patric R.~J. {\"O}sterg{\aa}rd, Christopher Purcell, Joel Rybicki, Jukka
  Suomela, and Przemys{\l}aw Uzna{\'n}ski.
\newblock {LCL} problems on grids.
\newblock In {\em Proc.\ 36th ACM Symposium on Principles of Distributed
  Computing (PODC 2017)}, pages 101--110. ACM Press, 2017.

\bibitem[BHK{\etalchar{+}}18]{BHKLOS18lclComplexity}
Alkida Balliu, Juho Hirvonen, Janne~H. Korhonen, Tuomo Lempi{\"a}inen, Dennis
  Olivetti, and Jukka Suomela.
\newblock New classes of distributed time complexity.
\newblock In {\em Proc.\ 50th ACM Symposium on Theory of Computing (STOC
  2018)}, pages 1307--1318. ACM Press, 2018.

\bibitem[BHOS19]{BHOS19}
Alkida Balliu, Juho Hirvonen, Dennis Olivetti, and Jukka Suomela.
\newblock Hardness of minimal symmetry breaking in distributed computing.
\newblock In {\em Proc.\ 38th ACM Symposium on Principles of Distributed
  Computing (PODC 2019)}, pages 369--378. ACM Press, 2019.

\bibitem[Bra19]{Brandt19RE}
Sebastian Brandt.
\newblock An automatic speedup theorem for distributed problems.
\newblock In {\em Proc.\ 38th ACM Symposium on Principles of Distributed
  Computing (PODC 2019)}, pages 379--388. {ACM}, 2019.

\bibitem[CD18]{Censor-HillelD18}
Keren Censor{-}Hillel and Michal Dory.
\newblock Distributed spanner approximation.
\newblock In Calvin Newport and Idit Keidar, editors, {\em Proceedings of the
  2018 {ACM} Symposium on Principles of Distributed Computing, {PODC} 2018,
  Egham, United Kingdom, July 23-27, 2018}, pages 139--148. {ACM}, 2018.

\bibitem[Cha20]{Chang20}
Yi-Jun Chang.
\newblock The complexity landscape of distributed locally checkable problems on
  trees.
\newblock In {\em Proc.\ 34th International Symposium on Distributed Computing
  (DISC 2020)}, volume 179 of {\em LIPIcs}, pages 18:1--18:17. Schloss
  Dagstuhl--Leibniz-Zentrum f{\"u}r Informatik, 2020.

\bibitem[Cha21]{Chang21}
Yi-Jun Chang.
\newblock Personal communication, 2021.

\bibitem[CHKP17]{Censor-HillelKP17}
Keren Censor-Hillel, Seri Khoury, and Ami Paz.
\newblock Quadratic and near-quadratic lower bounds for the {CONGEST} model.
\newblock In {\em Proc.\ 31st International Symposium on Distributed Computing
  (DISC 2017)}, volume~91 of {\em LIPIcs}, pages 10:1--10:16. Schloss Dagstuhl
  - Leibniz-Zentrum f{\"{u}}r Informatik, 2017.

\bibitem[CHL{\etalchar{+}}18]{CHLPU18}
Y.-J. Chang, Q.~He, W.~Li, S.~Pettie, and J.~Uitto.
\newblock The complexity of distributed edge coloring with small palettes.
\newblock In {\em {Proceedings of the ACM-SIAM Symposium on Discrete Algorithms
  (SODA)}}, 2018.

\bibitem[CKP19]{ChangKP19}
Yi-Jun Chang, Tsvi Kopelowitz, and Seth Pettie.
\newblock An exponential separation between randomized and deterministic
  complexity in the {LOCAL} model.
\newblock {\em SIAM J. Comput.}, 48(1):122--143, 2019.

\bibitem[CP19]{CP19}
Yi{-}Jun Chang and Seth Pettie.
\newblock A time hierarchy theorem for the {LOCAL} model.
\newblock {\em SIAM J. Comput.}, 48(1):33--69, 2019.

\bibitem[CPS17]{CPS17}
Kai{-}Min Chung, Seth Pettie, and Hsin{-}Hao Su.
\newblock Distributed algorithms for the {L}ov{\'a}sz local lemma and graph
  coloring.
\newblock {\em Distributed Comput.}, 30(4):261--280, 2017.

\bibitem[CSS21]{lcls_on_paths_and_cycles}
Yi-Jun Chang, Jan Studen{\'y}, and Jukka Suomela.
\newblock Distributed graph problems through an automata-theoretic lens.
\newblock In {\em Proc.\ 28th International Colloquium on Structural
  Information and Communication Complexity (SIROCCO 2021)}, LNCS. Springer,
  2021.

\bibitem[EL73]{LLL73}
P.~Erdős and L.~Lovász.
\newblock Problems and results on 3-chromatic hypergraphs and some related
  questions, 1973.

\bibitem[Elk06]{Elkin2006}
Michael Elkin.
\newblock An unconditional lower bound on the time-approximation trade-off for
  the distributed minimum spanning tree problem.
\newblock {\em {SIAM} Journal on Computing}, 36(2):433--456, 2006.

\bibitem[FG17]{FGLLL17}
M.~Fischer and M.~Ghaffari.
\newblock Sublogarithmic distributed algorithms for {L}ov{\'a}sz local lemma,
  and the complexity hierarchy.
\newblock In {\em {Proceedings of the International Symposium on Distributed
  Computing (DISC)}}, volume~91 of {\em LIPIcs}, pages 18:1--18:16. Schloss
  Dagstuhl - Leibniz-Zentrum fuer Informatik, 2017.

\bibitem[FGKO18]{FischerGKO18}
Orr Fischer, Tzlil Gonen, Fabian Kuhn, and Rotem Oshman.
\newblock Possibilities and impossibilities for distributed subgraph detection.
\newblock In Christian Scheideler and Jeremy~T. Fineman, editors, {\em
  Proceedings of the 30th on Symposium on Parallelism in Algorithms and
  Architectures, {SPAA} 2018, Vienna, Austria, July 16-18, 2018}, pages
  153--162. {ACM}, 2018.

\bibitem[FHW12]{FrischknechtHW12}
Silvio Frischknecht, Stephan Holzer, and Roger Wattenhofer.
\newblock Networks cannot compute their diameter in sublinear time.
\newblock In {\em Proc. 23rd Annual ACM-SIAM Symposium on Discrete Algorithms
  (SODA 2012)}, pages 1150--1162. SIAM, 2012.

\bibitem[Lin87]{linial87}
N.~Linial.
\newblock Distributive graph algorithms -- global solutions from local data.
\newblock In {\em Proc. Symp. on Foundations of Computer Science (FOCS 1987)},
  pages 331--335, 1987.

\bibitem[Lin92]{Linial92}
Nathan Linial.
\newblock Locality in distributed graph algorithms.
\newblock {\em SIAM J. Comput.}, 21(1):193--201, 1992.

\bibitem[MR85]{Miller1985}
Gary~L. Miller and John~H. Reif.
\newblock {Parallel tree contraction and its application}.
\newblock In {\em Proc. 26th Annual Symposium on Foundations of Computer
  Science (FOCS 1985)}, pages 478--489. IEEE, 1985.

\bibitem[MT10]{MT10}
Robin~A. Moser and G{\'a}bor Tardos.
\newblock A constructive proof of the general {L}ov{\'a}sz local lemma.
\newblock {\em J. {ACM}}, 57(2):11:1--11:15, 2010.

\bibitem[Nan14]{Nanongkai14}
Danupon Nanongkai.
\newblock Distributed approximation algorithms for weighted shortest paths.
\newblock In {\em Symposium on Theory of Computing, {STOC} 2014, New York, NY,
  USA, May 31 - June 03, 2014}, pages 565--573, 2014.

\bibitem[NS95]{naor95}
M.~Naor and L.~Stockmeyer.
\newblock What can be computed locally?
\newblock {\em SIAM Journal on Computing}, 24(6):1259--1277, 1995.

\bibitem[Oli20]{Olivetti2019REtor}
Dennis Olivetti.
\newblock {Round Eliminator: a tool for automatic speedup simulation}, 2020.

\bibitem[Pel00]{peleg00}
D.~Peleg.
\newblock {\em Distributed Computing: A Locality-Sensitive Approach}.
\newblock SIAM, 2000.

\bibitem[PR00]{Peleg2000}
David Peleg and Vitaly Rubinovich.
\newblock A near-tight lower bound on the time complexity of distributed
  minimum-weight spanning tree construction.
\newblock {\em {SIAM} Journal on Computing}, 30(5):1427--1442, 2000.

\bibitem[SHK{\etalchar{+}}12]{Sarma2012}
Atish~Das Sarma, Stephan Holzer, Liah Kor, Amos Korman, Danupon Nanongkai,
  Gopal Pandurangan, David Peleg, and Roger Wattenhofer.
\newblock Distributed verification and hardness of distributed approximation.
\newblock {\em {SIAM} Journal on Computing}, 41(5):1235--1265, 2012.

\end{thebibliography}

\appendix

\section{Shattering}
The following lemma is proven in \cite{FGLLL17} along the lines of a similar proof in \cite{BEPSv3}. While the proof of (P1) requires a careful reasoning, (P2) immediately follows from (P1): If a connected component with more than $O(\log_{\Delta} n \cdot \Delta^{2c_2})$ vertices existed in $H[B]$ one could greedily pick $> \log_{\Delta}n$ vertices from the component to obtain a connected component $U$ in $Z[B]$, that violates (P1).

\begin{lemma}[The Shattering Lemma \cite{FGLLL17} cf. \cite{BEPSv3}]
	\label{lem:shattering}
	Let $H = (V,E)$ be a graph with maximum degree $\Delta_H$. Consider
	a process which generates a random subset $B\subseteq V$ where $Pr(v \in B) \leq \Delta^{-c_1}$, for some constant $c_1 \geq 1$,
	and that the random variables $1(v \in B)$ depend only on the randomness of nodes within at most $c_2$
	hops from $v$, for all $v \in V$ , for some constant $c_2 \geq 1$. Moreover, let $Z = H[2c_2+1,4c_2+2]$ be the graph
	which contains an edge between $u$ and $v$ iff their distance in $H$ is between $2c_2 + 1$ and $4c_2 + 2$.  Then
	with probability at least $1 - n^{-c_3}$, for any constant $c_3$ satisfying $c_1>c_3+ 4c_2 + 2$, we have the following two
	properties:
	\begin{description}
		\item [(P1)] $Z[B]$ has no connected component $U$ with $|U| \geq \log_{\Delta}n$.
		\item[(P2)] Each connected component of $H[B]$ has size at most $O(\log_{\Delta} n \cdot \Delta^{2c_2})$.
	\end{description}
\end{lemma}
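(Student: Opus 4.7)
The plan is to prove (P1) first by a direct union bound over connected vertex sets in $Z$, and then to derive (P2) from (P1) by a greedy ball-growing argument in $H[B]$.

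\textbf{Setting up (P1).} The key observation is that for any vertex set $U \subseteq V$ that forms a connected subgraph of $Z$, any two distinct $u,v \in U$ satisfy $d_H(u,v) \geq 2c_2+1$, since every edge of $Z$ already requires $H$-distance at least $2c_2+1$. Thus the $c_2$-radius neighborhoods in $H$ of vertices in $U$ are pairwise disjoint. By the locality hypothesis, the indicator $1(v \in B)$ depends only on randomness within $c_2$ hops of $v$, so the events $\{v \in B\}_{v \in U}$ are mutually independent. This gives
\[
\Pr\bigl(U \subseteq B\bigr) \;\leq\; \Delta_H^{-c_1 |U|}.
\]

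\textbf{Union bound for (P1).} The maximum degree of $Z$ is at most $\Delta_Z \leq \Delta_H^{4c_2+2}$, since every $Z$-neighbor of $v$ lies within the ball $B_H(v,4c_2+2)$. By a standard enumeration (e.g.\ counting labeled BFS trees), the number of connected subsets of $Z$ of size exactly $k$ that contain a fixed vertex is at most $(4\Delta_Z)^k$, so the total number of connected $k$-subsets of $Z$ is at most $n \cdot (4\Delta_H^{4c_2+2})^k$. Taking $k = \lceil \log_{\Delta_H} n \rceil$ and applying a union bound,
\[
\Pr\bigl(\exists\, U: |U|=k,\ U\text{ connected in } Z[B]\bigr) \;\leq\; n \cdot (4\Delta_H^{4c_2+2})^k \cdot \Delta_H^{-c_1 k} \;\leq\; n^{\,1+(4c_2+2)-c_1 + o(1)},
\]
which is at most $n^{-c_3}/2$ by the hypothesis $c_1 > c_3 + 4c_2 + 2$ (absorbing the $4^k$ factor into the $o(1)$ when $\Delta_H$ is at least a small constant; otherwise the claim is trivially handled by tweaking constants). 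This yields (P1).

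\textbf{From (P1) to (P2).} Suppose toward contradiction that $H[B]$ contains a connected component $C$ with $|C| > k \cdot \Delta_H^{2c_2+1}$, where $k = \lceil \log_{\Delta_H} n \rceil$. I will greedily extract an ordered sequence $v_1,v_2,\dots,v_k \in C$ that forms a connected subgraph of $Z[B]$. Pick $v_1$ arbitrarily in $C$. Having picked $v_1,\dots,v_i$ with $i<k$, let $N_i := \bigcup_{j\leq i} B_H(v_j, 2c_2)$. Then $|N_i| \leq i\cdot \Delta_H^{2c_2+1} < |C|$, so there exists a vertex of $C$ outside $N_i$; since $C$ is connected in $H[B]$, some shortest path in $C$ from a previously chosen $v_j$ to such a vertex must leave $N_i$. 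Let $v_{i+1}$ be the first vertex of this path that lies outside $N_i$; then $v_{i+1} \in C \subseteq B$, and $d_H(v_{i+1}, v_j) = 2c_2+1$ because $v_{i+1}$'s predecessor on the path is in $B_H(v_j, 2c_2)$. In particular $d_H(v_{i+1},v_j) \in [2c_2+1, 4c_2+2]$, so $v_{i+1}$ is adjacent in $Z$ to $v_j \in \{v_1,\dots,v_i\}$, keeping the chosen set connected in $Z[B]$. After $k$ steps we obtain a connected $k$-subset of $Z[B]$, contradicting (P1). The main thing to be careful about is the step from $v_j$ to $v_{i+1}$, which I expect to be the only nontrivial point in the derivation of (P2); everything else is bookkeeping.
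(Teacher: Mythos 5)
Your derivation of (P2) from (P1) is the same greedy ball-growing argument the paper sketches in its remark, and apart from a small notational slip (the predecessor of $v_{i+1}$ lies in $B_H(v_{j'},2c_2)$ for some $j'\le i$, not necessarily the starting $v_j$, so $v_{i+1}$ is $Z$-adjacent to that $v_{j'}$) and a slightly loose ball-size bound (which costs an extra factor $\Delta_H$ but can be tightened), it is correct.

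The proof of (P1), however, has a genuine gap. You assert that any $U$ forming a \emph{connected subgraph} of $Z$ has all pairwise $H$-distances $\geq 2c_2+1$, and you use this to conclude that the $c_2$-balls are pairwise disjoint and hence that $\{v\in B\}_{v\in U}$ are mutually independent. This is false: only $Z$-\emph{adjacent} pairs have $d_H\geq 2c_2+1$; two vertices in the same $Z$-component that are joined by a $Z$-path of length $\geq 2$ can be very close in $H$. Concretely, with $c_2=1$ on a long cycle $v_0,v_1,\dots$, the set $\{v_0,v_5,v_2\}$ is connected in $Z$ via the $Z$-path $v_0-v_5-v_2$ (the $Z$-edges use $H$-distances $5$ and $3$), yet $d_H(v_0,v_2)=2<2c_2+1$. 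So for such $U$ the independence argument, and with it the bound $\Pr(U\subseteq B)\le \Delta_H^{-c_1|U|}$, does not follow.

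The fix is to restrict the union bound to sets $U$ that are simultaneously (i) connected in $Z$ and (ii) pairwise at $H$-distance $\geq 2c_2+1$. For such sets independence does hold, and since they are in particular connected subsets of $Z$, your enumeration bound $n\cdot(4\Delta_Z)^k$ still over-counts them, so the calculation goes through unchanged. This weaker statement (call it (P1')) is exactly what your greedy extraction in the (P2) step produces, so it suffices for (P2). Note the paper itself does not prove (P1): it cites \cite{FGLLL17} for the lemma and only sketches the (P2)$\leftarrow$(P1) step, so you are reconstructing a cited proof rather than the paper's own argument.
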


\end{document}